\newif\ifnotes
\title{Binary Error-Correcting Codes with Minimal Noiseless Feedback}
\author{Meghal Gupta\thanks{Email: \texttt{meghal@berkeley.edu}. Supported by a UC Berkeley Chancellor's Fellowship.}\\UC Berkeley 
\and Venkatesan Guruswami\thanks{Email: \texttt{venkatg@berkeley.edu}. Research supported in part by NSF grant CCF-2210823 and a Simons Investigator Award. }\\UC Berkeley
\and Rachel Yun Zhang\thanks{Email: \texttt{rachelyz@mit.edu}. Supported by NSF Graduate Research Fellowship 2141064. Supported in part by DARPA under Agreement No. HR00112020023 and by an NSF grant CNS-2154149.}\\MIT}
\date{\today}
\definecolor{denim}{rgb}{0.08, 0.38, 0.74}
\definecolor{classicrose}{rgb}{0.98, 0.8, 0.91}
\definecolor{darkpastelblue}{rgb}{0.47, 0.62, 0.8}
\definecolor{dogwoodrose}{rgb}{0.84, 0.09, 0.41}
\newtheorem{theorem}{Theorem}[section]
\newtheorem{lemma}[theorem]{Lemma}
\newtheorem*{lemma*}{Lemma}
\newtheorem{claim}[theorem]{Claim}
\newtheorem{corollary}[theorem]{Corollary}
\newtheorem{definition}[theorem]{Definition}
\theoremstyle{definition}
\Crefname{theorem}{Theorem}{Theorems}
\Crefname{claim}{Claim}{Claims}
\Crefname{lemma}{Lemma}{Lemmas}
\Crefname{proposition}{Proposition}{Propositions}
\Crefname{corollary}{Corollary}{Corollaries}
\Crefname{definition}{Definition}{Definitions}
\newcommand{\ECC}{\mathsf{ECC}}
\newcommand{\ecc}{\mathsf{ecc}}
\newcommand{\code}{\mathsf{FC}}
\newcommand{\update}{\mathsf{update}}
\newcommand{\partition}{\mathsf{ptdesc}}
\newcommand{\partdec}{\mathsf{pt}}
\newcommand{\mini}{\mathtt{mini}}
\newcommand{\main}{\mathtt{main}}
\newcommand{\pair}{\mathtt{pair}}
\renewcommand{\le}{\leqslant}
\renewcommand{\leq}{\leqslant}
\renewcommand{\ge}{\geqslant}
\renewcommand{\geq}{\geqslant}
\newcommand{\nxt}{\mathsf{next}}
\newcommand{\maj}{\text{maj}}
\newcommand{\codeC}{\mathsf{C}}
\newcommand{\poly}{\text{poly}}
\newcommand{\pos}{\mathsf{pos}}
\newcommand{\posx}{\mathsf{posx}}
\newcommand{\bbN}{\mathbb{N}}
\newcommand{\bbR}{\mathbb{R}}
\newcommand{\bbZ}{\mathbb{Z}}
\newcommand{\cG}{\mathcal{G}}
\newcommand{\cL}{\mathcal{L}}
\newcommand{\cP}{\mathcal{P}}
\newcommand{\customlabel}[2]{%
   \protected@write \@auxout {}{\string \newlabel {#1}{{#2}{\thepage}{#2}{#1}{}} }%
   \hypertarget{#1}{#2}
}
\newcommand{\protocol}[3]{
    \stepcounter{figure}
    \vspace{0.15cm}
    { \small
    \begin{tcolorbox}[breakable, enhanced, colback=classicrose!20]
    \begin{center}
    {\bf \underline{Protocol~\customlabel{prot:#2}{\thefigure}: #1}}
    \end{center}
    
    #3
    \end{tcolorbox}
    }
}
\newcounter{datacounter}
\newcommand{\datastruct}[3]{
    \stepcounter{datacounter}
    \vspace{0.15cm}
    { \small
    \begin{tcolorbox}[breakable, enhanced, colback=darkpastelblue!20]
    \begin{center}
    {\bf \underline{Data Structure~\customlabel{def:#2}{\thedatacounter}: #1}}
    \end{center}
    
    #3
    \end{tcolorbox}
    }
}
\newcounter{coingame}
\newcommand\numberthis{\addtocounter{equation}{1}\tag{\theequation}}
\newcounter{casenum}
\newenvironment{caseof}{\setcounter{casenum}{0}}{\vskip.5\baselineskip}
\newcommand{\case}[2]{
    \refstepcounter{casenum}
    \ifthenelse{\equal{\value{casenum}}{0}}{
    \vskip.5\baselineskip\par\noindent
    }{}
    \noindent {\it Case \arabic{casenum}:} {\it #1}
    \vskip0.1\baselineskip
    \begin{addmargin}[1.5em]{1em}
    #2
    \end{addmargin}
}
\newcounter{subcasenum}
\newenvironment{subcaseof}{\setcounter{subcasenum}{0}}{\vskip.5\baselineskip}
\newcommand{\subcase}[2]{
    \refstepcounter{subcasenum}
    \vskip.5\baselineskip\par\noindent 
    {\it Subcase \arabic{casenum}.\arabic{subcasenum}:} {\it #1} \vskip0.1\baselineskip
    \begin{addmargin}[1.5em]{1em}
    #2
    \end{addmargin}
}
\newcounter{casenumb}
\newenvironment{caseofb}{\setcounter{casenumb}{0}}{\vskip.5\baselineskip}
\newcommand{\caseb}[2]{
    \refstepcounter{casenumb}
    \vskip.5\baselineskip\par\noindent 
    {\bf Case \arabic{casenumb}:} {\bf #1} \vskip0.1\baselineskip
    \begin{addmargin}[1.5em]{1em}
    #2
    \end{addmargin}
}
\newcounter{subcasenumb}
\begin{document}

\sloppy
\maketitle
\begin{abstract}

In the setting of error-correcting codes with feedback, Alice wishes to communicate a $k$-bit message $x$ to Bob by sending a sequence of bits over a channel while noiselessly receiving feedback from Bob. It has been long known (Berlekamp, 1964) 
that in this model, Bob can still correctly determine $x$ even if $\approx \frac13$ of Alice's bits are flipped adversarially. 
This improves upon the classical setting without feedback, where recovery is not possible for error fractions exceeding $\frac14$. 

\smallskip
The original feedback setting assumes that after transmitting each bit, Alice knows (via feedback) what bit Bob received.
In this work, our focus in on the \emph{limited feedback} model, where Bob is only allowed to send a few bits at a small number of pre-designated points in the protocol. For any desired  $\epsilon > 0$,
we construct a coding scheme that tolerates a fraction $ 1/3-\epsilon$ of bit flips relying only on $O_\epsilon(\log k)$ bits of feedback from Bob sent in a fixed $O_\epsilon(1)$ number of rounds. We complement this with a matching lower bound showing that $\Omega(\log k)$ bits of feedback are necessary to recover from an error fraction exceeding $1/4$ (the threshold without any feedback), and for schemes resilient to a fraction $1/3-\epsilon$ of bit flips, the number of rounds must grow as $\epsilon \to 0$.

\smallskip
We also study (and resolve) the question for the simpler model of erasures. We show that $O_\epsilon(\log k)$ bits of feedback spread over $O_\epsilon(1)$ rounds suffice to tolerate a fraction $(1-\epsilon)$ of erasures. Likewise, our $\Omega(\log k)$ lower bound applies for erasure fractions exceeding $1/2$, and an increasing number of rounds are required as the erasure fraction approaches $1$. 



\end{abstract}
\thispagestyle{empty}
\newpage

\enlargethispage{1cm}
\tableofcontents
\pagenumbering{roman}
\newpage
\pagenumbering{arabic}

\parskip=0.5ex
\section{Introduction}
Consider the problem of error-correcting codes with feedback~\cite{Berlekamp64}: Alice has a message $x\in \{0,1\}^k$ that she wants to communicate to Bob. She does this by sending a sequence of $n$ bits to Bob. After each bit she sends, Bob \emph{noiselessly} sends her some information about what he has received thus far. This \emph{feedback} can be used by Alice to adaptively choose the future bits she sends, so that she's always sending the most relevant bit of information at any point.
The goal is to do this in a way such that even if some fraction $\alpha$ of Alice's bits are corrupted, Bob can still correctly determine $x$. We remark that Bob's feedback bits are not corruptible and do not count towards the adversary's corruption budget.

Consider first the case of errors (the adversary may flip bits). What is the maximum fraction $\alpha$ of errors that an error-correcting code with feedback can be resilient to? In the standard setting without feedback, the answer is $\frac14$.\footnote{This result is usually stated for error-correcting codes where Eve can choose what bits to corrupt after receiving the entire message. However, in our setting, we require that Eve choose to corrupt each bit immediately after Alice sends it. Because we are in the deterministic setting, these models are the same.} Berlekamp~\cite{Berlekamp64} showed in his 1964 thesis ``Block Coding with Noiseless Feedback'' that in the feedback model, the answer increases to $\frac13$. In the case of erasures, standard error-correcting codes (without feedback) are resilient to $\frac12$ fraction of erasures, while it is easy to see (and explicitly noted e.g. in~\cite{GuptaKZ22}) that in the feedback setting one can tolerate an erasure fraction approaching $1$. 

In most prior work studying feedback, Bob is allowed to send as much feedback as he wants, and as often as he wants (in particular, after every bit sent by Alice).\footnote{In fact, feedback was previously modeled as Bob simply sending Alice the bit he received.} This results in $\Omega(k)$ bits of feedback sent over $\Omega(k)$ rounds. However, in practice, guaranteed error-free communication (i.e. noiseless feedback) is costly, as is interaction. In this paper, we ask the question: \emph{How much feedback does Bob really need to send to attain better error resilience than standard error-correcting codes?} More specifically, we ask the following two questions:\footnote{We mention that our protocols are all nonadaptive -- that is, Bob decides how much and when to send feedback before the start of the protocol. This prevents the parties from communicating information for free through silence.}
\begin{enumerate}
\itemsep=0ex
\item 
    How many \emph{bits} of feedback must Bob send?
\item 
    How many \emph{times throughout the protocol} must Bob send feedback?
\end{enumerate} 

\noindent\textbf{How many \emph{bits} of feedback?}
This first question was studied by~\cite{HaeuplerKV15} in the case of errors. In their work, they considered limiting Bob's feedback to $\delta$ fraction of Alice's bits. They showed that for randomized protocols, Bob sending any $\delta>0$ fraction \emph{as many bits as Alice} is sufficient to achieve the threshold $\frac13$ error resilience. For deterministic protocols, they showed that if $\delta > \frac23$ is a sufficiently large fraction, $\frac13$ is achievable, and as $\delta \rightarrow 0$, the best error resilience they achieve is $\frac14$.

In this work, we consider an even more limited form of feedback: we allow Bob to send only $d$ \emph{total} bits rather than $\delta$ fraction of Alice's bits, where $d$ does not scale proportionally to the number of bits Alice sends (which necessarily must be more than $k$ bits) or even to $k$.
In particular, we ask if we can achieve error resilience $\frac13$ with $o(k)$ bits of feedback.


Perhaps surprisingly, we show the answer is yes! Specifically, we construct a protocol that uses only $O(\log k)$ bits of feedback to achieve $\frac13$ error resilience. We also construct a protocol achieving $1$ erasure resilience with $O(\log k)$ bits of feedback. We complement our results with a lower bound showing that $\Omega(\log k)$ bits of feedback are necessary to even have $> \frac14$ error resilience or $> \frac12$ erasure resilience, thereby proving $\Theta(\log k)$ bits of feedback is optimal.

\medskip\noindent\textbf{How many \emph{rounds} of feedback?}
The second question we address concerns the number of \emph{rounds} of feedback necessary to achieve $\frac13$ error resilience. Previous works \cite{Berlekamp64,Muthukrishnan94,HaeuplerKV15,SpencerW92} all required $\Omega(k)$ rounds of feedback. Our result that only $O(\log{k})$ bits are necessary already implies $O(\log{k})$ rounds suffice. But can we limit this even further? After all, interaction is costly.

We show that a constant number of rounds of feedback suffices to obtain the threshold $\frac13$ error resilience, as well as to obtain $1$ erasure resilience. These constructions require only $O(\log{k})$ bits, thereby meeting optimality for the first and second question simultaneously.

\subsection{Our Results}

Our first result is a feedback ECC that is resilient to $\frac13$ errors (which is optimal) using only $O(\log k)$ bits of feedback over a constant number of rounds. This resolves an open problem posed by~\cite{HaeuplerKV15} asking whether deterministic feedback ECC's where the number of bits that Bob gives as feedback is a vanishingly small fraction of Alice's bits can achieve $\frac13$ error resilience. We use $O_\epsilon(\cdot)$ to hide constant factors depending on $\epsilon$ (these are bounded by a polynomial in $1/\epsilon$ in all our uses).

\begin{theorem}
    For any $\epsilon > 0$, there exists a deterministic feedback ECC that is resilient to errors in $\frac13 - \epsilon$ of Alice's communication, where Alice sends $O_\epsilon(k)$ bits, Bob's feedback consists of $O_\epsilon(\log k)$ bits sent in $O_\epsilon(1)$ rounds, and both parties have $\tilde{O}_\epsilon(k)$ computational complexity. 
\end{theorem}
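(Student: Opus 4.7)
The plan is a two-stage protocol. In Stage~1, Alice sends $x$ encoded with an efficient binary code $C$ of constant rate $R = \Omega_\epsilon(1)$ and length $n_1 = \Theta_\epsilon(k)$ that is list-decodable at radius $\tfrac{1}{3} - \tfrac{\epsilon}{2}$ with list size $L = O_\epsilon(1)$. Such codes are obtained, for instance, by concatenating a Reed--Solomon outer code close to list-decoding capacity with a short binary list-decodable inner code, and admit $\widetilde{O}_\epsilon(n_1)$-time encoding/decoding. Provided the disambiguation stage contributes only an $O(\epsilon)$-fraction of the total length, even if the adversary spends their entire $(\tfrac13-\epsilon)$-budget on Stage~1 the fraction of errors there is at most $\tfrac{1}{3}-\tfrac{\epsilon}{3}$, so Bob's list decoder outputs a list $\mathcal{L}\ni x$ of size at most $L$.

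In Stage~2 we run $r=O_\epsilon(1)$ feedback-disambiguation rounds. At the start of round $i$ Bob holds a list $\mathcal{L}^{(i)}\ni x$ (starting with $\mathcal{L}^{(1)}=\mathcal{L}$). In $\mathrm{poly}_\epsilon(k)$ time Bob finds a seed $s_i$ of length $O(\log k)$ in a universal hash family with $O_\epsilon(1)$-bit output such that $h_{s_i}$ is injective on $\mathcal{L}^{(i)}$; such a seed exists by a union bound over the $\binom{|\mathcal{L}^{(i)}|}{2}$ pairs. Bob sends $s_i$ as feedback; Alice replies by transmitting $h_{s_i}(x)$ encoded with a binary code $D_i$ of length $\Theta_\epsilon(k/r)$ chosen so that $D_i$ is list-decodable at the relevant error radius with a list size strictly smaller than $|\mathcal{L}^{(i)}|$. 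Bob list-decodes $D_i$ to a set of candidate hashes and updates $\mathcal{L}^{(i+1)}:=\{y\in\mathcal{L}^{(i)}:h_{s_i}(y)\text{ is in the decoded set}\}$, which still contains $x$ and is strictly smaller. In the final round we use a code with at most four codewords and relative distance exceeding $\tfrac23$ (permitted by Plotkin), giving unique decoding at $\tfrac13-\tfrac{\epsilon}{2}$ and collapsing the list to $\{x\}$. Altogether Alice's communication is $n_1+\sum_i|D_i|=O_\epsilon(k)$, Bob's total feedback is $r\cdot O(\log k)=O_\epsilon(\log k)$ across $r = O_\epsilon(1)$ rounds, and all computations run in $\widetilde{O}_\epsilon(k)$.

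The delicate point is the error-budget accounting for an adaptive adversary who sees the feedback $s_i$ before round~$i$'s transmission and can concentrate their remaining budget there. One must verify that (i) each $D_i$ can be made long enough to absorb up to $(\tfrac13-\epsilon)(n_1+\sum_j|D_j|)$ errors, so that the adversary cannot force the list size produced by $D_i$ to equal or exceed $|\mathcal{L}^{(i)}|$; and (ii) Stage~1's list decoder survives even if the adversary spends their entire budget on Stage~1. Both conditions can be satisfied simultaneously by choosing the $|D_i|$ to be a small but constant fraction of $n_1/r$, keeping the total length at $O_\epsilon(k)$ while bounding the per-phase error fraction safely below the corresponding list-decoding radius. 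This balancing, together with the selection of binary list-decodable codes achieving the required (rate, radius, list size) tradeoffs at each step, is where the main technical work lies.
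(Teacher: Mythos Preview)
Your two-stage ``list-decode then disambiguate'' plan has a genuine gap: the balancing in (i)--(ii) cannot be achieved. For Bob's Stage~1 list to be \emph{guaranteed} to contain $x$, the adversary's entire budget $(\tfrac13-\epsilon)N$ must fit below $(\tfrac13-\tfrac\epsilon2)n_1$, which forces $n_2=\sum_i|D_i|=O(\epsilon)\cdot n_1$. But then the adversary can completely overwrite Stage~2 using only an $O(\epsilon)$-fraction of its budget. Concretely: by Plotkin some pair $x,y$ has $\Delta(C(x),C(y))\le(\tfrac12+o(1))n_1$, so the adversary can push the Stage-1 received word within the list-decoding radius of $C(y)$ at cost about $(\tfrac12-\tfrac13)n_1=\tfrac16 n_1$; it then replaces each $D_i(h_{s_i}(x))$ by $D_i(h_{s_i}(y))$ (it sees the seeds $s_i$) at total cost $\le n_2=O(\epsilon n_1)$, and Bob's list collapses to $\{y\}$. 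Total corruption is well under $(\tfrac13-\epsilon)N$. Your condition (i), read literally, would need each $|D_i|$ to be nearly all of $N$---the opposite of ``a small constant fraction of $n_1/r$''---and the claim that the updated list ``still contains $x$'' fails as soon as the adversary can corrupt a single $D_i$ beyond its list-decoding radius, which is unavoidable once $|D_i|\ll N$.

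This obstacle is exactly why the erasure-style pipeline does not transfer to bit flips: Bob can never be sure any one block was corrupted within its radius. The paper's construction is structurally different. It works in the coin-game framework, where Bob maintains lower bounds $\pos_\main(y)$ on the corruption required for each $y$, and adaptively alternates among three chunk-level strategies: (a) ordinary distance-$\tfrac12$ $\ECC$ chunks; (b) a coin \emph{minigame} played on a partition of $\{0,1\}^k$ into $O_\epsilon(1)$ classes, where the partition is specified by $O(\log k)$ bits of indices (this is where the logarithmic feedback comes from); and (c) a bit-repetition phase once only two coins remain. The technical core is the minigame: Alice uses a code in which one class has internal relative distance $\tfrac13$ while the three super-classes are mutually $\tfrac23$ apart, and a potential argument on \emph{pairs} of minigame coins shows the two smallest positions together grow by $\approx\tfrac23 M$ per round. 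Combined with the $\ECC$ chunks, this forces $\posx_\main(1)+\posx_\main(2)+\posx_\main(3)$ to grow at essentially one chunk length per chunk, so the second coin eventually exceeds $\tfrac13$ of the protocol length. None of this machinery---the coin game, the minigame code, or the pair-potential analysis---appears in your proposal, and it is what makes the $\tfrac13$ threshold attainable.
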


In the case of erasures, we obtain similar parameters.

\begin{theorem}
    For any $\epsilon > 0$, there exists a deterministic feedback ECC that is resilient to $1 - \epsilon$ of Alice's communication being erased, where Alice sends $O_\epsilon(k)$ bits, Bob's feedback consists of $O_\epsilon(\log k)$ bits sent in $O_\epsilon(1)$ rounds, and both parties have $\tilde{O}_\epsilon(k)$ computational complexity. 
\end{theorem}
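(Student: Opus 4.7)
The plan is to amplify the standard $\tfrac12$ binary erasure-resilience barrier up to $1-\epsilon$ by combining a list-decodable erasure code with short, feedback-driven disambiguation rounds. First, Alice encodes $x$ using a binary linear code $C \colon \{0,1\}^k \to \{0,1\}^{n_0}$ of rate $\Theta(\epsilon)$ that is list-decodable from $(1-\epsilon)$-fraction erasures with list size $L = O_\epsilon(1)$; such codes exist via a random-linear construction with list size depending only on $\epsilon$. Alice sends $C(x)$ in the bulk phase, of length $n_0 = O_\epsilon(k)$. Because $C$ is linear, Bob's list after decoding is a coset $x + K$ where $K \subseteq \{0,1\}^k$ is a subspace of dimension at most $\log L = O_\epsilon(1)$.

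To pin $x$ down within $x+K$, I would use $r = O_\epsilon(1)$ subsequent short rounds together with a precomputed hash family $\mathcal H$ of linear maps $\{0,1\}^k \to \{0,1\}^{O_\epsilon(\log k)}$. In each round, Alice sends a hash $H(x)$ encoded with a rate-$\tfrac12$ binary erasure code. Bob's $O_\epsilon(\log k)$ feedback bits serve as a pointer: essentially a fingerprint of Bob's received pattern identifying which ambiguity class his subspace $K$ belongs to (from a precomputed family of manageable size). Alice then selects her next hash from $\mathcal H$ so that $H|_K$ is injective, ensuring that once Bob recovers this hash, he uniquely identifies $x$ within $x+K$. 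A pigeonhole argument over the $r$ disambiguation rounds, with total erasure budget $(1-\epsilon) n$, shows at least one of them has erasure fraction below $\tfrac12$ and is therefore recovered by Bob; combined with the list from Step 1 and the subspace-injectivity of the chosen hash, Bob uniquely decodes $x$.

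The main obstacle is that the subspaces $K$ arising from list-decoding form a collection of size exponential in $n_0$, so no universal hash of length $O_\epsilon(\log k)$ can simultaneously separate all of them. This is the conceptual reason that feedback is essential and matches the $\Omega(\log k)$ lower bound cited in the introduction. The feedback provides a short pointer into $\mathcal H$, allowing Alice to tailor her disambiguator to Bob's specific ambiguity. Constructing $\mathcal H$ explicitly (e.g., using subspace-evasive sets or Reed--Solomon-based families indexed by $O_\epsilon(\log k)$ bits) and balancing the lengths of the main phase, the $r$ short rounds, and the feedback pointer (so that Alice sends $O_\epsilon(k)$ bits total and the parties run in $\tilde O_\epsilon(k)$ time) is the core technical work.
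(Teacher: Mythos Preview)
Your proposal has a genuine structural gap. You send the list-decodable code $C(x)$ once in a ``bulk phase'' of length $n_0 = \Theta_\epsilon(k)$, followed by $r = O_\epsilon(1)$ ``short'' disambiguation rounds of length $O_\epsilon(\log k)$ each. But then the disambiguation rounds total only $O_\epsilon(\log k)$ bits, which is a vanishing fraction of the protocol. The adversary can simply erase every bit of every disambiguation round and still have essentially its full $(1-\epsilon)n$ budget left for the bulk phase; Bob ends up with a list of $L$ candidates and no usable disambiguation information at all. Your pigeonhole argument (``at least one disambiguation round has erasure fraction below $\tfrac12$'') fails precisely because those rounds are short: nothing forces the adversary to spend budget proportionally across phases.

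The paper's protocol avoids this by making \emph{every} round the same length $n_0$. Alice sends $\ECC(x)$ in each round of Phase~1 (not just once), and the disambiguation messages in Phase~2 are also padded to length $n_0$. With $T = O_\epsilon(1)$ rounds total and erasure budget $(1-2\epsilon)Tn_0$, pigeonhole now genuinely guarantees at least $\log L_\epsilon + 1$ rounds with erasure fraction below $1-\epsilon$: one such round yields Bob's list, and the remaining $\log L_\epsilon$ good rounds suffice to push through the disambiguator. Second, the feedback mechanism is much simpler than your hash-family construction: once Bob has candidates $x_1,\dots,x_{L_\epsilon}$, he sends $L_\epsilon$ indices $i_1,\dots,i_{L_\epsilon} \in [k]$ on which the candidates all differ, together with each candidate's values on those indices (total $O_\epsilon(\log k)$ bits). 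Alice reads off which candidate she is, obtaining $\gamma \in [L_\epsilon]$, and sends $\gamma$ back one bit at a time via repetition, with Bob acknowledging receipt of each bit. This sidesteps entirely the problem you flag as the ``main obstacle'' (exponentially many subspaces $K$), which has a direct $O_\epsilon(\log k)$-bit solution once you realize Bob only needs to communicate a set of $O_\epsilon(1)$ coordinates, not the subspace itself.
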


Next, we provide a matching lower bound on the total number of feedback bits necessary.

\begin{theorem}
    For any $\delta > 0$, any feedback ECC that achieves error resilience $\frac14 + \delta$ requires $\Omega_\delta(\log k)$ bits of feedback. In addition, any feedback ECC that achieves erasure resilience $\frac12 + \delta$ requires $\Omega_\delta(\log k)$ bits of feedback.
\end{theorem}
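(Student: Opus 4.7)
The plan is a pigeonhole argument on feedback sequences combined with a Plotkin-style packing bound applied to an induced ``frozen'' non-adaptive code. I focus on the error case with $\alpha = \frac14 + \delta$; the erasure case follows the same template with $\alpha = \frac12 + \delta$.

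Given any feedback ECC with $f$ bits of feedback, I first classify each message $x \in \{0,1\}^k$ by the feedback sequence $F_x \in \{0,1\}^f$ that Bob produces on the noiseless channel. Since there are only $2^f$ possible feedback sequences, pigeonhole yields some $F^*$ produced by at least $2^{k-f}$ messages; call this set $\mathcal{X}^*$. For $x \in \mathcal{X}^*$, Alice's complete transmission $c_x(F^*) \in \{0,1\}^n$ is fully determined by $x$ alone, so the ``frozen'' codebook $\mathcal{C}^* = \{c_x(F^*) : x \in \mathcal{X}^*\}$ has size at least $2^{k-f}$, and every codeword lies in the set $T_{F^*} = \{r \in \{0,1\}^n : \text{Bob's feedback on } r \text{ is } F^*\}$.

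Next, for any two codewords $c_0, c_1 \in \mathcal{C}^*$ at Hamming distance at most $2\alpha n$, I would show that the adversary can confuse Bob by targeting a received vector $r^* \in T_{F^*}$ at distance at most $\alpha n$ from both $c_0$ and $c_1$: on input $x_i$, corrupt Alice's transmission $c_i$ (which stays $c_i$ since the feedback remains $F^*$) into $r^*$, so Bob's view is identical for $x_0$ and $x_1$. Applying a Plotkin-type packing bound to $\mathcal{C}^*$ then limits $|\mathcal{C}^*|$, which combined with $|\mathcal{C}^*| \ge 2^{k-f}$ yields the desired lower bound on $f$. For erasures, the analogous attack erases the positions where $c_0$ and $c_1$ differ, requiring only $d(c_0, c_1) \le \alpha n$ erasures.

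The main obstacle is the consistency constraint $r^* \in T_{F^*}$ (or its erasure analogue): the naive midpoint of $c_0, c_1$ may fail to lie in $T_{F^*}$, so Bob's feedback would drift away from $F^*$ and Alice's later transmissions would deviate from $c_0, c_1$. I would overcome this by selecting $F^*$ to be the \emph{most} popular feedback sequence, guaranteeing $|T_{F^*}| \ge 2^{n-f}$, and then using a density/counting argument to show that for $f$ small enough the set $T_{F^*}$ contains a vector close to the midpoint of almost every pair of codewords in $\mathcal{C}^*$. The quantitative balance between the density of $T_{F^*}$ (of order $2^{-f}$) and the size of $\mathcal{C}^*$ (of order $2^{k-f}$) is exactly what produces the $\Omega_\delta(\log k)$ lower bound claimed — a stronger $\Omega_\delta(k)$ bound would contradict the matching $O_\epsilon(\log k)$ upper bound, so calibrating this consistency loss is the delicate part of the argument.
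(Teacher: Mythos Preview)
Your pigeonhole-on-noiseless-feedback approach runs into a real obstacle at exactly the point you flag, and the fix you sketch does not close it. Pigeonholing over \emph{inputs} gives a large $\mathcal{X}^*$ sharing noiseless feedback $F^*$; pigeonholing over \emph{received strings} gives some $F^{**}$ with $|T_{F^{**}}| \ge 2^{n-f}$. Nothing forces $F^* = F^{**}$, so you cannot assume both properties for the same feedback string. Even granting $|T_{F^*}| \ge 2^{n-f}$, a density of $2^{-f}$ in $\{0,1\}^n$ gives no reason for $T_{F^*}$ to meet the set of strings within $\alpha n$ of both $c_0$ and $c_1$: Bob's feedback function is arbitrary, so $T_{F^*}$ can be an essentially arbitrary subset, and no generic counting argument will locate a consistent midpoint. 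Without the midpoint attack, the only attack that respects the feedback constraint is the one-sided ``corrupt everything to $c_0$'' (which does keep Bob's feedback at $F^*$, since $c_0 \in T_{F^*}$), but that only forces minimum distance $> \alpha n = (\tfrac14+\delta)n$ on $\mathcal{C}^*$, far too weak for Plotkin to say anything.

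The paper sidesteps the consistency problem by assigning a feedback string to each \emph{pair} of inputs rather than to single inputs. For a pair $(v_i,v_j)$ it runs the natural hybrid attack (corrupt toward $v_i$ until the budget against $v_j$ is half spent, then switch toward $v_j$); whatever feedback string $f$ Bob emits during this attack is the same whether Alice holds $v_i$ or $v_j$, and one checks that necessarily $\Delta(\code(v_i;f),\code(v_j;f)) \ge (\tfrac12+\delta)|\code|$ for that particular $f$. This turns the problem into an edge-coloring of $K_{2^k}$ by feedback strings, Plotkin forbids a monochromatic $K_{\lceil 10/\delta\rceil}$, and the multicolor Ramsey bound forces $2^{\zeta(\code)} = \Omega_\delta(k/\log k)$, hence $\zeta(\code) = \Omega_\delta(\log k)$. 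The conceptual shift from vertex pigeonhole to edge coloring plus Ramsey is exactly what makes the consistency issue disappear.
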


Finally, we show that in order to obtain error resilience arbitrarily close to $\frac13$, or erasure resilience arbitrarily close to $1$, the number of rounds of feedback necessary must grow arbitrarily large. Thus, it is necessary for the number of rounds to depend on $\epsilon$.

\begin{theorem}
    For any constant $r \in \bbN$, there exists $\delta(r) > 0$ such the error resilience of any $r$-round feedback ECC cannot exceed $\frac13 - \delta(r)$, and the erasure resilience of any $r$-round feedback ECC cannot exceed $1 - \delta(r)$.
\end{theorem}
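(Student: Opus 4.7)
Both statements are proved by constructing adversarial strategies that reduce an $r$-round protocol to one with fewer effective feedback rounds, culminating in a base-case attack on a $0$-round protocol.

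For the erasure case, write the protocol as chunks $c_1, \dots, c_{r+1}$ of sizes $n_1, \dots, n_{r+1}$ summing to $n$. The key observation is that if the adversary erases chunk $c_j$ entirely, then Bob's feedback $f_j$ becomes a deterministic function of the all-erased input, which Alice can predict in advance; thus Alice's subsequent chunks depend only on $x$ (given the now-fixed feedback), effectively eliminating one round of adaptation. By erasing chunks $c_1, \dots, c_j$ for various $j \in \{1, \dots, r\}$ and then attacking the residual $(r-j)$-round protocol with the remaining budget, we obtain recursions of the form $\alpha_r \leq \alpha_{r-j} + (1 - \alpha_{r-j}) \cdot (\sum_{i \leq j} n_i / n)$, where $\alpha_r$ denotes the supremum erasure resilience at a given rate. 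Starting from the information-theoretic base case $\alpha_0 \leq 1 - k/n$ and optimizing over Alice's chunk sizes against the adversary's choice of $j$ (a straightforward min-max) then yields $\alpha_r \leq 1 - \delta(r)$ for some positive $\delta(r)$.

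For the error case, a similar recursion applies, but with Plotkin's bound $\alpha_0 \leq 1/4$ as the base case. The adversary simulates erasure of a chunk by bit-flips that make the received version equidistant between two candidate Alice transcripts for messages $x, x'$, rendering the subsequent feedback ambiguous between them at a cost of roughly half the chunk's pairwise distance per candidate. Composing these reductions across all $r$ rounds and combining with the base case ultimately yields $\alpha_r \leq 1/3 - \delta(r)$ for some positive $\delta(r)$. The main obstacle in both cases is verifying that Alice cannot escape the recursive bound through carefully chosen unbalanced chunk sizes --- this requires showing that no matter how Alice splits her $n$ bits among the $r+1$ chunks, the adversary can select a $j$ whose chunk-erasure attack leaves a residual sub-protocol on which the inductive hypothesis forces a non-trivial gap.
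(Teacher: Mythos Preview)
Your high-level inductive plan (nullify the first feedback by corrupting the first chunk, then recurse) matches the paper's, but the proposal as written has genuine gaps in both parts.

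\textbf{Erasures.} First, the base case $\alpha_0 \le 1 - k/n$ is the Singleton bound and gives no gap independent of $n$; you need Plotkin's $\alpha_0 \le \tfrac12 + o(1)$. Second, and more seriously, the family of attacks ``erase chunks $c_1,\dots,c_j$, then recurse on the residual $(r-j)$-round protocol'' is by itself defeated by a protocol that front-loads almost everything into the first chunk. Concretely, for $r=1$ with $n_1 = n-1$, $n_2 = 1$, your only attack is $j=1$, which costs $n_1 + \alpha_0 \cdot n_2 \to n$ as $n\to\infty$, yielding the vacuous bound $\alpha_1 \le 1$. The paper closes this with a second attack that does \emph{not} erase the first chunk but instead exploits its length directly: when $n_1$ is large, Plotkin gives two inputs whose first-chunk encodings differ on at most $(\tfrac12+o(1))n_1$ bits; erase just those bits and then erase all remaining chunks. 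This costs roughly $\tfrac12 n_1 + (n-n_1)$, which is bounded away from $n$ precisely when $n_1$ is a positive fraction of $n$. The two attacks together (one for small first chunk, one for large) give the recursion that actually closes.

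\textbf{Errors.} Here the same ``large first chunk'' attack is missing, and in addition your proposed fix --- make the first chunk equidistant between \emph{two} candidates $x,x'$ --- cannot reach $\tfrac13$. Once only two candidates survive, the remaining rounds can separate them with resilience approaching $\tfrac12$ (Alice just repeats a distinguishing bit), so the best this two-candidate recursion can ever prove is $\alpha_r < \tfrac12$. The paper's crucial extra ingredient is to keep \emph{three} candidates alive: by the list-of-$2$ decoding radius bound for binary codes (Blinovski), when the first chunk is long there exist three inputs whose first-chunk encodings all lie within $\approx \tfrac14 n_1$ of a common center $m$. Corrupting the first chunk to $m$ costs $\approx \tfrac14 n_1$ and leaves Bob unable to eliminate any of the three; then Berlekamp's classical $\tfrac13$ impossibility applied to the remaining protocol (which needs three candidates, not two) finishes the job with cost $\approx \tfrac13(n-n_1)$ on the rest.
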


Our results pin down (up to constant factors) the amount of feedback needed for \emph{deterministic} protocols. The optimal amount of feedback required in the randomized setting might have a completely different answer that is an interesting direction for future work.

\subsection{Related Works}

We already mentioned the seminal work of~\cite{Berlekamp64} which introduced the concept of error-correcting codes with feedback. Berlekamp studied adversarial corruption strategies and proved that $\frac13$ was the optimal error resilience for a feedback ECC in the face of adversarial errors (see Appendix~\ref{sec:appendix-1/3}). We also mentioned the work of~\cite{HaeuplerKV15} which initiated the study of limited feedback and the resulting error resilience.

Feedback ECC's have also been studied for the binary symmetric channel, where every bit is flipped independently with some fixed probability~\cite{Berlekamp68,Zigangirov76}. 


Many other works on error-correcting codes with feedback present themselves instead under the lens of \emph{adaptive search with lies}. The two setups are equivalent, and our results imply results in the field of adaptive search with lies. We discuss the relationship below.

\subsubsection{Adaptive Search with Lies} \label{sec:related-adaptive-search-lies}

An equivalent way of looking at error-correcting codes with unlimited feedback is as follows: 
Bob's feedback after every message from Alice consists of a question about Alice's input, and Alice's next bit is a response to that question. Then, the adversary can corrupt Alice's responses, but only up to some fraction $\alpha$ of the time. Bob is ``searching'' for Alice's input in a range $[N]$ by asking these yes/no questions, and receiving an incorrect answer up to an $\alpha$ fraction of the time. This game is known as the Rényi-Ulam game, independently introduced by Rényi in \cite{Renyi61} and Ulam \cite{Ulam91}. 

Two excellent surveys on the research in this area are presented in \cite{Deppe07,Pelc02}. There are a number of works that aspire to improve upon \cite{Berlekamp64}. For example, \cite{Muthukrishnan94} presents a strategy with optimal rate (asking the least possible number of questions) while still achieving the best error-correction threshold. An alternative construction meeting the same optimal error resilience was given in~\cite{SpencerW92}. We mention that some works focus on a fixed (constant) number of lies rather than resilience to a fraction of errors, and ask for the smallest number of questions necessary. \cite{Pelc87} shows the optimal bound for one lie, the original Rényi-Ulam game. Followups find the optimal strategy for two and three lies as well \cite{Guzicki90,Deppe00}. 

In our context, we consider the setting where Eve can corrupt $\alpha$ fraction of Alice's bits, and limit the number of rounds in which Bob  can send feedback and the number of feedback bits. In the context of adaptive searching with lies where Eve can corrupt $\alpha$ of Alice's answers, this translates to the following two limitations:
\begin{enumerate}
\itemsep=0ex
    \item What is the minimum number of \emph{bits} Bob must use to specify his questions?
    \item What is the minimum number of \emph{batches} in which Bob can ask his questions?
\end{enumerate}
The question of batches was considered in \cite{NegroPR95,AhlswedeCDV09}. The work of \cite{Pelc02} poses the open question of how many batches of questions are necessary before the optimal strategy becomes essentially the same as the fully adaptive setting. Though ``essentially the same'' is not well-defined here, our result shows that Bob needs only a constant number of batches of questions to achieve the maximum possible error resilience, with only a constant factor hit to the total number of questions.

Many works consider the restricted version of the Rényi-Ulam game, where Bob is only allowed to ask a limited set of questions to query Alice's input in $[N]$. For example, \cite{MengLY13, Macula97} considers the case where Bob must ask if Alice's input is in a set of limited size. The works \cite{RivestMKWS80, NegroPR95} and others consider comparison questions (``is $x<a$?''). Note however that both these papers target a goal slightly different from ours, namely to minimize the number of questions for a fixed number of lies, not to achieve an error resilience of $\frac13$. Though more restrictive than allowing Bob to ask any question of his choice, the common theme is that these questions still require $O(k)$ bits for Bob to specify. Our result shows a strategy where Bob is limited to questions he can specify with only $O(\log k)$ bits.

\subsubsection{Interactive Error-Correcting Codes}

In the setting of error-correcting codes with feedback, one measures noise resilience as a fraction of just Alice's communication. That is, Bob's feedback is not corruptible and is not counted towards the adversary's corruption budget. 

In the model of \emph{interactive error-correcting codes (iECCs)}, Alice's goal is still to communicate a message $x$ to Bob, and Bob is allowed to send feedback, but the corruption budget is measured as a fraction of the \emph{total communication}, including both Alice's forward messages and Bob's feedback. The adversary can choose how to distribute their corruption budget among the forward and feedback communication: they can corrupt only Alice's bits, or only Bob's, or some combination of the two. Since the adversary can corrupt either party, iECCs are essentially codes with \emph{noisy} feedback rather than noiseless feedback. 

The question of whether iECCs could perform better in terms of noise resilience than simply sending a standard error-correcting code was posed by~\cite{HaeuplerKV15}. This question was answered in the affirmative by~\cite{GuptaKZ22} in the case of erasures: they demonstrated an iECC that achieved $\frac35$ erasure resilience (whereas a standard ECC can't correct a fraction $1/2$ of erasures). The followup work of~\cite{GuptaZ22b} constructed a related scheme for erasures with positive rate, but only resilient to a fraction $\frac6{11}$ of erasures.
In the case of errors, the work of~\cite{EfremenkoKSZ22} constructed an iECC beating the bound of $\frac14$ given by a standard error-correcting code.

We remark that for both erasures and errors, the optimal noise resilience of an iECC remains an open problem. The best known upper (impossibility) bounds are $\frac23$ in the case of erasures, and $\frac27$ in the case of errors~\cite{GuptaKZ22}.

Interactive error-correcting codes are also closely related to the broader topic of coding for interactive communication~\cite{Schulman92,Schulman93,Schulman96}. This field also often draws ideas from error-correcting codes with feedback. An excellent survey of the field is presented in \cite{Gelles-survey}.

\section{Technical Overview}
We now give an overview of our results.

Suppose Alice has an input $x \in \{ 0, 1 \}^k$. She communicates $x$ to Bob over a series of rounds, after each of which she receives some (uncorrupted) feedback from Bob that she can use in choosing her future messages to Bob. The goal is to do this in a way so that even if an adversary corrupts a large fraction of Alice's communication to Bob, he can still determine $x$ unambiguously. 

The information-theoretically largest fraction of Alice's communication that may be corrupted under which Bob can still determine $x$ correctly depends on whether the type of corruption is errors (bit flips) or erasures. In the case of errors, the threshold is $\frac13$, and in the case of erasures, the threshold is $1$. We begin with the simpler construction for erasures in Section~\ref{sec:overview-erasure} and then describe the more challenging construction for errors in Section~\ref{sec:overview-error}. Finally, we sketch the lower bounds on the number of bits and number of rounds in Section~\ref{sec:overview-lb}.
\subsection{Feedback ECC for Erasures} \label{sec:overview-erasure}

We describe our feedback ECC resilient to $1-\epsilon$ erasures.

Let $\ECC$ be an error-correcting code encoding messages of length $k$ with list-decoding radius $1-\epsilon$. The protocol consist of $O_\epsilon(1)$ rounds of length $|\ECC|$. 

Alice begins by sending $\ECC(x)$ to Bob every round until Bob receives a message with at most $1 - \epsilon$ erasures. By the list-decoding guarantee, Bob can narrow $x$ down to $L_\epsilon = O_\epsilon(1)$ options. At this point, Bob can simply send Alice the list of $L_\epsilon$ options via noiseless feedback, so that Alice simply has to send a number in $[L_\epsilon]$ to tell Bob which of the $L_\epsilon$ options is her input over the remaining rounds. However, sending over the full list of $L_\epsilon$ options takes $L_\epsilon \cdot k$ bits, which is too much. Instead, Bob sends over a list of $L_\epsilon$ indices on which his $L_\epsilon$ options differ, as well as the evaluation of each possible message on the indices in sequence. This allows Alice to uniquely determine which of the $L_\epsilon$ options is hers, so she knows which number from $[L_\epsilon]$ to communicate back to Bob. 

The second step is for Alice to communicate this number from $[L_\epsilon]$ to Bob using the remaining rounds of the protocol. In particular, Alice's task has been reduced from communicating a number in $[2^k]$ to a constant-sized number. This can be done via a $(1-\epsilon)$-erasure resilient feedback ECC that takes $O_\epsilon(L_\epsilon)$ rounds, noted e.g. in~\cite{GuptaKZ22}. In total, this gives a constant round feedback ECC resilient to a fraction $1-\epsilon$ of erasures.

This strategy crucially relies on the fact that when Bob receives a message with at most $1-\epsilon$ erasures that he decodes to $L_\epsilon$ options, he knows \emph{for sure} that Alice's message is one of the options. In the case of errors, Bob does not have this guarantee. Even if he decodes a message to $L_\epsilon$ options, it is possible all the options are incorrect, if the adversary happened to corrupt more than $\frac12-\epsilon$ fraction of the bits in that round. For this reason, a similar protocol does not work in the case of errors.

\subsection{Feedback ECC for Errors} \label{sec:overview-error}

We begin by describing past work that is useful context for our protocol. Most past works on feedback ECC's or adaptive search with lies (see Section~\ref{sec:related-adaptive-search-lies}) in the case of errors rely on a certain reformulation of the ECC with feedback problem, which Spencer and Winkler~\cite{SpencerW92} call the \emph{coin game}. We mention that there is another approach in the literature for achieving $\frac13$ error resilience~\cite{HaeuplerKV15}, known as the \emph{rewind-if-error} paradigm (see Appendix~\ref{sec:appendix-rewind-if}). However, it seems that the rewind-if-error paradigm is not conducive to bit minimization. 


\medskip\noindent\textbf{The coin game.}
In the coin game setup, there are $2^k$ coins corresponding to the $2^k$ possible values of Alice's input on a board. At the beginning of the game, all coins are at position $0$. Throughout the game, Alice sends Bob a message, after which Bob updates the positions of all coins on the board with the amount of corruption Eve would have needed to use in that message if Alice actually had that coin as her input. If the protocol has $|\pi|$ bits, we say that a coin falls off the board if its position exceeds $|\pi|/3$ (since then it is no longer a valid possibility for Alice's input if the adversary is constrained to corrupt $\lesssim$\footnote{We use $\lesssim$ and $\gtrsim$ to mean that the inequality holds within a factor of $(1-O(\epsilon))$} $|\pi|/3$.). The goal is to design Alice's messages in response to Bob's feedback in such a way that only one coin can remain on the board after $|\pi|$ rounds, regardless of what Eve corrupts Alice's messages to.

\medskip\noindent\textbf{The Spencer-Winkler feedback ECC~\cite{SpencerW92}.}
To put our construction into context, we first discuss the construction of Spencer and Winkler. At a high level, at every step, Bob tells Alice an ordering of his coins from smallest to largest position value (breaking ties arbitrarily). In response, Alice sends a single bit: $1$ if her input is an odd numbered coin according to this ranking, and $0$ if her input has an even ranking. Then, whenever Bob receives a $1$, he moves the position of all the even coins up by $1$, and if he receives a $0$, he moves the position of all the odd coins up by $1$. Note that the set of coins with an even/odd ranking change after each update.

At a high level, Spencer-Winkler's analysis is as follows. Let $\posx(i)$ denote the position of the $i$'th coin in this ranking. It turns out that regardless of whether Eve chooses to move the even or odd coins each up by $1$, the quantity $\posx(1)+\posx(2)+\posx(3)$ increases by $\gtrsim 1$ at each step. This shows directly that all but two of the coins must fall off the board (position exceeds $|\pi|/3$) by the end of the protocol. To show that only one coin can remain, we note that as soon as the third to last coin falls off the board, as long as the two remaining coins stay on the board, their combined position increases at $1$ each round. Then, if we consider the potential function $\posx(1) + \posx(2) + \min \{ \posx(3), |\pi|/3 \}$, we see that this potential must increase by $\gtrsim 1$ each round, so that it is $\gtrsim |\pi|$ by the end of the protocol. This means that by the end of the protocol, $\posx(2) \gtrsim |\pi|/3$, ultimately ensuring that the second coin falls off the board as well.

We make a few remarks about the Spencer-Winkler protocol.

\begin{itemize}
\itemsep=0ex
\item 
    In every round, Bob increments the positions of exactly half the coins. In particular, half the coins previously at position $0$ will remain at position $0$ after a round. This means that in order to distinguish between $2^k$ coins, at least $k$ rounds are necessary. 
    

\item 
    It is important that the first two coins are separated by distance $1$ in Alice's message. This plays an important role after the third to last coin has fallen off the board in ensuring that the second to last coin also falls off. However, before the third to last coin falls off, the important metric is $\posx(1) + \posx(2) + \posx(3)$ increasing by $1$. 
    In our feedback ECC, we use a different strategy to achieve this, that does not involve separating the first two coins by distance $1$. 
    
    

\end{itemize}






\subsubsection{Our feedback ECC}
At a high level, Alice and Bob alternate among three strategies throughout the protocol. In each section of the protocol, which we call a chunk (of which there are $C \approx \frac1\epsilon$ many), Bob tells Alice which strategy the duo will be using during that chunk based on the current positions of his coins. The chunks are further subdivided into a constant number $R$ of rounds, which Alice and Bob use to execute the strategy. In each round, Bob sends Alice $O(\log{k})$ bits of feedback, and then Alice sends a message of length $M = O(k)$. The three strategies are as follows:

\begin{enumerate}
\itemsep=0ex
    \item\label{strat:ecc} {\bf Alice sends a traditional $\ECC(x)$ for the entire chunk:} 
        In this strategy, Alice ignores Bob's feedback between rounds and sends $\ECC(x)$ of length $M$, where $\ECC$ is an error-correcting code with relative distance $\frac12$. This strategy has the property that the positions of at most a small (constant) number of coins may increment by $\lesssim \frac13 M$ each round, and by list-decoding guarantees, most increment by $\gtrsim \frac12 M$ (recall that our goal is to show that all except one coin reaches position $|\pi|/3$ by the end of the protocol, so this means only a few coins are not on track). 
    \item\label{strat:mini} {\bf Alice and Bob play a special coin minigame:} 
        In this strategy, Alice and Bob play a coin game somewhat similar to Spencer and Winkler's even/odd strategy for a constant number of rounds (the duration of the chunk). This minigame also increments most coins by $\gtrsim \frac13 M$, but guarantees to separate the smaller coins by a larger amount than the traditional $\ECC$ strategy.
        
        The exact design of this game is more complicated than the even/odd strategy, and the fact that it compensates exactly for the shortcomings of the $\ECC(x)$ strategy is perhaps the most interesting part of our protocol.
    \item\label{strat:bit} {\bf Alice sends a single bit separating Bob's first and second coin.} 
        Once all but two coins fall off the board, for the rest of the protocol, Alice sends a single bit to achieve relative distance $1$ between the two remaining coins. This part of the protocol is very similar to Spencer and Winkler's original construction.
\end{enumerate}

Alice and Bob spend most of the protocol alternating between strategies \ref{strat:ecc} and \ref{strat:mini}, before eventually using strategy \ref{strat:bit} for the rest of the protocol. Bob chooses between the first two strategies as follows: if $\posx(3)\approx \posx(L)$, where $L$ is some large constant that we choose later,\footnote{$L$ technically scales with the round (chunk) number, but because there are a constant number of rounds, it is not an important subtlety. $L$ intends to approximate an ``average but relevant'' coin.} he chooses strategy \ref{strat:ecc}, and otherwise he chooses strategy \ref{strat:mini}. 

In the first two strategies, Alice and Bob's goal is to have $\posx(1)+\posx(2)+\posx(3)$ increase at a rate of $\gtrsim RM$ per chunk (recall that each chunk has $R$ rounds of length $M$). If they can accomplish this, then we can use the argument of Spencer and Winkler described above: eventually the third coin falls off the board, and at that point, since strategy~\ref{strat:bit} increases $\posx(1)+\posx(2)$ at a rate of $\gtrsim RM$ per chunk, the second coin will fall off the board as well.

We discuss briefly how the third strategy is implemented. For feedback, Bob sends to Alice an index $i \in [k]$ for which the first two coins differ. Then, Alice simply sends $x[i]$ for the entire duration of the chunk. As such, the distance between what Alice would've sent if her input were each of the two coins is $RM$ (per chunk), and so regardless of what Eve corrupts Alice's messages to, $\posx(1)+\posx(2)$ increases at a rate of $RM$ per chunk. Then, by the same argument used in Spencer and Winkler's protocol, the second coin falls off the board as well.


For the rest of the section, the goal is to ensure that the first two strategies increase $\posx(1)+\posx(2)+\posx(3)$ at a rate of $\gtrsim RM$ per chunk.

\subsubsection{Strategy~\ref{strat:ecc}: The traditional distance $\frac12$ ECC} 

Alice spends the entire chunk sending $\ECC(x)$ to Bob, which is a message of length $RM$. Bob's feedback between rounds is just ignored (the round structure is not necessary for this strategy, but is included in the overall protocol structure for uniformity because it is needed in the strategy~\ref{strat:mini}). At the end of the chunk Bob updates the coin states, moving each coin $y$ by $\Delta(m,\ECC(y))$ where $m$ is the corrupted length $RM$ message he received throughout the chunk.\footnote{As described, this update takes exponential-in-$k$ time, but there is a data structure for this that we describe in Section~\ref{sec:coin-game-def}.}

By the list decoding properties of traditional ECC's, one can guarantee that all but a constant number of coins increase by almost $\frac12RM$ in position. However, there may be a small number of coins that do not increase by at least $\frac12RM$ in value. For example, the adversary may choose a corruption pattern so three of the coins' positions increase by only $\frac14RM$ in every chunk.  This causes $\posx(1)+\posx(2)+\posx(3)$ to increase only at a rate of $\gtrsim \frac34 RM$ per chunk -- which is not enough. 

What is nice though is that $\posx(1)+\posx(2)+\posx(L)$ increases at a rate of $\gtrsim RM$ per chunk. As such, if $\posx(1)+\posx(2)+\posx(3)$ isn't increasing fast enough, then $\posx(3)$ and $\posx(L)$ are separating out and we'll quickly exit this strategy.

\subsubsection{Strategy~\ref{strat:mini}: The coin minigame}

Our main goal here is to separate out Bob's first $L$ coins further, while the third coin and $L$'th coin are separated. To do this, we will play a new coin game essentially using just the first $L$ coins.\footnote{Technically, the new game will be played on a \emph{partition} of $\{ 0, 1 \}^k$ separating out the first $L$ coins, as we will describe shortly.} The point is that since the third and the $L$'th coins are separated whenever this strategy is in play, the end behavior of the first three coins in the original game is determined entirely by the new coin game on the $L$ coins, i.e. no later coins can become one of the first few and interfere with them.

Our strategy requires Bob to single out the first $L$ coins. However, this requires $O(k)$ bits of communication, and we only have $O(\log{k})$ bits -- so Bob can't directly tell Alice what these $L$ elements are. Instead, he tells her a partition $\cP$ of the $2^k$ coins into $L$ sets $P_1, \dots, P_{L}$ so that coins $1\ldots L$ (as determined by smallest positions) are in different sets. He does this with only $O(\log{k})$ bits by giving her a constant-sized list of indices on which coins $1\ldots L$ all differ, along with the evaluations of the first $L$ coins on these indices. 

During the chunk, Alice and Bob play a new instantiation of a coin game, called the \emph{coin minigame}, on the \emph{sets} of $\cP$ (so coins within the same set behave as a single unit). This coin minigame has its coins as the $L$ sets $P_1, \dots, P_{L}$, all starting at position $0$. Throughout the chunk, Bob moves the position of each set in the coin minigame according to Alice's message. The position of a set in the coin minigame is the amount of corruption caused to coins in that set. Bob will update the original coin game with the amount of corruption seen in his received messages if Alice actually had a given coin, as before. The hope is to have $\posx(1) + \posx(2) + \posx(3)$ in the original coin game increase at rate $\gtrsim RM$.

\medskip\noindent\textbf{Attempt using Spencer and Winkler's even/odd strategy.} We describe an attempted version of this coin minigame using the even/odd strategy. Before each round, Bob orders the sets $P_1, \dots, P_{L}$ according to their positions in the coin minigame as $S(1)\ldots S(L)$, where $S(i)$ is the set in the $i$'th smallest position of the minigame, and tells Alice this ordering. In each round, Alice tells Bob whether the set containing her coin has an even or odd ranking according to this ordering, and Bob moves his coins accordingly. Then, at the end of the chunk, he adds the position of each set to each coin in that set. 

Then, by the end of the minigame chunk, it would hold that the three smallest increments sum to $\gtrsim RM$. This is progress -- in the main game, this means the quantity $\posx(1)+\posx(2)+\posx(3)$ increases by $\gtrsim RM$ during the chunk. This crucially requires that the first three coins now must have all been in the first $L$ coins at the start of the chunk -- which holds because $\posx(L)$ was much larger than $\posx(3)$ at the start of the chunk. Therefore, the first three coins were in different sets in the coin minigame, so the total increase was at least $RM$. 

Unfortunately, this does not quite finish our protocol design, since strategy~\ref{strat:ecc} wasn't actually causing the correct rate of increase of $\posx(1)+\posx(2)+\posx(3)$. Concretely, during strategy~\ref{strat:ecc}, $\posx(1)+\posx(2)+\posx(3)$ increases at a rate of $\approx \frac34 RM$, and $\posx(1)+\posx(2)+\posx(L)$ increases at a rate of $\approx RM$ per chunk. In strategy~\ref{strat:mini}, $\posx(1)+\posx(2)+\posx(3)$ increases at a rate of $\approx RM$, but $\posx(1)+\posx(2)+\posx(L)$ might increase slower (as it turns out, it can increase as slowly as $\approx \frac12$). Over time, these strategies will average out, and $\posx(1)+\posx(2)+\posx(3)$ increases at some rate strictly between $\frac34 RM$ and $RM$, which is not fast enough. 

Viewing $\posx(L)$ as a proxy for $\posx(3)$ (recall that we only enter strategy~\ref{strat:ecc} when the third and $L$'th coin are close), what we need is for strategy~\ref{strat:mini} to increase $\posx(1)+\posx(2)+\posx(L)$ at a rate of $\approx RM$ per chunk as well. Then, one can argue that up until the last time Alice and Bob use strategy~\ref{strat:ecc}, $\posx(1) + \posx(2) + \posx(L)$ and thus $\posx(1) + \posx(2) + \posx(3)$ has increased at rate $\gtrsim RM$ each chunk, thus correcting the issue with strategy~\ref{strat:ecc} that $\posx(1) + \posx(2) + \posx(3)$ doesn't increase fast enough.

\medskip\noindent\textbf{A different coin minigame.} To change the strategy to one where $\posx(1) + \posx(2) + \posx(L)$ at $\gtrsim RM$ as well, we need a different coin minigame strategy than the even/odd one. Recall that Alice and Bob play this coin minigame on the $L$ sets of a partition $P$. In every round, Bob tells Alice his current ordering of the sets, namely $S(1)\ldots S(L)$, where $S(i)$ is the set in the $i$'th smallest position.

In our new coin minigame, instead of telling Bob even/odd, Alice sends Bob one of three things: $m_1$ if $x\in H_1 := S(1)$, $m_2$ if $x\in H_2 := S(2) \cup S(4) \cup S(6) \cup \ldots$, or $m_3$ if $x\in H_3 := S(3) \cup S(5) \cup S(7) \cup \ldots$. The three options $m_1,m_2,m_3$ can be encoded into distance $\gtrsim \frac23 M$ apart messages (for instance, $011, 101, 110$ repeated $M/3$ times). As it turns out, this new strategy also results in the smallest three sets increasing by a total of at least $RM$ per chunk, so because the $L$'th coin is far enough away from the third one at the beginning of the chunk, the main coin game looks like the minigame for the smallest coins, so $\posx(1)+\posx(2)+\posx(3)$ increases by $\gtrsim RM$. In fact, this new strategy has the stronger property that the smallest two sets increase by $\frac23 M$ per chunk. However, this only implies that $\posx(1)+\posx(2)+\posx(L)$ increases by $\gtrsim \frac23 RM$, which is not enough.

The main problem is still that $\posx(L)$ may be increasing at a rate of $0$, while we need it to increase at $\gtrsim \frac13 RM$. This is inevitable with the ideas we've discussed so far. Although we can guarantee that the behavior of the first three coins is correct since the $L...2^k$'th coins are far enough away at the beginning of the chunk, the issue is many of the $L$'th through $2^k$'th coins are in the same set of the partition $P$ and could have all increased by $0$. 

The solution is to separate out elements within the same set $H_i$ as well. Instead of sending $m_i$, where $i\in \{1,2,3\}$ corresponds to which set $H_i$ $x$ is in, Alice sends the encoding $\ECC(i,x)$, where elements with different values of $i$ are $\frac23 M$ apart, and elements with the same value of $i$ have some nonzero distance $\delta_i M$ as well. Bob increments the positions of the sets in the coin minigame by the minimum distance between the message he received and any coin in that set, but increments the positions of each coin in the main game individually according to how much corruption there must've been. Then, coins within the same set $H_i$ will not all increment uniformly: depending on the individual distance from the received message, coins in the main game will increment by different amounts.

If we can take $\delta_i = \frac13$ for $i = 1, 2, 3$, then we would be done. To see this, we can appeal once again to list-decoding, so that all but a constant number of coins increase by $\gtrsim \frac13 M$ in each round (and thus $\gtrsim \frac13 RM$ by the end of the chunk). Then, combining with the fact that $\posx(1) + \posx(2)$ has increased by $\gtrsim \frac23 RM$, it follows that $\posx(1) + \posx(2) + \posx(L)$ increases by $\gtrsim RM$ each chunk.\footnote{We remark again that $L$ is technically scaling with the chunk number, but we omit those details here. The important point is that $L$ represents a sufficiently large rank in the coin game that describes the behavior of most coins in the game.}  


However, it turns out that achieving distance $\frac13 M$ within each of the three sets is impossible. A key observation is it is only necessary to have nonzero distance for $i = 1$, because all coins in $H_2$ or $H_3$ must necessarily have sufficiently high positions already, so they don't need the extra distance. More precisely, we define the code $\ECC$ used in this round as follows. Let $\codeC(y)$, $y\in S(1)$ be a code with relative distance $\frac13$. Let $m_2$ be a message that is relative distance $\frac23$ from all elements of $\codeC$. Let $m_3$ be relative distance $\frac23$ from $\codeC$ and from $m_2$. Then
\[
\ECC(y) = \begin{cases}
    \codeC(y) & y\in S(1) \\
    m_2 & y\in S(2),S(4),S(6),\ldots \\
    m_3 & y\in S(3),S(5),S(7),\ldots \\
\end{cases}\]
Upon receiving a message $m$, Bob updates each coin $P_i$ in the minigame based on $\min_{y\in P_i}\Delta(m,\codeC(y))$, and each coin $y$ in the main game by $\Delta(m,\codeC(y))$.

Using this encoding, it holds that all but a constant number of coins increase by $\gtrsim \frac13 RM$ in every chunk. Then, we get that $\posx(1)+\posx(2)+\posx(L)$ increases by $\gtrsim RM$, and we are done!


\subsubsection{Analysis of Coin Minigame} \label{sec:overview-minigame}

To wrap up the discussion of our feedback ECC resilient to $\frac13$ errors, we briefly discuss the analysis of the coin minigame, in particular our earlier claim that the two smallest positions among the sets $P_1, \dots, P_{L}$ increase by $\gtrsim \frac23$ each round. This innocuous claim is surprisingly difficult to show, and a nontrivial part of our contribution. For a detailed analysis, we refer the reader to Section~\ref{sec:minigame-analysis}.


The analysis of the even/odd protocol in \cite{SpencerW92} goes as follows. They first prove that it is always in the adversary's best interest to corrupt Alice's message to $0$, indicating that $x$ is in the ``evens'' set, regardless of what she sends, until the third coin falls off the board. Then, they analyze the positions of the coins in the particular case that Bob always receives the bit $0$, and show that the quantity $\posx(1)+\posx(2)+\posx(3)$ increases at a rate of $\gtrsim 1$ as required. 

Unfortunately, in our protocol, we are not able to show a strictly dominant corruption pattern for the adversary, and so we must analyze the adversary's various corruption patterns simultaneously. We thus have to invent a new technique for analyzing a more continuous spectrum of attacks, independent of previous work.

To analyze the minigame, we introduce the following new framework, which involves considering \emph{pairs} of coins as a single object.\footnote{We remark that this framework of considering a few coins together as a single object is very powerful: in fact, the feedback ECC of~\cite{SpencerW92} can alternatively be analyzed via a similar analysis considering \emph{triples} of coins instead of pairs.} In particular, we introduce a new derivative coin game in which the new coins are pairs of coins in the old game, and the position of a new coin is the sum of positions of the two coins in the old game. The goal is then to show that after a large constant $R$ number of rounds, all the new coins are at positions $\gtrsim \frac23 RM$.

While not all coins in this new pair game are increasing at $\frac23 M$ each round, on average they are increasing at $\frac23 M$ each round. This in itself is not enough, as one could imagine the coin with the smallest position updating at $< \frac23 M$ each round while a larger one updates at $>\frac23 M$ per round. However, we are able to show that if coin $(y, z)$ updates at $< \frac23 M$ in a round, there is a coin $(y', z')$ with \emph{smaller initial position} that updates with $> \frac23$ compensating for it. In particular, for any coin that lags, there is a coin with a smaller initial position that is overcompensating and catching up. This limits the largest possible lag to be small compared to the total number of rounds. We formally argue this in Section~\ref{sec:minigame-analysis}.
\subsection{Lower Bounds}\label{sec:overview-lb}

Finally, we sketch our lower bounds on the number of bits of feedback and the number of rounds of feedback.

\subsubsection{Number of Bits of Feedback}

We show that no feedback ECC can achieve an error resilience of $\delta>\frac14$ or erasure resilience of $\delta>\frac12$ without Alice sending $\Omega(\log{k})$ bits of feedback. We outline the argument for errors, and the argument for erasures is similar.

Assume for the sake of contradiction some feedback ECC achieves $\delta>\frac14$ error resilience. For every pair of inputs $x,y$ that Alice could have, we employ the following attack. Eve corrupts the string to look like what Alice would send if she had $x$, until this the number of corruptions in the case Alice has $y$ reaches $\delta |\pi|$, and then she corrupts the remainder to look like she had $y$. By assumption, this attack cannot work, and so must require more than $\delta$ corruption in the case Alice has $x$. Let the feedback string Bob sends in this be $s(x,y)$. It must hold that the messages Alice sends in the case she has $x$ or $y$ given the feedback $s(x,y)$ are separated by at least $2\delta$.

Now, by appealing to upper bounds for the multicolor Ramsey problem, one can show that for large enough $k$, there is a size $\left\lfloor \frac{10}{\delta-1/4}\right\rfloor$ set $S$ of Alice's possible inputs such that all pairs of inputs $x,y\in S$ have the same value of $s=s(x,y)$. In particular, this means that on the feedback string $s$, what Alice sends on every $x\in S$ is separated by at least $2\delta$. This contradicts the Plotkin bound~\cite{Plotkin60}.

We refer the reader to Section~\ref{sec:lb-bits} for a more detailed argument.

\subsubsection{Number of Rounds of Feedback}

We show that for any feedback ECC with a fixed number of rounds $r$ of feedback, the error resilience cannot approach $\frac13$ for errors or $1$ for erasures. More precisely, there exist $\delta_{\text{error}}(r)<\frac13$ and $\delta_{\text{erasure}}(r)<1$ such that the resilience does not exceed the corresponding value of $\delta(r)$. We outline the argument for errors, and the argument for erasures is similar.

Our argument proceeds by induction. For $r = 0$, the protocol is simply a non-interactive ECC, so $\delta_{\text{error}}(0) = \frac14 < \frac13$. Now supposing that an $r-1$ round feedback ECC has error resilience separated from $\frac13$, the goal is to show that an $r$ round feedback ECC has the same property. We employ a different attack based on whether the first time Bob sends feedback is early in the protocol or late in the protocol. If the first round of feedback is early, Eve corrupts Alice's first round of messages entirely (since the first round of messages is short, this is not very much). Then, Bob's first round of feedback is useless, since Alice has not sent any information yet, and the remainder of the protocol is an $r-1$ round feedback protocol, and so the adversary needs to corrupt $\delta(r-1)$ fraction of the remainder of the protocol. If the first round of feedback is later in the protocol, the adversary corrupts $\frac14$ of Alice's first round, which lets Bob narrow down Alice's input to three options. Then there is an attack requiring $\frac13$ corruption of the rest. Since the first round was long, overall this is still noticeably less than $\frac13$ corruption.

For the precise calculations of corruption incurred by these attacks, we refer the reader to Section~\ref{sec:lb-rounds}.

\section{Preliminaries and Definitions}\label{sec:prelims}

\paragraph{Notation.}

In this paper, we use the following notation:
\begin{itemize}
\itemsep=0ex
    \item The function $\Delta(x, y)$ represents the Hamming distance between $x$ and $y$.
    \item $x[i]$ denotes the $i$'th bit of a string $x \in \{ 0, 1 \}^*$. 
    \item $x[i:j]$ denotes the substring of $x$ consisting of the $i$'th to $j$'th bits.
    \item $x||y$ denotes the string $x$ concatenated with the string $y$.
    \item $[N]$ denotes the integers $1, \dots, N$. The notation $[i:j]$ denotes the integers $i, \dots, j$. 
    \item The complexity $O_\epsilon(n)$ means $f(\epsilon)\cdot O(n)$ for some function $f:\bbR \to  \bbN$, and $\tilde{O}_\epsilon(n)$ means $f(\epsilon)\cdot n (\log n)^c$ for some function $f:\bbR \to  \bbN$ and absolute constant $c < \infty$.\footnote{In our paper, one can restrict $f$ to be polynomial in $1/\epsilon$.}
\end{itemize}

\subsection{Error-Correcting Codes with Feedback}

\begin{definition} [Feedback ECC]
    A \emph{feedback ECC} $\code$ is a family of protocols $\{\pi_k\}_{k\in \bbN}$ between Alice and Bob with the following properties.
    \begin{itemize}
        \item Alice begins with an input $x\in \{0,1\}^k$ that she intends to communicate to Bob.
        \item The protocol consists of a number of rounds, in which Bob speaks followed by Alice. The number of bits communicated by each party in a given round is fixed beforehand. Bob's communication is known as \emph{feedback}. 
        \item At the end of the protocol, Bob outputs a guess for $x$.
        \item The number of (message) bits Alice sends is denoted $|\code_k|$ and the number of (feedback) bits Bob sends is denoted $\zeta(\code_k)$. The number of rounds in which Bob speaks is denoted $\rho(\code_k)$
    \end{itemize}
    The family of protocols is said to have error (resp. erasure) resilience $\alpha$ if, for sufficiently large $k$, for all online adversarial attacks where the adversary can flip (resp. erase) up to $\alpha$ fraction of the bits Alice communicates (and none of Bob's bits) in an online fashion, choosing whether to corrupt a message immediately after it is sent, Bob outputs $x$ correctly.
\end{definition}

We note that because our model is deterministic, Eve corrupting Alice's message immediately after each bit or after each round is the same -- Alice's next message is predictable by her input and the publicly available feedback Alice has received so far. in  Though Eve doesn't technically know Alice's input, any attack only needs to succeed on a single input, so any attack may assume Eve knows Alice's input.

    

\subsection{Error-Correcting Codes}

We now state, and briefly justify, the properties of list-decodable codes employed in our scheme. Below $\tilde{O}_\epsilon(n)$ hides constant factors depending on $\epsilon$ as well as $\poly(\log n)$ factors.

\begin{theorem}\label{thm:ecc} 
  Let $\epsilon > 0$. For all integers $k$, there exists $N_0 = N_0(k) \le O(k/\poly(\epsilon))$ such that for each $n_0 \in \{N_0,2N_0,3N_0\}$, there exists an explicit error-correcting code 
    \[
        \ecc_{\epsilon, k} : \{0,1\}^k \rightarrow \{ 0, 1 \}^{n_0} 
    \]
    with the following properties:
    \begin{itemize}
    \item 
        {\bf Distance close to $\frac12$:} 
        For all $x,y \in \{0,1\}^k$, $x \neq y$,
        \[
            \Delta \big( \ecc_{\epsilon,k}(x), \ecc_{\epsilon,k}(y) \big) \ge \left( \frac{1-\epsilon}{2} \right) \cdot n_0, \numberthis \label{eqn:ecc-1/2}
        \]
    \item 
        {\bf Efficient Encoding:} 
        The code is efficiently encodable. Specifically, for any $x \in \{ 0, 1 \}^k$, one can compute $\ecc_{\epsilon,k}(x)$ in $\tilde{O}_\epsilon(n_0)$ time. 
    \item 
        {\bf List Decoding for Errors:} 
       For any string $m\in \{0,1\}^{n_0}$, the number of values of $x\in \{0,1\}^k$ satisfying 
        \[
            \Delta \big( \ecc_{\epsilon,k}(x), m \big) \le \left( \frac{1-\epsilon}2   \right) \cdot n_0, \numberthis \label{eqn:ecc-list-1/2}
        \]
        is at most a constant $L_\epsilon = O_\epsilon(1)$ independent of $k$ and $n_0$. Moreover, there is an algorithm to compute these values in $\tilde{O}_\epsilon(n_0)$ time. Computing these values is called ``list decoding.''
    \item 
        {\bf List Decoding for Erasures:} 
        For all $x \in \{ 0, 1 \}^k$, if at most $1 - \epsilon$ fraction of $\ecc_{\epsilon,k}(x)$ is erased, there is a $\tilde{O}_\epsilon(n_0)$ time algorithm that outputs $L_\epsilon = O_\epsilon(1)$ values $x_1, \dots, x_{L_\epsilon} \in \{ 0, 1 \}^k$ with the guarantee that $x \in \{ x_1, \dots, x_{L_\epsilon} \}$. 
    \item 
        {\bf Assumption for Claim~\ref{claim:ecc3}:} We also make the assumption for ease of notation later that all codewords are also far from both $0^{n_0}$ and $1^{n_0}$; that is, for every $x \in \{0,1\}^k$
        \begin{align*}
            \Delta(\ecc_{\epsilon,k}(x), 0^{n_0}) \ge \left( \frac{1-\epsilon}2 \right) \cdot n_0  \qquad \text{and} \quad
            \Delta(\ecc_{\epsilon,k}(x), 1^{n_0}) \ge \left( \frac{1-\epsilon}2  \right) \cdot n_0.
        \end{align*}
    \end{itemize}
\end{theorem}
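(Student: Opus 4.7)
The plan is to take a standard near-Plotkin-radius binary list-decodable code and wrap it in a one-line balancing transformation. All bullets except the last one are essentially off-the-shelf from the coding theory literature; the wrapper is what delivers the last bullet without damaging any of the others.

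For the base code, I would invoke an explicit binary code
\[
    \codeC : \{0,1\}^k \to \{0,1\}^{n_0/2}
\]
of rate $\Omega(\poly(\epsilon))$, relative distance at least $(1-\epsilon)/2$, list-decodable from fractional radius $(1-\epsilon)/2$ with list size $L = O_\epsilon(1)$ for both errors and erasures, and with $\tilde{O}_\epsilon(n_0)$-time encoder and list decoders. Such a code can be obtained by, for instance, concatenating a folded Reed--Solomon outer code (list-decodable near capacity over a large alphabet) with a near-Plotkin-rate binary inner code, and then leveraging Guruswami--Indyk-style expander constructions to make encoding and list decoding run in near-linear time. The three required lengths $n_0 \in \{N_0, 2N_0, 3N_0\}$ are handled either by producing three slightly rescaled instances or by padding with balanced blocks, absorbing the resulting $O(\epsilon)$ slack by running the whole construction at parameter $\epsilon/2$.

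The output code is then
\[
    \ecc_{\epsilon,k}(x) \;:=\; \codeC(x) \,||\, \overline{\codeC(x)} \;\in\; \{0,1\}^{n_0}.
\]
Every codeword has weight exactly $n_0/2$, so it sits at distance $n_0/2 \ge (1-\epsilon)n_0/2$ from both $0^{n_0}$ and $1^{n_0}$, giving the final bullet immediately. For distinct codewords, $\Delta(\ecc_{\epsilon,k}(x),\ecc_{\epsilon,k}(y)) = 2\,\Delta(\codeC(x),\codeC(y)) \ge (1-\epsilon)n_0/2$, and encoding stays $\tilde{O}_\epsilon(n_0)$. For error list decoding, a received $m = m_1 \,||\, m_2$ with $\Delta(m, \ecc_{\epsilon,k}(x)) \le (1-\epsilon)n_0/2$ satisfies $\min\bigl(\Delta(m_1,\codeC(x)),\,\Delta(m_2,\overline{\codeC(x)})\bigr) \le (1-\epsilon)n_0/4$ by pigeonhole, which is within $\codeC$'s list-decoding radius (length $n_0/2$, fractional radius $(1-\epsilon)/2$); running $\codeC$'s list decoder on each of $m_1$ and $\overline{m_2}$ and verifying each candidate by re-encoding yields a list of size at most $2L = O_\epsilon(1)$. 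For erasures at fraction $\le 1-\epsilon$ in $m$, by pigeonhole one of the two halves has erasure fraction $\le 1-\epsilon$, and $\codeC$'s erasure list decoder on that half (possibly composed with a complement) finishes the job.

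The main obstacle is essentially a sourcing issue: pinning down a single explicit base code $\codeC$ that simultaneously satisfies all three requirements above, namely rate $\Omega(\poly(\epsilon))$, constant list size at fractional radius $(1-\epsilon)/2$ for both errors and erasures, and near-linear-time encoding and list decoding. Such codes are known via standard concatenation-with-interpolation constructions, so the bulk of the work is simply stitching the right statement together from the coding theory literature; once $\codeC$ is in hand, the balancing wrapper verification above is immediate.
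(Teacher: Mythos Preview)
Your proposal is correct and follows essentially the same approach as the paper: start from a standard explicit binary code list-decodable up to radius $(1-\epsilon)/2$, then apply the balancing map $c \mapsto c \,||\, \bar c$ to force every codeword to weight $n_0/2$, recovering list decoding on the doubled code by decoding each half and taking the union. The only minor differences are bookkeeping choices---the paper obtains the three block lengths $\{N_0,2N_0,3N_0\}$ by repeating the length-$N_0$ encoding rather than rescaling, reduces erasure list decoding to error list decoding by filling erasures with all $0$'s and all $1$'s rather than invoking a separate erasure decoder, and instantiates the base code via a Reed--Solomon outer code concatenated with an (AG $+$ Hadamard) inner code and list recovery rather than folded RS with expanders---but none of these affect the structure of the argument.
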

\begin{proof}
We give a brief sketch of the construction. First, we note that if we have such a code of length $N_0$, then we can simply repeat the encoding twice  to get a code of length $2 N_0$. We can list decode the first and second blocks of the received word and take the union, at most doubling the list size. To get a code of length $3 N_0$ we simply repeat the encoding three times.

If we have the list decoding guarantee for errors, then we can also get the guarantee for erasures by filling in the erasures with all $0$'s and all $1$'s and decoding each, for a factor two increase in list size. (Of course list decoding from erasures is in general easier than this approach.)

To make all codewords far from both all $0$'s and all $1$'s, we can encode a string as $\ecc(x)||\overline{\ecc(x)}$ where $\overline{v}$ denotes the vector that flips every bit of vector $v$. Once again one can list decode both the first and second halves, so this preserves all the other desired properties up to constant factors.

With these observations, it suffices to meet the first three conditions for a block length $N_0 \le O(k/\poly(\epsilon))$. By now, there are several approaches to construct binary codes list-decodable in polynomial time up to an error fraction $\approx 1/2$. In view of the promised quasi-linear encoding and decoding time, we will use a concatenated scheme with an outer Reed-Solomon code. Specifically, we use a Reed-Solomon code over $\mathbb{F}_{2^m}$ of dimension $k$ and length $n_{\text{out}} = 2^m$, for a suitable choice of $m$. We will concatenate it with a binary linear inner code of dimension $m$, block length  $n_{\text{in}} \le O(m/\epsilon^{O(1)})$, that has relative distance more than $(1/2-\epsilon/10)$ and is list-decodable in $\poly(n_{\text{in}})$ time from a fraction $(1/2-\epsilon/10)$ of errors with lists of size $\poly(1/\epsilon)$. There are several options for such a code, for instance an algebraic-geometric code concatenated with Hadamard codes~\cite{GS-concat-stoc00}. The concatenated code can be used to encode $k$ bits into $N_0 := n_{\text{out}} \cdot n_{\text{in}}$ bits. Further, the encoding can be implemented in $\tilde{O}(N_0)$ time using FFT to encode the Reed-Solomon code and naive quadratic time decoding for each of the inner codes.

A simple approach to list decode the concatenated code would be list decode each inner block up to radius $(1/2-\epsilon/10)$, passing a set $S_i$ of possibilities for each outer Reed-Solomon codeword symbol, for positions $1 \le i \le 2^m$. If the total error fraction is at most $(1-\epsilon)/2$, at least $\alpha \ge \Omega(\epsilon)$ fraction of these sets will include the correct symbol. The inner decodings take $n_{\text{out}} \poly(m/\epsilon)$ time, which is $\tilde{O}_\epsilon(n_{\text{out}})$. The decoding can be completed by ``list-recovering" the outer Reed-Solomon code to find all codewords whose $i$'th symbol belongs to $S_i$ for at least $\alpha \cdot 2^m$ values of $i$. By the known list-recovery guarantees, this will succeed provided the rate of the Reed-Solomon code is at most $c_0 \cdot \epsilon^b$ for some absolute constant $c_0$ and large enough integer $b$~\cite{Sudan97}. So we can choose $m$ to be the smallest integer so that $2^m \cdot m \ge 2 k/(c_0 \epsilon^b)$. Further, the list recovery algorithm can be implemented in $\tilde{O}_\epsilon(n_{\text{out}})$ time~\cite{Alekhnovich05}. The distance of the concatenated code is at least $(1- c_0 \epsilon^b) (1/2-\epsilon/10) \ge (1-\epsilon)/2$.

It remains to bound the length $N_0$ of the code construction, which is $n_{\text{out}} \cdot n_{\text{in}} \le 2^m \cdot m/\poly(\epsilon)  \le O(k/\poly(\epsilon))$ by our choice of $m$. We have established all the required properties of our claimed coding scheme.
\end{proof}

\subsection{Multicolor Ramsey Theory}

We use a theorem about multicolor Ramsey numbers in our lower bound for the number of bits in Section~\ref{sec:lb-bits}.

\begin{theorem} \label{thm:ramsey} \cite{Erdos35}
    For positive integers $t$ and $\ell$, let $r(t; \ell)$ denote the $\ell$-color Ramsey number of $K_t$. This is the smallest integer $N$ such that every $\ell$-coloring of the edges of $K_N$ contains a monochromatic $K_t$. Then, it holds that $r(t; \ell) \leq \ell^{t\ell}$.
\end{theorem}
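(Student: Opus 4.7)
The plan is to apply the classical neighborhood-iteration (``chain'') argument for multicolor Ramsey numbers. Let $N = \ell^{t\ell}$ and fix any $\ell$-coloring of the edges of $K_N$. The goal is to build a sequence of vertices $v_1, v_2, \ldots, v_{t\ell}$ and colors $c_1, \ldots, c_{t\ell - 1}$ satisfying the invariant that for every $i < j$, the edge $\{v_i, v_j\}$ has color $c_i$. Once such a chain exists, the multicolor problem reduces to a one-dimensional pigeonhole on $c_1, \ldots, c_{t\ell-1}$: if some color $c$ appears at least $t-1$ times, then the $t-1$ vertices with outgoing color $c$, together with the final vertex $v_{t\ell}$, form a monochromatic $K_t$ in color $c$.

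To construct the chain I would maintain a nested family of ``surviving'' sets $T_0 \supseteq T_1 \supseteq \cdots$, with $T_0$ equal to the full vertex set. Given $T_{i-1}$, pick any $v_i \in T_{i-1}$ and partition the edges from $v_i$ into $T_{i-1} \setminus \{v_i\}$ by color. By pigeonhole, some color $c_i$ captures at least $\lceil (|T_{i-1}| - 1)/\ell \rceil$ of the endpoints, and I set $T_i$ to be this monochromatic neighborhood. The chain invariant is preserved automatically: any future $v_j$ drawn from $T_i \subseteq T_{i-1}$ is joined to $v_i$ by an edge of color $c_i$.

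The main calculation is to check that the process really runs for all $t\ell$ steps, which I would verify by induction on $i$ showing $|T_i| \geq \ell^{t\ell - i}$ (assuming $\ell \geq 2$; the case $\ell = 1$ is trivial or never invoked). The base case $|T_0| = \ell^{t\ell}$ holds by choice of $N$, and the inductive step reduces to the elementary inequality $\lceil (\ell^{t\ell - i + 1} - 1)/\ell \rceil \geq \ell^{t\ell - i}$, which is immediate. Hence $|T_{t\ell - 1}| \geq \ell \geq 1$, so every vertex pick is legal. Finally, applying pigeonhole to the $t\ell - 1$ outgoing colors distributed among $\ell$ classes yields some color class of size at least $\lceil (t\ell - 1)/\ell \rceil \geq t - 1$, producing the desired monochromatic $K_t$ via the reduction above. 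No step is a genuine obstacle---the argument is entirely routine---and the only care required is keeping the indexing in the size-induction aligned so that the chain is long enough to drive the final pigeonhole.
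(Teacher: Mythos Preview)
Your argument is correct and is the standard neighborhood-iteration proof of the multicolor Ramsey bound. There is nothing to compare against in the paper: the authors state Theorem~\ref{thm:ramsey} as a cited classical result (attributed to \cite{Erdos35}) and do not supply their own proof, invoking it only as a black box in Section~\ref{sec:lb-bits}. Your observation that the stated bound is vacuous for $\ell = 1$ is accurate, and indeed the paper only applies the theorem with $\ell = 2^{\zeta(\code)}$, where the conclusion forces $\ell$ to be large.
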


\subsection{Existence of Indices Partitioning Any Set}

We now state and prove a lemma that we use in both the error and erasure feedback ECC's. The lemma shows that for any values of $x_1, \dots, x_{L}\in \{0,1\}^k$, Bob can choose indices $i_1, \dots, i_{L} \in [k]$ such that $x_1[i_1, \dots, i_L], \dots, x_L[i_1, \dots, i_L]$ are all distinct tuples of $L$ bits.

\begin{lemma} \label{lem:index-list}
    For any $L$ strings $x_1, \dots, x_L \in \{ 0, 1 \}^k$ where $L \le k$, there exist $L$ indices $i_1, \dots, i_L \in [k]$ for which $x_1[i_1, \dots, i_L], \dots, x_L[i_1, \dots, i_L]$ are all distinct tuples of $L$ bits. Furthermore, these $L$ indices can be found in $O_L(k)$ time.
\end{lemma}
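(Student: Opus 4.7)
The plan is to construct the $L$ indices one at a time via a greedy partition-refinement argument. Throughout the construction, I maintain the equivalence partition of $\{x_1,\dots,x_L\}$ induced by the indices chosen so far (two strings are in the same block iff they agree on every chosen coordinate). Starting from the trivial partition with a single block, I will show that as long as some block contains two distinct strings, we can find a new index that strictly refines the partition.

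The inductive step is: suppose the current partition has $m < L$ blocks (so some block contains at least two distinct strings $x_i \neq x_j$ — here I am implicitly using the standing assumption that the $x_\ell$'s are distinct, since otherwise no set of indices can produce $L$ distinct bit-tuples). Because $x_i \neq x_j$, they differ on some coordinate $i^* \in [k]$; moreover $i^*$ cannot already be among the chosen indices, since otherwise $x_i$ and $x_j$ would already lie in different blocks. Adding $i^*$ to the list splits the block containing $x_i$ and $x_j$, so the number of blocks strictly increases. Iterating, after at most $L-1$ steps the partition has $L$ blocks, i.e., the tuples $x_1[i_1,\dots,i_m],\dots,x_L[i_1,\dots,i_m]$ are pairwise distinct. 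If we have not yet used $L$ indices, we pad with any unused index from $[k]$ (which exists since $L \le k$ and we used at most $L-1$ indices so far); padding clearly preserves distinctness of the tuples.

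For the time complexity, each iteration scans the current partition to find a block of size at least two (bookkeeping costs $O(L)$), then compares two strings in it coordinate-by-coordinate to locate a differing index in $O(k)$ time, and finally re-buckets the block according to the new coordinate in $O(L)$ time. This is $O(L+k)$ per iteration, so $O(L(L+k)) = O_L(k)$ overall, matching the claimed bound.

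The argument is essentially routine; the only step requiring a moment of care is the observation that the differing index $i^*$ found in the inductive step is automatically new (not among the already-chosen indices), which is what lets the construction terminate in $L-1$ rather than more steps. I do not anticipate any serious obstacle.
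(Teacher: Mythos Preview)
Your proof is correct and follows essentially the same greedy approach as the paper: repeatedly locate two strings not yet distinguished by the chosen indices and add a coordinate on which they differ, then pad with an unused index if fewer than $L$ were needed. The paper organizes the induction on the number of strings (adding $x_L$ and resolving its at-most-one collision) whereas you iterate on the number of indices via partition refinement, but the mechanism and running-time analysis are the same.
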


\begin{proof}
    We show the statement by induction on $L$. If $L=1$, choose $i_1$ arbitrarily (for example, let $i_1=1$). For the inductive step, assume we can find $i_1\ldots i_{L-1}$ that distinguish $x_1\ldots x_{L-1}$ in $O_L(k)$ time. Then, $x_L$ agrees with at most one  $x_j$ on the indices $i_1\ldots i_{L-1}$. Finding or disproving the existence of such an $x_j$ by evaluating the restriction of each element directly takes time $O_L(k)$. If $x_j$ exists, pick $i_L$ to be the first index on which $x_j$ and $x_L$ differ, which takes $O_L(k)$ time. If $x_j$ does not exist, let $i_L$ be the first index not already chosen, which takes time $O_L(k)$ to find.
    
    Then, the list of indices $i_1\ldots i_L$ distinguishes $x_1\ldots x_L$. As explained above, computing each of the $L$ indices successively takes time $O_L(k)$, so in total the algorithm takes time $O_L(k)$.
\end{proof}
\section{Optimal Feedback ECC for Erasures} \label{sec:erasure}
\subsection{Protocol}

Alice has a message $x \in \{ 0, 1 \}^k$, which she is trying to convey to Bob over a channel with an adversary who can erase bits. Bob may send $O_\epsilon(\log{k})$ bits of (noiseless) feedback over a constant number of rounds. In this section, we describe a protocol that achieves an erasure resilience of $1-\epsilon$ for any $\epsilon>0$.

For $\epsilon > 0$, let $\ECC = \ecc_\epsilon : \{ 0, 1 \}^k \rightarrow \{ 0, 1 \}^{n_0}$, where $n_0 = O_\epsilon(k)$, be the error-correcting code family from Theorem~\ref{thm:ecc} satisfying the erasure list-decoding property. Our feedback ECC is stated in Protocol~\ref{prot:logk-erasure}.

\protocol{Feedback ECC for Erasures}{logk-erasure}{
    The protocol consists of $T = (\log L_\epsilon + 1)/\epsilon$ rounds. In each round, Alice sends a message of length $n_0$. Then, Bob replies with a message of length $\beta = L_\epsilon\cdot(\log k + L_\epsilon)$. 
    
    \vspace{0.75em}
    {\bf Phase 1:}
    \vspace{0.25em}
    
    Alice begins by sending $\ECC(x)$ to Bob in every round. 
    \begin{itemize}
    \item 
        If Bob receives a message $m$ with at least $1 - \epsilon$ erasures, he replies by sending the string $0^\beta$ back to Alice.
    \item 
        Otherwise, if Bob receives a message $m$ with less than $1 - \epsilon$ erasures, he list-decodes $m$ to $L_\epsilon$ possibilities $x_1, \dots, x_{L_\epsilon} \in \{ 0, 1 \}^k$ (see Theorem~\ref{thm:ecc}).  He then picks $L_\epsilon$ indices $i_1, \dots, i_{L_\epsilon} \in [k]$ for which the values $x_1[i_1, \dots, i_{L_\epsilon}], \dots, x_{L_\epsilon}[i_1, \dots, i_{L_\epsilon}] \in \{ 0, 1 \}^{L_\epsilon}$ are all distinct (see Lemma~\ref{lem:index-list}). He sends $(i_1, i_2, \dots, i_{L_\epsilon}; x_1[i_1, \dots, i_{L_\epsilon}], \dots, x_{L_\epsilon}[i_1, \dots, i_{L_\epsilon}]) \in \{ 0, 1 \}^\beta$. Alice will receive a nonzero string, and both parties move on to Phase 2. 
    \end{itemize}
    
    {\bf Phase 2:}
    \vspace{0.25em}
    
    The last feedback Alice received upon entering Phase 2 was a string of the form $(i_1, \dots, i_{L_\epsilon}; y_1, \dots, y_{L_\epsilon})$, where $i_1, \dots, i_{L_\epsilon} \in [k]$ and $y_1, \dots, y_{L_\epsilon} \in \{ 0, 1 \}^{L_\epsilon}$, with the property that all $y_1, \dots, y_{L_\epsilon}$ are distinct. Alice determines a value $\gamma \in [L_\epsilon]$ for which $x[i_1, \dots, i_{L_\epsilon}] = y_\gamma$. Her goal is then to communicate $\gamma$ to Bob.
    
    \vspace{0.25em}
    
    To do this, Alice and Bob engage in the following procedure. Alice will send $\gamma$ to Bob bit by bit (viewing $\gamma$ as a $\log L_\epsilon$-bit string). Alice begins with $\nxt = 1$. Bob keeps track of a string $\hat{\gamma}$ originally set to the empty string. 
    \begin{itemize}
    \item 
        Every round, Alice sends $(\gamma[\nxt])^{n_0}$, where if $\nxt > \log L_\epsilon$ she simply sends $0^{n_0}$. 
    \item 
        If Bob receives a message $m$ where not every bit is erased, he appends one of the unerased bits to $\hat{\gamma}$. Finally, he sends $1^\beta$ to Alice. If Alice receives $1^\beta$, she sets $\nxt \gets \nxt + 1$.
        
    \item 
        If every bit of $m$ is erased, he does not alter $\hat{\gamma}$ and sends $0^\beta$ to Alice.
    \end{itemize}
    
    At the end of the protocol, Bob takes $\hat{\gamma}' = \hat{\gamma}[1:\log L_\epsilon]$. He outputs $x_{\hat{\gamma}'}$. 
}
\subsection{Analysis}

\begin{theorem}
    Protocol~\ref{prot:logk-erasure} is a feedback $\ECC$ with the following properties:
    \begin{itemize}
        \item {\bf Erasure Resilience:} For any erasure of $\le 1 - 2\epsilon$ fraction of Alice's communication, Bob outputs $x$. 
        \item {\bf Forward Communication:} Alice sends a total of $O_\epsilon(k)$ bits.
        \item {\bf Feedback Complexity:} There are $O_\epsilon(1)$ rounds and a total of $O_\epsilon(\log k)$ bits of feedback.
        \item {\bf Efficiency:} Alice and Bob both run in time $\tilde{O}_\epsilon(k)$. 
    \end{itemize}
\end{theorem}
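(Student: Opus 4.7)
The plan is to dispatch the three structural properties by inspection and focus on erasure resilience. For forward communication, Alice sends $T n_0$ bits, and since $T = O_\epsilon(1)$ and $n_0 = O_\epsilon(k)$ by Theorem~\ref{thm:ecc}, this is $O_\epsilon(k)$. The feedback consists of $T = O_\epsilon(1)$ rounds of $\beta = L_\epsilon(\log k + L_\epsilon)$ bits each, totaling $O_\epsilon(\log k)$ bits. For efficiency, each round's expensive operations are Bob's list-decoding of an $n_0$-bit word and his call to Lemma~\ref{lem:index-list}, both in $\tilde O_\epsilon(k)$ time; with $O_\epsilon(1)$ rounds, the overall running time is $\tilde O_\epsilon(k)$.

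For erasure resilience, suppose at most $(1-2\epsilon) T n_0$ of Alice's bits are erased. Call a Phase~1 round \emph{heavy} if at least $(1-\epsilon) n_0$ of its bits are erased and \emph{light} otherwise; by the protocol, Bob remains in Phase~1 precisely until the first light round, after which Phase~2 begins on the next round. Let $t_1$ be the index of the first light round, with $t_1 := T+1$ if every round is heavy. The case $t_1 = T+1$ would force at least $T(1-\epsilon) n_0$ erasures, exceeding the budget $(1-2\epsilon) T n_0$ for any $\epsilon > 0$, so $t_1 \le T$ and Phase~2 is reached. Next, call a Phase~2 round \emph{dead} if all $n_0$ of its bits are erased, and let $f$ denote the number of dead rounds. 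The budget gives
\[
(t_1 - 1)(1 - \epsilon) n_0 + f \, n_0 \;\le\; (1 - 2\epsilon) T n_0,
\]
so $f \le (1-2\epsilon) T - (t_1 - 1)(1-\epsilon)$, and the number of live Phase~2 rounds is at least $T - t_1 - f$.

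The crux is the inequality $(1-2\epsilon) T - (t_1-1)(1-\epsilon) \le T - t_1 - \log L_\epsilon$, which after substituting $T = (\log L_\epsilon + 1)/\epsilon$ reduces algebraically to $t_1 \le T + 1$ and is therefore valid for every $t_1 \in [1,T]$. This is where the precise choice of $T$ is used, and it guarantees at least $\log L_\epsilon$ live Phase~2 rounds. Since feedback is noiseless, Alice and Bob stay synchronized in Phase~2: a live round causes Bob to send $1^\beta$ (prompting Alice to advance $\nxt$) while a dead round causes him to send $0^\beta$; moreover, any unerased bit in a live round equals $\gamma[\nxt]$, so Bob correctly appends $\gamma[\nxt]$ to $\hat\gamma$. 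After at least $\log L_\epsilon$ live rounds, $\hat\gamma[1:\log L_\epsilon] = \gamma$.

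To conclude that Bob's output equals $x$: by the list-decoding guarantee of Theorem~\ref{thm:ecc}, $x$ lies in Bob's Phase~1 list $\{x_1, \ldots, x_{L_\epsilon}\}$, and Lemma~\ref{lem:index-list} produces indices $i_1, \ldots, i_{L_\epsilon}$ on which the restrictions $y_j = x_j[i_1, \ldots, i_{L_\epsilon}]$ are pairwise distinct. Hence there is a unique $j^\star \in [L_\epsilon]$ with $y_{j^\star} = x[i_1, \ldots, i_{L_\epsilon}]$; Alice sets $\gamma = j^\star$, and Bob's output $x_{\hat\gamma'} = x_\gamma$ is exactly $x$. The main obstacle is the counting inequality that balances the two phases' contributions to the erasure budget and pins down why $T$ must be $(\log L_\epsilon + 1)/\epsilon$; once that is in hand, the remaining bookkeeping is routine.
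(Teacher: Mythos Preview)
Your proof is correct and follows essentially the same approach as the paper: structural properties by inspection, then a counting argument against the erasure budget to guarantee at least $\log L_\epsilon$ successful Phase~2 rounds. The paper's counting is slightly more streamlined---it uses a single global count $\alpha$ of rounds with fewer than $(1-\epsilon)n_0$ erasures and derives $\alpha \ge \epsilon T = \log L_\epsilon + 1$ directly, rather than separating Phase~1 ``heavy'' rounds from Phase~2 ``dead'' rounds---but your two-threshold bookkeeping reaches the same conclusion and your algebra pinning down the choice of $T$ is correct.
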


\begin{proof}
    To see the claim about the feedback complexity, note that our protocol runs in $T = O_\epsilon(1)$ rounds, and each round Bob sends $O_\epsilon(k)$ bits of feedback. As for efficiency: For Alice, note that computing $\ECC(x)$ takes $\tilde{O}_\epsilon(n_0) = \tilde{O}_\epsilon(k)$ time by Theorem~\ref{thm:ecc}, and determining $\gamma$ from the list $(i_1, \dots, i_{L_\epsilon}; y_1, \dots, y_{L_\epsilon})$ takes $O_\epsilon(L_\epsilon \cdot k) = O_\epsilon(k)$ time. Finally, Alice sending Bob the bits of $\gamma$ takes $O_\epsilon(1)$ time per round. This gives a total of $\tilde{O}_\epsilon(k)$ running time for Alice. Meanwhile, for Bob, list-decoding takes $\tilde{O}_\epsilon(n_0) = \tilde{O}(k)$ time by Theorem~\ref{thm:ecc}, and computing the feedback $(i_1, \dots, i_{L_\epsilon}; x_1[i_1, \dots, i_{L_\epsilon}], \dots, x_{L_\epsilon}[i_1, \dots, i_{L_\epsilon}])$ takes $O_{L_\epsilon}(k) = O_\epsilon(k)$ time by Lemma~\ref{lem:index-list}. Finally, receiving $\hat{\gamma}$ bit by bit takes $O_\epsilon(1)$ time per round, and determining $x_{\hat{\gamma}'}$ takes $O_\epsilon(k)$ time. In total, this gives that Bob runs in $\tilde{O}_\epsilon(k)$ time as well. 
    
    We focus on proving erasure resilience to $\le 1 - 2\epsilon$ of Alice's communication.

    Given erasure of $\le 1 - 2\epsilon$ of Alice's communicated bits, suppose that in $\alpha$ of the rounds, less than $1 - \epsilon$ of Alice's communication is erased, and in the other $T - \alpha$ rounds, at least $1 - \epsilon$ fraction of Alice's message is erased. Then 
    \begin{align*}
        (T - \alpha) \cdot (1-\epsilon) &\le (1 - 2\epsilon) \cdot T \\
        \implies T - \alpha &\le (1 - \epsilon) \cdot T \\
        \implies \alpha &\ge \epsilon T = \log L_\epsilon + 1. 
    \end{align*}
    In the first of the $\alpha$ rounds where $< 1 - \epsilon$ of Alice's communication is erased, Bob must be able to list decode Alice's message to $L_\epsilon$ options $x_1, \dots, x_{L_\epsilon}$, one of which is guaranteed to be $x$. Based on his feedback, which is nonzero, both parties move to Phase 2.
    
    In Phase 2, there are $\alpha - 1 \ge \log L_\epsilon$ rounds in which Bob receives a nonzero message from Alice. For the first $\log L_\epsilon$ of such rounds, Bob receives the next bit of $\gamma$ from Alice: since his feedback is uncorrupted, whenever he receives a bit from Alice, Alice immediately moves on to sending the next bit of $\gamma$, so that Bob receives the $\log L_\epsilon$ bits of $\gamma$ in order. Thus, $\hat{\gamma}' = \gamma$.
    
    Finally, we claim that $x_\gamma = x$. To see this, recall that by the erasure list-decoding guarantee of Theorem~\ref{thm:ecc}, $x$ must be among $x_1, \dots, x_{L_\epsilon}$. Furthermore, since $x_1, \dots, x_{L_\epsilon}$ have distinct values on the indices $i_1, \dots, i_{L_\epsilon}$ by Lemma~\ref{lem:index-list}, there is a unique value of $\gamma \in [L_\epsilon]$ for which $x[i_1, \dots, i_{L_\epsilon}] = y_\gamma$. 
    
    In conclusion, Bob will have determined a list $x_1, \dots, x_{L_\epsilon}$ containing $x$, along with the value $\gamma \in [L_\epsilon]$ such that $x = x_\gamma$, assuming that at most $1 - 2\epsilon$ of Alice's communication is erased. This allows him to output $x$ correctly.
\end{proof}

\section{Optimal Feedback ECC for Errors} \label{sec:error}
\subsection{Ingredients}

We begin by defining two ingredients---partitions of the message space and error-correcting codes encoding the message space---that the parties will use repeatedly in the protocol.

\subsubsection{Partitions}

We define a way for Bob to specify partitions of the set of Alice's possible input using only $O_\epsilon(\log{k})$ bits. The function $\partition$ takes in as input a sequence $X$ of constantly many elements of $\{ 0, 1 \}^k$ and outputs a short description (of size $O_\epsilon(\log k)$) of a partition of the input space such that the specified elements are all in separate sets of the partition. The number of sets in the partition is $|X|$.

\begin{definition}[$\partition$ and $\partdec$]\label{def:partition}
    Let $\partition(X)$ be a function that takes in a sequence $X:= \{x_i\}_{i=1}^{|X|}$ with $x_i\in \{0,1\}^k$ and outputs a string in $\{0,1\}^{|X|\log{k}+|X|^2}$ describing a partition of $\{ 0, 1 \}^k$ into $|X|$ sets. Let $\partdec(s,x)$ be a dual function that takes a string $s\in \{0,1\}^*$ and an element $x\in\{0,1\}^k$ and outputs an integer (corresponding to which set of the partition described by $s$ that $x$ is in). These functions satisfy that for all $x_i \in X$, 
    \[
        \partdec(\partition(X), x_i)=i.
    \]
\end{definition}

\begin{claim}
    The dual functions $\partition$ and $\partdec$ with the properties in Definition~\ref{def:partition} exist. Moreover, the function $\partition(X)$ can be computed in $O_{|X|}(k)$ time, and $\partdec(s,x)$ can be computed in $O(|s|,k)$ time.
\end{claim}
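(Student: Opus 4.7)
The plan is to construct $\partition$ and $\partdec$ by a direct application of Lemma~\ref{lem:index-list}. Write $L = |X|$ and $X = \{x_1, \dots, x_L\}$. First I would apply Lemma~\ref{lem:index-list} to the sequence $X$ to obtain $L$ indices $i_1, \dots, i_L \in [k]$ such that the $L$-bit strings $y_j := x_j[i_1, \dots, i_L]$ are all pairwise distinct. Then I define
\[
\partition(X) := (i_1, \dots, i_L;\, y_1, \dots, y_L),
\]
encoding each index $i_j$ as a $\lceil\log k\rceil$-bit integer and each $y_j$ as an $L$-bit string, for a total length of $L\lceil\log k\rceil + L^2$ bits, matching the target $|X|\log k + |X|^2$.

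Next, I would define $\partdec(s, x)$ as follows. Parse $s$ into the tuple $(i_1, \dots, i_L;\, y_1, \dots, y_L)$ using the known lengths, compute $z := x[i_1, \dots, i_L] \in \{0,1\}^L$, and then check whether $z = y_j$ for some $j \in [L]$. If so, output that $j$ (which is unique, since the $y_j$'s are distinct); otherwise, output $1$ by default. This indeed defines a partition of $\{0,1\}^k$ into $L$ sets indexed by $[L]$, where set $j$ for $j \ne 1$ is exactly $\{x : x[i_1, \dots, i_L] = y_j\}$, and set $1$ contains everything else. By construction $x_i[i_1, \dots, i_L] = y_i$, so $\partdec(\partition(X), x_i) = i$ for every $i \in [L]$, verifying the required identity.

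For the time bounds, the computation of $\partition(X)$ is dominated by finding the indices via Lemma~\ref{lem:index-list} in $O_L(k)$ time and reading off the $L^2$ bits $y_j = x_j[i_1, \dots, i_L]$ in an additional $O(L^2)$ time, for a total of $O_{|X|}(k)$. The computation of $\partdec(s, x)$ requires parsing $s$ in time $O(|s|)$, reading $L$ bits of $x$ in time $O(L)$ (which is $O(|s|)$), and comparing $z$ to each of the $L$ candidates in time $O(L^2)$, giving $O(|s| + k)$ overall, which fits within the stated $O(|s|, k)$ bound.

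There is no substantive obstacle here: the only thing worth remarking on is that the lemma requires $L \le k$, which is harmless in our usage since the sequences $X$ we pass in have size $L = O_\epsilon(1)$ while $k$ grows. The proof is thus a short verification, essentially just packaging Lemma~\ref{lem:index-list} into the form needed later.
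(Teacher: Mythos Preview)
Your proposal is correct and essentially identical to the paper's own proof: both apply Lemma~\ref{lem:index-list} to extract the distinguishing indices, encode $\partition(X)$ as the index list together with the evaluations $x_j[I]$, and define $\partdec$ by parsing $s$, restricting $x$ to those indices, and matching against the stored evaluations (defaulting to $1$ on failure). The paper additionally remarks that $\partdec$ should output $1$ when $s$ is malformed, but otherwise the construction, verification, and complexity bounds you give line up with the paper's argument.
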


\begin{proof}
    Let $k\in \bbN$ and a sequence $X:= \{x_i\}_{i=1}^{|X|}$ with $x_i\in \{0,1\}^k$. By Lemma~\ref{lem:index-list}, there is a set of indices $I\subseteq [k]$ with $|I|=|X|$ such that for all $i \not= j\in [|X|]$, it holds that $x_i[I]\neq x_j[I]$. The description $\partition(X)$ is simply the set $I$ and the values $x_1[I], \dots, x_{|X|}[I]$. This description has size $|X| \cdot \log k + |X|^2$ and can be computed in $O_{|X|}(k)$ time by Lemma~\ref{lem:index-list}.
    
    The function $\partdec(s\in \{0,1\}^*,x)$ first recovers $I\subseteq [k]$ and the $|X|$ evaluations $x_1[I], \dots, x_{|X|}[I]$ from the string $s$ (if $s$ does not describe these values, the function can just output $1$). Then it outputs $i$ for which $x[I] = x_i[I]$, and if no such $i$ exists, then it outputs $1$. For all $i\in [|X|]$, it holds that $\partdec(\partition(X),x_i)=i$ so Definition~\ref{def:partition} is satisfied.
\end{proof}

\subsubsection{Error-Correcting Codes}

In every round of the protocol, Alice will send one of two encodings of $x$. The first is the error-correcting code from Theorem~\ref{thm:ecc}, with the choice of block length $3N_0$.

\begin{definition}\label{def:ecc}
    For fixed $k \in \bbN, \epsilon > 0$, we define the error-correcting code
    \[
        \ECC : \{ 0, 1 \}^k \rightarrow \{ 0, 1 \}^{M(k)}
    \]
    as in Definition~\ref{thm:ecc} with $M(k) = 3 \cdot N_0(k)$.
\end{definition}

We now describe a second ECC that Alice uses to send Bob her input $x$. This second ECC has special properties so that some pairs of possible inputs are separated by significantly more than $\frac12$ while others are separated by significantly less. In particular, Alice bases her encoding on a partition of her input space into $D$ sets that Alice had sent her (as $s \in \{ 0, 1 \}^{D \log k + D^2}$), and a further partitioning of these sets into three sets, specified by a function $f : [D] \rightarrow \{1,2,3\}$. Inputs $x$ that fall in different sets in this new coarser partition are separated by $\frac23$ and elements that fall in the same set are separated by less: specifically, elements in the first set of this partition are separated by $\frac13$ and elements in the second or third sets are separated by $0$. The code's parameters are formalized below.

\begin{definition}[{$\ECC[s,f]$}]\label{def:ecc3}
    For all $k$ and fixed $\epsilon > 0$, we define an explicit ECC for each string $s\in \{0,1\}^{D\log{k}+D^2}$ for some $D$ and function $f:[D]\to \{1,2,3\}$. The code is denoted
    \[
        \ECC[s,f] : \{0,1\}^k \rightarrow \{ 0, 1 \}^{M(k)}
    \] and has the following properties.
    
    \begin{enumerate}[label=(\alph*)]
    \item 
        $M = M(k)=3N_0(k)$ is the same block length as the code in Definition~\ref{def:ecc}. 
    \item \label{item:ecc3-encodable}
        The code is efficiently encodable. That is, for any $x, s, f$, one can compute $\ECC[s,f](x)$ in $\tilde{O}_{\epsilon}(D, k)$ time.
    \item \label{item:2,3-equal}
        For any $x,y \in \{ 0, 1 \}^k$ such that $f(\partdec(s,x))=f(\partdec(s,y)) \in \{2,3\}$, we have
        \[
            \ECC[s,f](x) = \ECC[s,f](y)
        \]
    \item \label{item:ecc-1/3}
        For any $x \not= y \in \{ 0, 1 \}^k$ such that $f(\partdec(s,x))=f(\partdec(s,y))=1$, we have
        \[
            \Delta \big( \ECC[s,f](x), \ECC[s,f](y) \big) \ge \left( \frac13 - \epsilon \right) \cdot M, \numberthis \label{eqn:ecc-1/3}
        \]
    \item \label{item:ecc-2/3}
        For any $x,y$ such that $f(\partdec(s, x)) \not= f(\partdec(s, y))$, it holds that
        \[
            \Delta \big( \ECC[s,f](x), \ECC[s,f](y) \big) \ge \left( \frac23 - \epsilon \right) \cdot M, \numberthis \label{eqn:ecc-0}
        \]
    \item \label{item:ecc-1/3-list-decoding}
        For any message $m\in \{0,1\}^M$, the number of values of $x \in \{ 0, 1 \}^k$ with $f(\partdec(s, x)) = 1$ satisfying 
        \[
            \Delta \big( \ECC[s,f](x), m \big) \le \left( \frac13 - 2\epsilon \right) \cdot M, \numberthis \label{eqn:ecc-list-1/3}
        \]
        is at most the constant $L_\epsilon$ (see Theorem~\ref{thm:ecc}). Moreover, there is an algorithm to compute these values in $\tilde{O}_\epsilon(k)$ time.
    \end{enumerate}
\end{definition}

\begin{claim} \label{claim:ecc3}
    The code $\ECC[s,f]$ as defined in Definition~\ref{def:ecc3} exists. 
\end{claim}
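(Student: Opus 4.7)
The plan is to build $\ECC[s,f]$ block-by-block from the base code $\ecc := \ecc_{\epsilon,k}$ of Theorem~\ref{thm:ecc}, using three blocks of length $N_0 = N_0(k)$ so that the total length matches $M = 3N_0$. Concretely, I would define
\[
    \ECC[s,f](x) := \begin{cases}
        \ecc(x) \,\|\, \ecc(x) \,\|\, 0^{N_0} & \text{if } f(\partdec(s,x)) = 1, \\
        1^{N_0} \,\|\, 0^{N_0} \,\|\, 1^{N_0} & \text{if } f(\partdec(s,x)) = 2, \\
        0^{N_0} \,\|\, 1^{N_0} \,\|\, 1^{N_0} & \text{if } f(\partdec(s,x)) = 3.
    \end{cases}
\]
Intuitively, classes $2$ and $3$ are just a pair of constant strings that already sit at relative distance $\tfrac23$ from each other, and the class-$1$ encoding is arranged so that the two fixed $0^{N_0}/1^{N_0}$ blocks alone force relative distance $\tfrac23$ to either of the other classes; the two copies of $\ecc(x)$ in the first two blocks then carry the actual message and provide the required intra-class-$1$ distance.

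Items (a)--(c) are essentially by construction: the block length is $3N_0 = M$; encoding runs one call to $\partdec$ followed by either a constant-string copy or two calls to $\ecc$, each $\tilde O_\epsilon(k)$ time; and classes $2,3$ are single fixed strings. For item (d), two distinct class-$1$ codewords agree on the final block, so their distance equals $2\Delta(\ecc(x),\ecc(y)) \ge 2\cdot\frac{1-\epsilon}{2}N_0 = (1-\epsilon)N_0 \ge (\frac13-\epsilon)M$. For item (e), the cross-class computations all reduce to the identity $\Delta(c,0^{N_0}) + \Delta(c,1^{N_0}) = N_0$ valid for every $c \in \{0,1\}^{N_0}$: the class~$1$ vs class~$2$ distance is $\Delta(\ecc(x),1^{N_0}) + \Delta(\ecc(x),0^{N_0}) + \Delta(0^{N_0},1^{N_0}) = 2N_0 = \tfrac23 M$, class~$1$ vs class~$3$ is symmetric, and class~$2$ vs class~$3$ differ on the first two blocks and agree on the last, also giving $2N_0 = \tfrac23 M$.

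The main verification is the list-decoding bound (f), and this is where I expect to have to be careful about parameter choices. Given a received word $m = m^{(1)} \,\|\, m^{(2)} \,\|\, m^{(3)}$ and a class-$1$ candidate $x$ with $\Delta(\ECC[s,f](x), m) \le (\tfrac13 - 2\epsilon)M = (1-6\epsilon)N_0$, the third-block contribution is nonnegative, so $\Delta(\ecc(x), m^{(1)}) + \Delta(\ecc(x), m^{(2)}) \le (1-6\epsilon)N_0$, forcing at least one of the summands to be at most $(\tfrac12 - 3\epsilon)N_0$. Since $\tfrac12 - 3\epsilon \le \tfrac{1-\epsilon}{2}$, this is within the list-decoding radius of $\ecc$ from Theorem~\ref{thm:ecc}, so running the $\ecc$ list decoder on each of $m^{(1)}$ and $m^{(2)}$ in $\tilde O_\epsilon(N_0)$ time produces two lists of size at most $L_\epsilon$ whose union contains every valid $x$. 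This yields at most $2L_\epsilon = O_\epsilon(1)$ candidates, which one can absorb into the $L_\epsilon$ promised by the claim.

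The only delicate point is choosing the base-code parameter small enough to satisfy (d) and (f) simultaneously; the computations above show that using $\ecc_{\epsilon,k}$ itself already suffices, because $(1-\epsilon)/2 \ge \tfrac12 - 3\epsilon$ covers the list-decoding radius in (f) while $(1-\epsilon)/3 \ge \tfrac13 - \epsilon$ covers the intra-class distance in (d). No deeper ideas are needed beyond these two inequalities plus the $\Delta(c,0^{N_0}) + \Delta(c,1^{N_0}) = N_0$ identity.
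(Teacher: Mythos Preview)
Your construction is correct, but it differs from the paper's in an instructive way. The paper uses the length-$2N_0$ code $\ecc$ for class~$1$ and writes $\ECC[s,f](x) = 0^{N_0}\,\|\,\ecc(x)$, with classes~$2,3$ as $1^{N_0}\,\|\,0^{2N_0}$ and $1^{N_0}\,\|\,1^{2N_0}$. To get the $\tfrac23$ cross-class distance, the paper then invokes the extra hypothesis from Theorem~\ref{thm:ecc} that every $\ecc$-codeword is at relative distance $\ge \tfrac{1-\epsilon}{2}$ from both $0^{2N_0}$ and $1^{2N_0}$; indeed, that assumption was added to Theorem~\ref{thm:ecc} precisely for this claim. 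Your layout instead exploits the elementary identity $\Delta(c,0^{N_0})+\Delta(c,1^{N_0})=N_0$, which makes the cross-class distances come out to \emph{exactly} $\tfrac23 M$ without any balancedness assumption on $\ecc$---a cleaner route.

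The tradeoff is in item~\ref{item:ecc-1/3-list-decoding}: the paper decodes a single $2N_0$-block and lands directly on the constant $L_\epsilon$ promised in Definition~\ref{def:ecc3}, whereas you must decode both $m^{(1)}$ and $m^{(2)}$ and take a union, yielding $2L_\epsilon$. Your remark that this can be ``absorbed'' is fine morally (everything downstream only needs $O_\epsilon(1)$), but as written Definition~\ref{def:ecc3} pins the list size to the specific $L_\epsilon$ of Theorem~\ref{thm:ecc}; you would need either to adjust that constant or to observe that one can first check which of the two blocks is closer and decode only that one.
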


\begin{proof}
    Take an error-correcting code $\ecc : \{ 0, 1 \}^k \rightarrow \{ 0, 1 \}^{2N_0}$ with relative distance $\frac12 - \epsilon$ as in Theorem~\ref{thm:ecc}. We define $\ECC[s,f](x)$ as follows:
    \begin{align*}
        \ECC[s,f](x) = \begin{cases}
            0^{N_0} || \ecc(x) & f(\partdec(s, x)) = 1 \\
            1^{N_0} || 0^{2N_0} & f(\partdec(s, x)) = 2 \\
            1^{N_0} || 1^{2N_0} & f(\partdec(s, x)) = 3.
        \end{cases}
    \end{align*}
    
    Note that $M = |\ECC[s,f](x)| = 3 N_0$ is the same as in Definition~\ref{def:ecc}, and that $\ecc$ has block length $2N_0$. Properties~\ref{item:ecc3-encodable} and~\ref{item:2,3-equal} are clear from the definition.
    
    As for property~\ref{item:ecc-1/3}, note that 
    \begin{align*}
        \Delta(\ECC[s,f](x), \ECC[s,f](y)) 
        &= \Delta(0^{N_0} || \ecc(x), 0^{N_0} || \ecc(y)) \\
        &\ge \left( \frac12 - \epsilon \right) \cdot N_0 \\
        &\ge \left( \frac13 - \epsilon \right) \cdot M.
    \end{align*}
    And for property~\ref{item:ecc-2/3}, we see that 
    \begin{align*}
        \Delta(0^{N_0} || \ecc(x) , 1^{N_0} || 0^{2N_0)}) &\ge N_0 + \left( \frac12 - \epsilon \right) \cdot 2N_0 \ge \left( \frac23 - \epsilon \right) \cdot M \\
        \Delta(0^{N_0} || \ecc(x), 1^{N_0} || 1^{2N_0}) &\ge N_0 + \left( \frac12 - \epsilon \right) \cdot 2N_0 \ge \left( \frac23 - \epsilon \right) \cdot M \\
        \Delta(1^{N_0} || 0^{2N_0}, 1^{N_0} || 1^{2N_0}) &= \frac23 \cdot M.
    \end{align*}
    Finally, for property~\ref{item:ecc-1/3-list-decoding}, note that if $f(\partdec(s,x)) = 1$,
    \[
        \Delta(\ECC[s,f](x), m) \le \left( \frac13 - 2\epsilon \right) \cdot M \implies \Delta(\ecc(x), m[N_0+1:3N_0]) \le \left( \frac12 - 2\epsilon \right) \cdot 2N_0,
    \]
    so by Theorem~\ref{thm:ecc} there are at most $L_\epsilon$ values of $x \in \{ 0, 1 \}^k$ with $f(\partdec(s,x))=1$ satisfying $\Delta(\ECC[s,f](x), m) \le \left( \frac13 - 2\epsilon \right) \cdot M$. These can be found via the list decoding algorithm that runs in $\tilde{O}_\epsilon(k)$ time, and checked that the resulting codewords are in fact within $\left( \frac13 - 2\epsilon \right) \cdot M$ of $m$. 
\end{proof}

\subsection{Definition of a Coin Game}\label{sec:coin-game-def}

Before we can state the formal protocol, we formally define a coin game. The coin game is a data structure that helps Bob store, access, and update values associated with some set of \emph{coins}. Bob will initialize instances of the coin game at the start of the protocol, and in each chunk where he opts to play the coin minigame. Below, we define the coin game data structure, and show that the required access and update operations can be done efficiently.

\datastruct{Coin Game }{coin-game}{
    A \emph{coin game} $\cG$ is a data structure for any set $S$\footnote{Though we do not require a particular domain for $S$, we do require a few specific properties. The elements can be specified and accessed in $\log{|S|}$ space/time, and that $S$ comes with an ordering where two elements can be compared efficiently, and a way to generate the smallest $m\in \bbN$ elements in order in $O(m\log{|S|})$ time. In our protocol, the set $S$ is either $\{0,1\}^N$ or $[N]$ for $N\in \bbN$, which clearly have these properties.} and a size parameter $K\in \bbN$ representing the maximum size of a single update. For each element (``coin'') $y \in S$ and for every time step $t \in \bbN_{\ge 0}$, the data structure maintains a value $\pos_\cG(y) \in \bbZ_{\ge 0}$. We will denote the value of $\pos_\cG(y)$ at time $t$ by $\pos_\cG(y; t)$. Initially, at time $t = 0$, it holds that $\pos_\cG(y; 0) = 0$ for all $y \in S$. At every time $t > 0$, the values $\pos_\cG(y)$ in the data structure are updated. We list the values that can be queried from this data structure, and the updates that can be made to it. Each query and update takes time $O_{t,K}(\log{|S|})$, and the total storage is at most $O_{t,K}(\log{|S|})$ as well. 
    
    \vspace{0.4em}
    
    \noindent{\bf Queries:}
    \begin{itemize}
        \item For any coin $y \in S$, the query $\pos_\cG(y;t)$ outputs a value assigned to $y$. 
        \item For any $i\leq |tK|$, the query $x^t(i)$ outputs the coin $\in S$ with the $i$'th smallest value of $\pos_\cG(y; t)$ over any $y\in S$. Ties are broken by the sorting on $S$.
        \item For any $i\leq |tK|$, the query $\posx_\cG(i;t)$ returns the value $\pos_\cG(x^t(i);t)$. 
    \end{itemize} 
    
    \noindent{\bf Update:}
    \begin{itemize}
        \item At each time $t>0$, the function $\update_\cG(\cL,v)=0$ takes in a list $\cL$ with $|\cL|\leq K$ of index-value pairs $(y,v)$ with $y\in S$ and $v\in \bbN$ and no repeated values of $y$, as well as an extra value $V\in \bbN$. For each coin-value pair $(y,v)\in \cL$, it should hold that $\pos_\cG(y;t)=\pos_\cG(y;t-1)+v$. Moreover, for $y$ not contained in the list $\cL$, it holds that $\pos_\cG(y;t)=\pos_\cG(y;t-1)+V$. We mention that the values $v$ and $V$ are required to be at most $O_K(\log{|S|})$ bits.
    \end{itemize}
}

Now we show that a coin game data structure exists.

\begin{claim}\label{claim:coin-game}
    There exists a coin game data structure, as in Data Structure~\ref{def:coin-game}. 
\end{claim}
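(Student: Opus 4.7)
The plan is to avoid storing positions for all $|S|$ coins (which would be prohibitive when $|S|$ is exponentially large in our applications) by exploiting the sparsity of updates: each update specifies at most $K$ coins explicitly, while every other coin receives the same default increment $V$. Consequently, after $t$ updates there are at most $tK$ coins whose position ever deviates from the cumulative default, so it suffices to maintain an explicit record only for these ``tracked'' coins.

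Concretely, I will maintain a running default total $D := \sum_{s \leq t} V_s$, together with a balanced BST $\cT$ holding the at most $tK$ tracked coins; each $y \in \cT$ stores its offset $o(y) := \pos_\cG(y) - D$. Two indices on $\cT$ are kept: one keyed by coin identity (for point lookups) and one keyed lexicographically by $(o(y), y)$ with the $S$-ordering used as tie-breaker (for order statistics among tracked coins). Since each update value occupies at most $O_K(\log |S|)$ bits, after $t$ updates $D$ and every offset fit in $O_{t,K}(\log |S|)$ bits, so total storage is $O_{t,K}(\log |S|)$, as required.

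To handle $\update_\cG(\cL, V)$, I increment $D$ by $V$ and, for each $(y, v) \in \cL$, either insert $y$ into $\cT$ with offset $v - V$ (if $y$ was previously untracked) or increment $o(y)$ by $v - V$ (if already tracked). The observation justifying that only $O(K)$ entries need touching is that tracked coins outside $\cL$ require no modification: both their positions and $D$ grow by $V$, leaving their offsets unchanged. This yields $O(K)$ BST operations in $O_{t,K}(\log |S|)$ time. A point query $\pos_\cG(y; t)$ is then a single BST lookup returning $D + o(y)$ if $y \in \cT$ and $D$ otherwise.

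The main obstacle is supporting the order-statistic queries $x^t(i)$ and $\posx_\cG(i; t)$ efficiently, because for certain $i$ the answer lies among the vast population of untracked coins (all at position $D$) interleaved with any tracked coins whose offset happens to be $0$. My plan is to split tracked coins by the sign of their offset into $A$ (negative), $B$ (zero), and $C$ (positive), each of size at most $tK$. If $i \le |A|$, read the $i$-th element of $A$ directly from the $(o,y)$-indexed BST. Otherwise the answer has position $D$, and I locate the $(i - |A|)$-th such coin by streaming $S$ in $S$-order via the assumed primitive (smallest $m$ elements in $O(m \log |S|)$ time), skipping any element found to lie in $A \cup C$ by a BST membership test and counting the rest, halting once the count reaches $i - |A|$. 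Since $i \le tK$ and $|A \cup C| \le tK$, at most $2tK$ elements of $S$ are examined, each processed in $O(\log(tK) \log |S|)$ time, giving the desired $O_{t,K}(\log |S|)$ bound. The residual case in which the answer lies in $C$ can arise only when $|S|$ is itself comparable to $tK$ and is handled symmetrically using the $(o,y)$-indexed BST.
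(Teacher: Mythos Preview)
Your proposal is correct and rests on the same core idea as the paper: maintain a single running default $D$ (the paper calls it $U(t)$) together with an explicit record of the at most $tK$ coins that have ever been updated off the default. Point queries and updates are handled identically in spirit.

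The only noteworthy difference is in how the order-statistic query $x^t(i)$ is implemented. The paper simply forms a list consisting of the tracked coins together with the first $tK$ elements of $S$ (assigned value $U(t)$), sorts it, and reads off the $i$-th entry. You instead split the tracked coins by sign of offset and stream $S$ while skipping those with nonzero offset. Both cost $O_{t,K}(\log|S|)$ and both exploit that $i\le tK$; your version is a bit more elaborate (and arguably handles more carefully the edge case where several tracked coins with large position sit among the first few elements of $S$), but this is an implementation detail rather than a different proof strategy.
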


\begin{proof}
    We explicitly build a coin game $\cG$ with coin set $S$ and maximum update size $K$. At every time $t\geq 0$, the data structure stores two things. The first is a list $\cP(t)$ of length $\leq tK$ of pairs $(y\in S, p_y \in \bbN)$ with no redundant coins $y$. The second is a single value $U(t) \in \bbN$.
    
    The list $\cP(0)$ is initialized to the empty list and $U(0)$ is initialized to $0$. 
    
    The query $\pos_\cG(y;t)$ for $y\in S$ checks if there exists a pair $(y,p_y)\in \cP(t)$. If so, it returns $p_y$. If not, it returns $U(t)$. 
    
    The query $x_\cG^t(i)$ for some $i\in [tK]$, creates a new list $\cP'$ as follows. The list $\cP'$ consists of all the pairs in $\cP(t)$, as well as the pairs $(y,U(t))$ for all $y \in S(tK)$ for which there is not a pair $(y, p_y\in \bbN)$ in $\cP(t)$, where $S(tK)$ represents the $tK$ smallest elements of $S$. Then, it sorts $\cP'$ according to the second element of each pair (the value) breaking ties according to the ordering on $S$ (taking time $O(tK)$). Then, if $(y,v)$ is the $i$'th pair in this list, it returns $y$. This procedure returns the coin $y$ with the $i$'th smallest value of $\pos_\cG(y;t)$ without incorporating all $(y\in S, \pos_\cG(y; t))$ into the list, because only the smallest $tK$ values can be queried. The query takes time $O_{t,K}(\log{|S|})$. 
    
    The query $\posx_\cG(i;t)$ is just a definition, so we do not need to prove anything about it.
    
    When at time $t$ the coin game receives $\update_\cG(\cL,V)$ with $\cL$ of index-value pairs $(y,v)$ with $y\in S$ and $v\in \bbN$ with $|\cL|\leq K$ and a value $V\in \bbN$, we create $\cP(t)$ and $U(t)$ as follows. 
    \begin{itemize}
    \item 
        For every $(y, p_y) \in \cP(t-1)$, if a pair $(y, v)$ appears in $\cL$, add the pair $(y, p_y + v)$ to $\cP(t)$. Else, if the pair $(y, v)$ does not appear in $\cL$, add instead the pair $(y, p_y + V)$. 
    \item 
        For all $(y, v)$ in $\cL$ where which there does not exist $(y, p_y) \in \cP(t-1)$, add the pair $(y, U(t-1) + v)$ to $\cP(t)$.
    \item 
        Define $U(t) = U(t-1) + V$.
    \end{itemize}
    Then, it holds that $\pos_\cG(y;t)=\pos_\cG(y;t-1)+v$ if $(y,v) \in \cL$ and $\pos_\cG(y;t)=\pos_\cG(y;t-1)+V = U(t-1)  +V$ otherwise, which is what we wanted. Moreover, there continue to be no redundant pairs, and the process takes time $O_{t,K}(\log{|S|})$.
    
\end{proof}

\subsection{Statement of the Protocol}

\protocol{Feedback ECC for Errors}{error}{
    Fix $\epsilon<0.01$\footnote{We assume $\frac1\epsilon \in \bbN$.} and $k\in \bbN$. Alice has a message in $\{0,1\}^k$ that she wishes to convey to Bob. The protocol consists of $C=\frac{1}{\epsilon}$ \emph{chunks} of $R=\frac{100 \ln L_\epsilon}{\epsilon^3}$ \emph{rounds}. In each round, Bob sends a message of length $2CRL_\epsilon$ to Alice as feedback, and in response Alice sends a message of length $M=M(k)$ to Bob. Before the start of each chunk, Bob additionally sends a message of length $CRL_\epsilon\log{k}+(CRL_\epsilon)^2+2$ called his \emph{prompt}.

    At the start of the protocol, Bob initializes a coin game $\main$ with $S_\main:=\{0,1\}^k$ and the selected constant $K_\main:=CRL_\epsilon$. At the end of each chunk, Bob updates the state of $\main$ using the function $\update_\main$, so that, for example, the position of coin $y$ after chunk $t$ is accessible as $\pos_{\main}(y; t)$. At the end of the protocol, Bob outputs $x_\main^{C}(1)$.

    We begin by describing the prompt Bob sends at the start of each chunk $t$.
    
    \paragraph{Bob's prompt:} 
    
    The first two bits of Bob's prompt tell Alice which case below to enter for the duration of the chunk, and the other bits provide her with other necessary information. He determines his prompt as follows:
    
    \begin{itemize}
    \item 
        If $\posx_\main(3;t-1)>\frac13 \cdot CRM$, Bob computes the first index $a\in[k]$ where $x_\main^{t-1}(1)[a]\neq x_\main^{t-1}(2)[a]$ differ. Then, he sends $00a0^*$\footnote{In general, when we write $0^*$, we mean that the remainder of Bob's feedback is padded with $0$'s until it is the correct length.} (where $a$ is written in binary), so that Alice knows to enter Case~\ref{case:1} below and also knows the index $a$.
    \item 
        Othewise, if $\posx_\main(3;t-1)>\posx_\main((t-1) RL_\epsilon+1;t-1)-\epsilon CRM$, then Bob sends $010^*$ so Alice knows to enter Case~\ref{case:2} below.
    \item
        Otherwise, he sends $10 || \partition(x_\main^{t-1}(1),\ldots, x_\main^{t-1}((t-1) RL_\epsilon))$.
    \end{itemize}
    
    \paragraph{The chunk:}
    
    The prompt dictates the parties' behavior throughout the rest of chunk $t$.

    \begin{caseofb}
    \caseb{Bob's prompt was $00a0^*$ for some index $a\in \{ 0, 1 \}^{\log k} = [k]$:\footnote{We remark that this case was Case 3 in the technical overview, and that Cases 2 and 3 here are Cases 1 and 2 in the technical overview.}}{\label{case:1}
        \begin{itemize}
        \item {\bf Alice's messages and Bob's feedback:} 
            Alice sends the bit $x[a]$ as all her bits in the chunk, ignoring Bob's feedback entirely. Bob sends the feedback $0^*$ at the start of every round. 
        \item {\bf Bob's update to $\main$: } 
            At the end of the chunk, Bob has received a message $m \in \{ 0, 1 \}^{RM}$ from Alice. Let $b_1$ be the number of bits in $m$ not equal to $x_\main^{t-1}(1)[a]$ and $b_2$ be the number of bits in $m$ not equal to $x_\main^{t-1}(2)[a]$. Bob creates the list
            \[
                \cL := [(x_\main^{t-1}(1),b_1),(x_\main^{t-1}(2),b_2)].
            \]
            and performs $\update_\main(\cL,0)$.
        \end{itemize}
    }
    \caseb{Bob's prompt was $010^*$:} {\label{case:2}
        \begin{itemize}
        \item {\bf Alice's messages and Bob's feedback:} 
            Alice sends $\ECC(x) \in \{ 0, 1 \}^{RM}$ using all $R$ rounds in the chunk. More precisely, in round $\tau$ of the chunk, she sends $\ECC(x)[(\tau-1)M+1:\tau M]$, and ignores Bob's feedback. Bob sends the feedback $0^*$ at the start of every round. 
        \item {\bf Bob's update to $\main$: } 
            After the entire chunk, Bob has received the message $m \in \{ 0, 1 \}^M$ as length $RM$ concatenation of what Alice sent throughout the chunk. He updates his state as follows. First, he computes (via list-decoding, see Theorem~\ref{thm:ecc}) the list $\cL$ of size $\le L_\epsilon$ of all $(y\in \{0,1\}^k,\Delta(\ECC(y),m))$ where $y$ is such that $\Delta(\ECC(y),m)\leq \left(\frac12-2\epsilon\right)\cdot RM$. Then, he performs $\update_\main(\cL,\left(\frac12-2\epsilon\right) \cdot RM)$.
        \end{itemize}
    }
    \caseb{Bob's prompt was $10s10^*$ for $s \in \{0,1\}^*$:} {\label{case:3}
        Alice and Bob play the coin minigame for the rest of the chunk. Bob instantiates a new coin game $\mini$, which we call the coin minigame, on the set $S_\mini:=[(t-1)RL_\epsilon]$ and the constant $K_\mini:=(t-1)RL_\epsilon$. For each round $\tau \in [R]$, we describe Bob's feedback, Alice's message, and Bob's update to $\mini$. We also define Bob's update to $\main$ at the end of the chunk.
    
        \begin{itemize}
        \item {\bf Bob's feedback:} 
            Bob defines $f^\tau: [(t-1)RL_\epsilon] \to \{1,2,3\}$ such that $f^\tau(x^{\tau-1}_\mini(1))=1$, $f^\tau(x^{\tau-1}_\mini(2,4,6\ldots))=2$, and $f^\tau(x^{\tau-1}_\mini(3,5,7\ldots))=3$. He sends this function $f^\tau$ to Alice, by listing the output of $f^\tau$ on each of $[(t-1)RL_\epsilon]$ in order (which takes at most $2CRL_\epsilon$ bits).
            \item {\bf Alice's Message:} Alice receives the function $f^\tau$ and sends the message $\ECC[s,f^\tau](x)$.
        \item {\bf Bob's update to coin minigame:} 
            In each round, Bob receives a corrupted message $m\in \{0,1\}^M$ from Alice. Bob computes a list $\Lambda_\tau$ with $|\Lambda_\tau|\leq L_\epsilon$ of $y\in \{0,1\}^k$ such that $f^\tau(\partdec(s,y))=1$ and $\Delta(\ECC[s,f^\tau](y),m))\leq \left(\frac13-2\epsilon\right)M$ using the list-decoding property of $\ECC[s,f^\tau]$ in Definition~\ref{def:ecc3}. For each $i\in [(t-1)RL_\epsilon]$, Bob adds $(i,v^\tau(i))$ to a list $\cL$, where 
            \[ 
                v^\tau(i)=
                \begin{cases}
                    \min_{y\in\Lambda_\tau}\Delta(\ECC[s,f^\tau](y),m)  & f^\tau(i)=1 ~\text{and}~ \Lambda_\tau \not= \emptyset\\
                    \left( \frac23 - \epsilon \right) \cdot M - \min \{ d_2^\tau, d_3^\tau \} & f^\tau(i)=1 ~\text{and}~ \Lambda_\tau = \emptyset\\
                    d_{f^\tau(i)}^\tau & f^\tau(i)\neq 1,
                \end{cases}
            \]
            where $d_2^\tau := \Delta(\ECC[s, f^\tau](x_\main^{t-1}(2)), m)$ and $d_3^\tau := \Delta(\ECC[s, f^\tau](x_\main^{t-1}(3)), m)$.
            
            He then performs $\update_\mini(\cL,M)$.
            
        \item {\bf Bob's update to main coin game:} 
            At the end of the chunk, Bob updates $\main$. He creates a list $\cL$ of size at most $tRL_\epsilon$ as follows. For each $i\in[(t-1)RL_\epsilon]$, Bob adds $(x_\main^{t-1}(i),\pos_\mini(i;R))$ to $\cL$. For each $y\in \bigcup_\tau \Lambda_\tau$, Bob adds $(y,\pos_\mini(\partdec(s,y);R))$ to $\cL$. He deletes any redundant pairs in $\cL$. Then, he performs $\update_\main(\cL, \left(\frac13-4\epsilon\right)RM)$.
        \end{itemize}
    
    }
    \end{caseofb}

}
\subsection{Analysis}

We now state our main error resilience theorem.

\begin{theorem} \label{thm:error-feedback-ecc}
    Protocol~\ref{prot:error} is resilient to a $\left(\frac13-6\epsilon\right)$-fraction of adversarial corruption. Moreover, the communication complexity for Alice is $O_\epsilon(k)$ and the feedback complexity for Bob is $O_\epsilon(\log{k})$ and he speaks in $O_\epsilon(1)$ rounds. The computational complexity for both parties is $\tilde{O}_\epsilon(k)$.
\end{theorem}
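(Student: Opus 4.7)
The plan is to analyze the main coin game $\main$ via a Spencer-Winkler-style potential function and verify that Bob's decoding is correct. Specifically, we will show that at the end of chunk $C$ exactly one coin $y \in \{0,1\}^k$ satisfies $\pos_{\main}(y; C) \leq \bigl(\tfrac{1}{3} - 6\epsilon\bigr) CRM$. Since Eve is limited to $\bigl(\tfrac{1}{3} - 6\epsilon\bigr) CRM$ total bit flips, an easy induction on chunks---case-by-case using the three explicit update rules---maintains $\pos_{\main}(x; t) \leq (\text{corruption used through chunk } t)$ for the true input $x$; combined with the uniqueness claim, this forces $x_{\main}^{C}(1) = x$. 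The claims on communication ($CRM = O_\epsilon(k)$ bits from Alice), feedback ($CR \cdot 2CRL_\epsilon + C(CRL_\epsilon \log k + (CRL_\epsilon)^2 + 2) = O_\epsilon(\log k)$ bits over $CR = O_\epsilon(1)$ rounds), and running time ($\tilde{O}_\epsilon(k)$ for each step, by Theorem~\ref{thm:ecc}, Claim~\ref{claim:ecc3}, and Claim~\ref{claim:coin-game}) then follow directly from the protocol's declared parameters $C = 1/\epsilon$ and $R = O_\epsilon(1)$.

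To prove uniqueness, we track the potential
\[
\Phi(t) := \posx_{\main}(1; t) + \posx_{\main}(2; t) + \min\bigl\{\posx_{\main}(3; t),\ \bigl(\tfrac{1}{3} - 6\epsilon\bigr) CRM\bigr\}
\]
and establish $\Phi(t) - \Phi(t-1) \geq (1 - O(\epsilon)) RM$ in every chunk. Summing over $C$ chunks gives $\Phi(C) \geq (1 - O(\epsilon)) CRM$, which forces $\posx_{\main}(2; C) > (\tfrac{1}{3} - 6\epsilon) CRM$, so only one coin can lie below the threshold. Per-chunk growth splits by case. In Case~\ref{case:1}, triggered precisely when $\posx_{\main}(3; t-1)$ is already past $\tfrac{1}{3} CRM$, the index $a$ separates the current top two coins, so Alice's $RM$ output bits split as $b_1 + b_2 = RM$, and a short case analysis shows the top two sum grows by at least $RM$ while the capped third term remains at $(\tfrac{1}{3} - 6\epsilon) CRM$. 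In Case~\ref{case:2}, triggered when $\posx_{\main}(3; t-1)$ and $\posx_{\main}((t-1) R L_\epsilon + 1; t-1)$ differ by less than $\epsilon CRM$, list decoding of $\ECC$ via Theorem~\ref{thm:ecc} ensures that all but $L_\epsilon$ coins receive increment $(\tfrac{1}{2} - 2\epsilon) RM$, and combining this with the trigger's near-equality of the $3$-rd and $L$-th positions yields the required $\Phi$-growth. Case~\ref{case:3} is handled by the minigame, as described next.

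The central obstacle is Case~\ref{case:3}. Two ingredients combine: first, a minigame claim (whose proof, carried out in Section~\ref{sec:minigame-analysis}, is the main difficulty) that after $R$ rounds the two lowest-ranked sets of $\mini$ satisfy $\posx_{\mini}(1; R) + \posx_{\mini}(2; R) \geq \bigl(\tfrac{2}{3} - O(\epsilon)\bigr) RM$; and second, list decoding of the special code $\ECC[s, f^\tau]$ from Claim~\ref{claim:ecc3}, which guarantees that every coin outside the lists $\bigcup_\tau \Lambda_\tau$ has its $\main$-position grow by $\geq (\tfrac{1}{3} - 2\epsilon) RM$ during the chunk. Because Case~\ref{case:3} triggers only when $\posx_{\main}((t-1) R L_\epsilon + 1; t-1)$ exceeds $\posx_{\main}(3; t-1)$ by at least $\epsilon CRM$, the first three coins of $\main$ at time $t-1$ belong to three distinct sets of the partition $\partition(x_{\main}^{t-1}(1), \ldots, x_{\main}^{t-1}((t-1) R L_\epsilon))$, so the minigame guarantees transfer to $\main$ and combine with the list-decoding bound to give $\sum_{i=1}^{3} \posx_{\main}(i; t) - \sum_{i=1}^{3} \posx_{\main}(i; t-1) \geq (1 - O(\epsilon)) RM$, implying the target $\Phi$-growth. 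We anticipate the hardest step is the minigame claim itself, since Eve has no single dominant corruption strategy; following Section~\ref{sec:overview-minigame}, we would pass to a derivative pair-game whose coins are pairs of $\mini$-coins (with positions summed), and run an amortization in which every pair that lags $<\tfrac{2}{3} M$ in a round is offset by another pair of smaller initial position that overgrows in the same round.
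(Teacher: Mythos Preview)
Your high-level structure matches the paper's: complexity counts, the corruption lower-bound invariant, the minigame guarantees, and the pair-game amortization are all correct in outline. But the per-chunk claim $\Phi(t)-\Phi(t-1)\ge(1-O(\epsilon))RM$ is \emph{false} in Case~\ref{case:2}, and this is exactly the obstacle that drives the paper's different bookkeeping.

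When Alice sends a relative-distance-$\tfrac12$ code of length $RM$, the adversary can pick a received word $m$ for which three codewords are each at distance $\approx\tfrac14 RM$ from $m$ (for instance, three codewords at pairwise distance exactly $\tfrac12 RM$ and $m$ their coordinatewise majority). Then the three smallest coins each gain only $\approx\tfrac14 RM$, so $\sum_{i\le 3}\posx_\main(i;t)-\sum_{i\le 3}\posx_\main(i;t-1)\approx\tfrac34 RM$. The technical overview flags this explicitly. Your proposed repair, ``the trigger's near-equality of the $3$rd and $L$th positions,'' is a statement about \emph{absolute positions at time $t-1$}, not about increments; it does nothing for the per-chunk growth of $\Phi$. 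Concretely, at chunk $t=1$ all positions are $0$, the near-equality holds vacuously, Case~\ref{case:2} fires, and the adversary forces $\Phi(1)\approx\tfrac34 RM$.

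The paper does not attempt a per-chunk bound on $\Phi$. Instead it tracks the auxiliary quantity
\[
\posx_\main(1;t)+\posx_\main(2;t)+\posx_\main(tRL_\epsilon+1;t),
\]
with the third index \emph{growing} in $t$ (Lemma~\ref{lem:1-sum}). This auxiliary sum does gain $(1-O(\epsilon))RM$ per chunk in both Case~\ref{case:2} ($\tfrac12 RM$ from the first two terms, $\tfrac12 RM$ from list decoding on the third) and Case~\ref{case:3} ($\tfrac23 RM$ from the minigame, $\tfrac13 RM$ from list decoding of $\ECC[s,f^\tau]$). The trigger condition is then invoked \emph{non-inductively}: whenever a Case~\ref{case:2} chunk occurs, the near-equality lets one replace the $(t-1)RL_\epsilon{+}1$ coin by the $3$rd coin at a one-time cost of $\epsilon CRM$, yielding the bound on $\sum_{i\le 3}\posx_\main(i;t)$ directly (Lemma~\ref{lem:123sum}). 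For Case~\ref{case:3} chunks the $3$-coin sum is handled inductively via Lemma~\ref{lem:coin-minigame-chunk}\ref{item:minigame-chunk-123-1}. Without this auxiliary potential and the non-inductive use of the trigger, the Case~\ref{case:2} analysis does not close.
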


We split the proof of Theorem~\ref{thm:error-feedback-ecc} into three parts. In Section~\ref{sec:analysis-complexity}, we discuss the communication and computational complexity of Protocol~\ref{prot:error}. Next, we establish in Section~\ref{sec:analysis-coingame} that at every point in time, the positions of coins in the coin game $\main$ is a lower bound on the amount of errors the adversary has actually caused if Alice's input were a given coin, and also some lower bounds on how much the coins move in individual chunks. A crucial part of this is analysis of the coin minigame, which we defer to Section~\ref{sec:minigame-analysis}. Finally, in Section~\ref{sec:analysis-error}, we put together these claims and analyze the error resilience of Protocol~\ref{prot:error}.


\subsubsection{Communication and Computational Complexity} \label{sec:analysis-complexity}

\begin{lemma}
    In Protocol~\ref{prot:error}, the communication complexity for Alice is $O_\epsilon(k)$ and the communication complexity for Bob is $O_\epsilon(\log{k})$. The computational complexity for both parties is $\tilde{O}_\epsilon(k)$.
\end{lemma}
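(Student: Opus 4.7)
The plan is a direct bookkeeping argument: I will sum the per-round and per-chunk contributions, exploiting that the structural parameters $C = 1/\epsilon$, $R = 100\ln L_\epsilon/\epsilon^3$, $L_\epsilon$, and hence $K_\main = CRL_\epsilon$ and $K_\mini \le (t-1)RL_\epsilon$ are all $O_\epsilon(1)$, independent of $k$.

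First I will tally the bit complexities. Alice sends $M = 3N_0(k) = O_\epsilon(k)$ bits in each of the $CR = O_\epsilon(1)$ rounds, for $O_\epsilon(k)$ total. For Bob I will separately account for the $C$ prompts of length $CRL_\epsilon\log k + (CRL_\epsilon)^2 + 2 = O_\epsilon(\log k)$ and the $CR$ per-round feedback messages of length $2CRL_\epsilon = O_\epsilon(1)$; summing these gives $O_\epsilon(\log k)$ bits spread over $O_\epsilon(1)$ speaking rounds.

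Next I will handle the computational complexity by observing that every primitive invoked by the protocol comes with a $\tilde{O}_\epsilon(k)$ time bound: encoding and list decoding of $\ECC$ (Theorem~\ref{thm:ecc}) and of $\ECC[s,f^\tau]$ (Claim~\ref{claim:ecc3} properties~\ref{item:ecc3-encodable} and~\ref{item:ecc-1/3-list-decoding}); the partition routines $\partition$ and $\partdec$ (Definition~\ref{def:partition}); and the coin game operations, which by Claim~\ref{claim:coin-game} cost $O_{t,K}(\log|S|)$ per query/update---namely $O_\epsilon(k)$ for every operation on $\main$ and $O_\epsilon(1)$ on each instance of $\mini$. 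Since Alice performs $O_\epsilon(1)$ encodings in total, her work is $\tilde{O}_\epsilon(k)$. Bob performs $O_\epsilon(1)$ list decodings, $O_\epsilon(1)$ invocations of $\partition$, and $O_\epsilon(1)$ coin game operations (the end-of-chunk list $\cL$ in Case~\ref{case:3} has size at most $tRL_\epsilon = O_\epsilon(1)$ and is assembled from $O_\epsilon(1)$ queries to $\main$ and to $\mini$); summing yields $\tilde{O}_\epsilon(k)$ for Bob as well.

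I do not anticipate a real obstacle. The only mildly subtle point is to notice that every coin game operation on $\main$ scales with $\log|S_\main| = k$ rather than $|S_\main| = 2^k$, which is exactly the guarantee provided by the abstraction of Claim~\ref{claim:coin-game}; once that is observed, the rest of the argument is routine arithmetic with the $O_\epsilon(1)$ parameters.
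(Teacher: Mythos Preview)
Your proposal is correct and follows essentially the same approach as the paper: a direct bookkeeping argument that sums the per-round and per-chunk contributions, using that all structural parameters are $O_\epsilon(1)$ and that each primitive (encoding, list decoding, $\partition$, coin game operations) comes with the required $\tilde{O}_\epsilon(k)$ guarantee. The paper's proof walks through the three cases of the protocol slightly more explicitly, but the substance is identical.
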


\begin{proof}
    First, we show the communication complexity claims. Alice sends $CR$ rounds of $M$ bits. In total, the number of bits sent is 
    \[
        CRM = O_\epsilon(1)\cdot M = O_\epsilon(k).
    \]
    Next, we show the communication complexity for Bob is $O_\epsilon(\log{k})$. The number of bits sends per prompt (every chunk) is $CRL_\epsilon\log{k}+(CRL_\epsilon)^2+2$ (we recall that he pads his message with $10^*$ to keep message sizes the same), and the number of bits he sends per round is $2CRL_\epsilon$. In total, the number of bits is 
    \[
        (CRL_\epsilon\log{k}+(CRL_\epsilon)^2+2) \cdot C
        + 2CRL_\epsilon \cdot CR
        = O_\epsilon(\log{k}) + O_\epsilon(1) = O_\epsilon(\log{k}).
    \]
    
    Finally, we show that the computational complexity for both parties is $\tilde{O}_\epsilon(k)$. To do this, we go through the protocol in steps and show that individual steps have computational complexity $\tilde{O}_\epsilon(k)$.
    
    First, Bob initializes the coin game $\main$ which takes time $O_\epsilon(k)$. In general, all the operations to $\main$ take time $O_{CRL_\epsilon}(k)=O_\epsilon(k)$, so we do not discuss them individually.
    
    Next, he decides what prompt to send at the start of each chunk $t$. After querying $\main$, he might need to compute $\partition$ which takes time $O_{(t-1)RL_\epsilon}(k)=O_{\epsilon}(k)$.
    
    In Case~\ref{case:1} of the protocol, Alice can determine and send her bit in $O_\epsilon(k)$ time, and Bob determining the number of $0$'s and $1$'s and updating $\main$ takes $O_\epsilon(k)$ time as well.
    
    In Case~\ref{case:2}, Alice needs to encode $\ECC(x)$ which we know by Definition~\ref{def:ecc} takes $\tilde{O}_\epsilon(k)$ time, and Bob must decode this into a list of size $L_\epsilon$ which also by Definition~\ref{def:ecc} takes $\tilde{O}_\epsilon(k)$ time. Then he updates the coin game.
    
    In Case~\ref{case:3}, Alice and Bob play the coin minigame initializing $\mini$, whose operations all have complexity $O_\epsilon(1)$, because the size of the set $S_\mini$ and $K_\mini$ are both at most $CRL_\epsilon$. In each round, Bob computes his feedback, which requires $(t-1)RL_\epsilon$ queries to the minigame, and then describing the input-output pairs of a function $f^\tau : [(t-1)RL_\epsilon] \rightarrow \{ 1, 2, 3 \}$ which he sends to Alice (thus also taking time $O_\epsilon(1)$). Alice computes the message $\ECC[s,f^\tau](x)$, where $s$ and $f^\tau$ are defined by Bob's prompt and round feedback, which takes time $\tilde{O}_\epsilon(k,|s|)=\tilde{O}_\epsilon(k)$. Bob's update to $\mini$ requires him to list-decode $\ECC[s,f_\tau]$ which he can do in time $\tilde{O}_\epsilon(k)$, and otherwise just perform a series of additions to a constant sized list $\cL$. Finally, his update to the main coin game at the end of the chunk only involves a constant sized list and a single update to $\main$, and therefore takes time $O_\epsilon(k)$.
\end{proof}

\subsubsection{Coin Updates in Individual Chunks} \label{sec:analysis-coingame}

Now, we proceed to a series of lemmas that will help us prove the protocol is resilient to $\left(\frac13-O(\epsilon)\right)$ fraction of adversarial corruption. These lemmas provide lower bounds on how much the coins update in each individual chunk based on whether Alice and Bob entered Case~\ref{case:1}, Case~\ref{case:2}, and Case~\ref{case:3}. Moreover, we show for any coin $y$ and chunk $t$, that $\pos_\main(y; t)$ is always a lower bound on the amount of corruption the adversary caused if Alice's input were actually $y$.

\begin{lemma} \label{lem:single-bit}
    In any chunk $t\in [C]$ where Alice and Bob entered Case~\ref{case:1}:
    
    \begin{enumerate}[label={(\roman*)}]
    \item At the end of the chunk, for any coin $y \in \{ 0, 1 \}^k$, 
    \[
        \pos_\main(y; t) - \pos_\main(y; t-1) 
    \]
    is a lower bound on the amount of corruption the adversary has caused if Alice's input were actually $y$.
    \item As long as $\posx_\main(2;t-1)<\left(\frac13-\epsilon\right)\cdot RM$, we have
    \[
        \posx_\main(1;t)+\posx_\main(2;t)-(\posx_\main(1;t-1)+\posx_\main(2;t-1)) \geq RM.
    \]
    \end{enumerate}
\end{lemma}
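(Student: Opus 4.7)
The plan is to analyze Bob's update to $\main$ in Case~\ref{case:1} directly from the update rule. Write $A := x_\main^{t-1}(1)$ and $B := x_\main^{t-1}(2)$ for the two coins that Bob singled out at the start of the chunk. The update $\update_\main([(A, b_1), (B, b_2)], 0)$ increases $\pos_\main(A)$ by $b_1$, increases $\pos_\main(B)$ by $b_2$, and leaves every other coin's position unchanged; this is the only fact about the update that the proof needs.

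For part (i), I would handle the three kinds of coin separately. If $y = A$, then had Alice's input been $y$ she would have transmitted the constant string $(A[a])^{RM}$ over the chunk, and by the definition of $b_1$ the adversary must have flipped exactly $b_1$ of those bits to produce the received string $m$; hence the position increment $b_1$ is in fact an equality, and a fortiori a lower bound. The same reasoning applies verbatim to $y = B$. For every other $y$, the update leaves $\pos_\main(y;t) - \pos_\main(y;t-1) = 0$, which is trivially a lower bound on the (nonnegative) number of corruptions.

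For part (ii), the key arithmetic identity is
\[
    b_1 + b_2 \;=\; RM,
\]
which holds because Bob chose $a$ with $A[a] \neq B[a]$, so each bit of $m \in \{0,1\}^{RM}$ disagrees with exactly one of the two designated bits. Hence $\pos_\main(A;t) + \pos_\main(B;t) = \posx_\main(1;t-1) + \posx_\main(2;t-1) + RM$. What is left is to verify that $A$ and $B$ still occupy the two smallest positions at time $t$, so that this sum equals $\posx_\main(1;t) + \posx_\main(2;t)$. For this I invoke the Case~\ref{case:1} precondition $\posx_\main(3;t-1) > CRM/3 = RM/(3\epsilon)$, combined with the hypothesis $\posx_\main(2;t-1) < \left(\tfrac13 - \epsilon\right) RM$ and the trivial bound $b_1, b_2 \leq RM$, to conclude $\pos_\main(A;t), \pos_\main(B;t) < 2RM \ll RM/(3\epsilon) \leq \posx_\main(3;t-1)$ for $\epsilon < 0.01$. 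Since every coin other than $A$ and $B$ has the same position at time $t$ as at time $t-1$, and those positions are all at least $\posx_\main(3;t-1)$, no other coin can overtake $A$ or $B$. The only real obstacle in the argument is this last comparison, but it reduces to a clean numerical check that holds comfortably under the paper's standing assumption $\epsilon < 0.01$.
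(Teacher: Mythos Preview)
Your proposal is correct and follows essentially the same approach as the paper: for (i) you split into the two explicitly updated coins (where the increment equals the exact corruption count) and the rest (where the increment is $0$), and for (ii) you use $b_1+b_2=RM$ from $A[a]\neq B[a]$ together with the gap $\posx_\main(3;t-1)>\tfrac13 CRM$ to argue the first two coins cannot be overtaken. The only cosmetic difference is that the paper bounds $\pos_\main(A;t),\pos_\main(B;t)$ more tightly by $\posx_\main(2;t-1)+RM<\tfrac13 CRM$ directly (using $\epsilon C=1$), whereas you go via the cruder $<2RM\ll RM/(3\epsilon)$; both work under $\epsilon<0.01$.
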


\begin{proof}
    At the end of the chunk $t$, let us say Bob has received the length $RM$ message $m$. If Alice's true value was $y = x_\main^{t-1}(1)$, then she would have sent a message consisting of all the bit $y[a]$, and $\pos_\main(y; t) - \pos_\main(y; t-1)$ is the number of messages that were not equal to this bit. The same applies for $x_\main^{t-1}(2)$. In all other cases, $\pos_\main(y; t) - \pos_\main(y; t-1)$ is $0$ which is always a lower bound on the amount of corruption Eve could have caused in a chunk.
    
    For the second claim, we see that at the start of the chunk, it held that $\posx_\main(3;t-1)>\frac13\cdot CRM$. Therefore, assuming that $\posx_\main(2;t-1)<\left(\frac13-\epsilon\right)\cdot CRM = \frac13 \cdot CRM - RM$ means that the first two coins will continue to be the first two coins after the chunk (since no coin can update by more than $RM$ in a chunk). Both of these coins have their positions updated explicitly at the end of the chunk, by the number of bits of the received message $m$ that were not $x_\main^{t-1}(1)[a]$ and not $x_\main^{t-1}(2)[a]$ respectively. Since $x_\main^{t-1}(1)[a]\neq x_\main^{t-1}(1)[a]$ by the definition of the index $a$, these two quantities sum to the total length of $m$ which is $RM$. Therefore, the combined amount their positions increase is (exactly) $RM$ as desired.
\end{proof}

\begin{lemma} \label{lem:ecc-chunk}
    In any chunk $t\in [C]$ where Alice and Bob entered Case~\ref{case:2}:
    
    \begin{enumerate}[label={(\roman*)}]
    \item At the end of the chunk, for any coin $y \in \{ 0, 1 \}^k$, 
    \[
        \pos_\main(y; t) - \pos_\main(y; t-1) 
    \]
    is a lower bound on the amount of corruption the adversary has caused if Alice's input were actually $y$.
    \item It holds that
        \begin{align*}
            &~\posx_\main(1; t) + \posx_\main(2; t) \\
            - &~ \big( \posx_\main(1; t-1) + \posx_\main(2; t-1) \big) \\
            \ge &~\left( \frac12 - 2\epsilon \right)\cdot RM.
        \end{align*}
    \end{enumerate}
\end{lemma}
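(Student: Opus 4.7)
The plan is to unpack the update $\update_\main(\cL, (1/2 - 2\epsilon) RM)$ executed at the end of the chunk, which increments each $y \in \cL$ by $\Delta(\ECC(y), m)$ and each $y \notin \cL$ by exactly $(1/2 - 2\epsilon) RM$, and then to handle (i) and (ii) in turn.

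For (i), I would observe that if Alice's input were $y$ she would transmit $\ECC(y)$ over the entire chunk, so Eve's true corruption for that hypothetical input is exactly $\Delta(\ECC(y), m)$. For $y \in \cL$ the update matches this corruption; for $y \notin \cL$ the list-decoding condition defining $\cL$ gives $\Delta(\ECC(y), m) > (1/2 - 2\epsilon) RM$, so the update is at most Eve's true corruption. Hence $\pos_\main(y;t) - \pos_\main(y;t-1)$ is always a lower bound on Eve's spending in the chunk, establishing (i).

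For (ii), I would set $a_1 := \posx_\main(1;t-1) \le a_2 := \posx_\main(2;t-1)$ and let $v_1 \ne v_2$ denote the two coins attaining the smallest post-chunk positions, with pre-chunk positions $p_1, p_2$ and chunk updates $\delta_1, \delta_2 \ge 0$. Since $a_1 + a_2$ is, by definition, the minimum over distinct coin pairs of the sum of their positions, we get $p_1 + p_2 \ge a_1 + a_2$ for free, giving
\[
    \posx_\main(1;t) + \posx_\main(2;t) = p_1 + p_2 + \delta_1 + \delta_2 \ge a_1 + a_2 + \delta_1 + \delta_2.
\]
It then suffices to show $\delta_1 + \delta_2 \ge (1/2 - 2\epsilon) RM$. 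If one of $v_1, v_2$ lies outside $\cL$, say $v_1$, then $\delta_1 = (1/2 - 2\epsilon) RM$ by the update rule and $\delta_2 \ge 0$, and we are done. Otherwise both $v_1, v_2 \in \cL$, so $\delta_i = \Delta(\ECC(v_i), m)$, and the triangle inequality combined with the relative distance $(1-\epsilon)/2$ of $\ECC$ from Theorem~\ref{thm:ecc} gives
\[
    \delta_1 + \delta_2 \ge \Delta(\ECC(v_1), \ECC(v_2)) \ge \frac{1-\epsilon}{2} \cdot RM \ge \left( \frac{1}{2} - 2\epsilon \right) RM.
\]

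The main subtlety I expect is the ``both $v_i \in \cL$'' branch of (ii): it crucially uses that the code's relative distance $(1-\epsilon)/2$ exceeds the target $(1/2 - 2\epsilon)$ required by the statement, which holds comfortably since $\epsilon/2 \le 2\epsilon$. Everything else is a direct consequence of the update rule and the standard list-decoding properties of $\ECC$.
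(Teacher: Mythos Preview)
Your proposal is correct and follows essentially the same approach as the paper. Part (i) is identical; for part (ii) the paper states the slightly stronger claim that \emph{every} pair of distinct coins has its combined update lower-bounded by $(1/2-2\epsilon)RM$ (and proves it by the same case split and triangle-inequality argument you give), whereas you first reduce to the specific pair $(v_1,v_2)$ realizing the post-chunk minimum via the clean observation $p_1+p_2 \ge a_1+a_2$. The core content---case on membership in $\cL$, then invoke the $(1-\epsilon)/2$ relative distance of $\ECC$---is the same.
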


\begin{proof}
    At the end of the chunk $t$, let us say Bob has received the message $m$ as a length $RM$ concatenation of the messages in each round of the chunk. Bob updated the state of $\main$ by computing the list of values $\cL$ such that $\Delta(\ECC(y),m)\leq \left(\frac12-2\epsilon\right) RM$. For those coins $y\in \cL$, 
    \[
         \pos_\main(y; t) - \pos_\main(y; t-1) = \Delta(\ECC(y),m)
    \]
    and therefore is a lower bound on the corruption the adversary must have caused if Alice sent $\ECC(y)$.
    
    For the other coins $y\in\{0,1\}^k$, 
    \[
         \pos_\main(y; t) - \pos_\main(y; t-1) = \left(\frac12-2\epsilon\right) \cdot RM < \Delta(\ECC(y),m)
    \]
    and therefore is a lower bound on the corruption the adversary must have caused if Alice sent $\ECC(y)$.
    
    To show the second claim, it suffices to show that for any two coins $y$ and $y'$, 
    \begin{align*}
        &~\posx_\main(y; t) + \posx_\main(y'; t) \\
        - &~ (\posx_\main(y; t-1) + \posx_\main(y'; t-1)) \\
        \ge &~\left( \frac12 - 2\epsilon \right)\cdot RM.
    \end{align*}
    
    If this were false, then when Bob updated the values of both coins $y$ and $y'$, both values would have had to be a part of the list $\cL$ of explicit updates. But by the distance condition of Definition~\ref{def:ecc}, it holds that
    \[
        \Delta(\ECC(y),m)+\Delta(\ECC(y'),m) \geq \Delta(\ECC(y),\ECC(y')) \geq \left(\frac12-\epsilon\right)\cdot RM
    \]
    
    so this could not have happened.
\end{proof}

\begin{restatable}{lemma}{lemminigame}\label{lem:coin-minigame-chunk}
In any chunk $t\in[C]$ where Alice and Bob entered Case~\ref{case:3}:
    \begin{enumerate}[label={(\roman*)}]
    \item \label{item:minigame-chunk-lb-corruption}
        At the end of the chunk, for any coin $y \in \{ 0, 1 \}^k$, 
        \[
            \pos_\main(y; t) - \pos_\main(y; t-1)
        \]
        is a lower bound on the amount of corruption the adversary has caused if Alice's input were actually $y$.
    \item \label{item:minigame-chunk-12-2/3} 
        It holds that 
        \begin{align*}
            &~\posx_\main(1; t) + \posx_\main(2; t) \\
            - &~ \big( \posx_\main(1; t-1) + \posx_\main(2; t-1) \big) \\
            \ge &~\left( \frac23 - 4\epsilon \right)\cdot RM.
        \end{align*}
    \item \label{item:minigame-chunk-123-1}
        It holds that 
        \begin{align*}
            &~\posx_\main(1; t) + \posx_\main(2; t) + \posx_\main(3; t) \\
            - &~ \big(\posx_\main(1; t-1) + \posx_\main(2; t-1) + \posx_\main(3; t-1) \big) \\
            \ge &~( 1 - 6\epsilon )\cdot RM.
        \end{align*}
    \end{enumerate}
\end{restatable}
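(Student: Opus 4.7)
The plan is to prove the three parts separately: part~\ref{item:minigame-chunk-lb-corruption} by a direct case analysis of Bob's update rule, and parts~\ref{item:minigame-chunk-12-2/3}--\ref{item:minigame-chunk-123-1} by reducing to two key coin-minigame bounds to be proved in Section~\ref{sec:minigame-analysis}.

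For part~\ref{item:minigame-chunk-lb-corruption}, I fix $y\in\{0,1\}^k$ and aim to show $\pos_\main(y;t)-\pos_\main(y;t-1)\le\sum_{\tau=1}^R\Delta(\ECC[s,f^\tau](y),m_\tau)$, which is precisely the budget Eve would pay this chunk if Alice's input were $y$. I split on whether $y\in\cL$. If $y\in\cL$, then the update equals $\pos_\mini(\partdec(s,y);R)=\sum_\tau v^\tau(\partdec(s,y))$, and it suffices to check $v^\tau(\partdec(s,y))\le\Delta(\ECC[s,f^\tau](y),m_\tau)$ round-by-round. The three branches of the definition of $v^\tau$ are handled via (a) list-decoding (when $f^\tau(\partdec(s,y))=1$ and $\Lambda_\tau\ne\emptyset$: either $y\in\Lambda_\tau$ realizes the min, or else $\Delta(\ECC[s,f^\tau](y),m_\tau)>(\tfrac13-2\epsilon)M$ dominates the list-decoding cap); (b) triangle inequality against $x_\main^{t-1}(2)$ and $x_\main^{t-1}(3)$ using Definition~\ref{def:ecc3}\ref{item:ecc-2/3} (when $\Lambda_\tau=\emptyset$); and (c) Definition~\ref{def:ecc3}\ref{item:2,3-equal} plus triangle inequality (when $f^\tau(\partdec(s,y))\in\{2,3\}$). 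If $y\notin\cL$, the default $(\tfrac13-4\epsilon)RM$ update applies; since $y\notin\Lambda_\tau$ in every round, the $f^\tau=1$ rounds automatically give $\Delta>(\tfrac13-2\epsilon)M$, and the $f^\tau\in\{2,3\}$ rounds give $\Delta\ge(\tfrac13-O(\epsilon))M$ via Definition~\ref{def:ecc3}\ref{item:ecc-2/3} applied to any element of $\Lambda_\tau$ (or to $x_\main^{t-1}(2),x_\main^{t-1}(3)$ when $\Lambda_\tau=\emptyset$); summing across rounds yields the required bound.

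For parts~\ref{item:minigame-chunk-12-2/3} and~\ref{item:minigame-chunk-123-1}, the Case~\ref{case:3} entry condition gives $\posx_\main((t-1)RL_\epsilon+1;t-1)\ge\posx_\main(3;t-1)+\epsilon CRM=\posx_\main(3;t-1)+RM$. Since no coin's position can change by more than $RM$ per chunk, every coin ranked beyond $(t-1)RL_\epsilon$ at time $t-1$ still sits at position $\ge\posx_\main(3;t-1)+RM$ at time $t$. A short rearrangement argument, case-splitting on how many of the top three end-of-chunk positions come from $\{x_\main^{t-1}(i):i\in[(t-1)RL_\epsilon]\}$, then shows that $\posx_\main(1;t)+\posx_\main(2;t)+\posx_\main(3;t)$ exceeds $\posx_\main(1;t-1)+\posx_\main(2;t-1)+\posx_\main(3;t-1)$ by at least $\min\{RM,\ \posx_\mini(1;R)+\posx_\mini(2;R)+\posx_\mini(3;R)\}$, and analogously for the two-coin sum. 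The claims follow directly from the two key minigame bounds to be established in Section~\ref{sec:minigame-analysis}: $\posx_\mini(1;R)+\posx_\mini(2;R)\ge(\tfrac23-4\epsilon)RM$ and $\posx_\mini(1;R)+\posx_\mini(2;R)+\posx_\mini(3;R)\ge(1-6\epsilon)RM$.

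The main obstacle is the coin-minigame analysis itself, which (as sketched in Section~\ref{sec:overview-minigame}) must handle a continuous spectrum of adversarial strategies with no dominant attack and uses the new derivative game on \emph{pairs} of minigame coins. Within the present lemma, the most delicate bookkeeping is subcase~(c) of part~\ref{item:minigame-chunk-lb-corruption}, where the representative coin $x_\main^{t-1}(j)$ used to define $d_j^\tau$ may itself lie in an $f^\tau$-class different from that of the arbitrary $y$ under consideration; Definition~\ref{def:ecc3}\ref{item:2,3-equal}--\ref{item:ecc-2/3} combined with triangle-inequality accounting resolve this, but several subcases must be verified.
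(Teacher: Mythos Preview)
Your treatment of parts~\ref{item:minigame-chunk-12-2/3} and~\ref{item:minigame-chunk-123-1} matches the paper's: the Case~\ref{case:3} entry condition ensures the three smallest coins at time $t$ all lie among $\{x_\main^{t-1}(i):i\in[(t-1)RL_\epsilon]\}$ (so your case-split is in fact vacuous but harmless), and the claim reduces to the minigame bounds of Lemma~\ref{lem:pair12-2/3} and Corollary~\ref{cor:triple123-1}. Your handling of part~\ref{item:minigame-chunk-lb-corruption} for $y\in\cL$ is also correct and is essentially Lemma~\ref{lem:minigame-chunk-lb}.

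The genuine gap is in part~\ref{item:minigame-chunk-lb-corruption} for $y\notin\cL$. Your claim that ``the $f^\tau\in\{2,3\}$ rounds give $\Delta\ge(\tfrac13-O(\epsilon))M$'' is false. Suppose $f^\tau(\partdec(s,y))=2$ and the adversary transmits the class-$2$ codeword as $m_\tau$; then $\Delta(\ECC[s,f^\tau](y),m_\tau)=0$ and $\Lambda_\tau=\emptyset$ (all class-$1$ codewords are $\ge(\tfrac23-\epsilon)M$ away). Your proposed fallback to $x_\main^{t-1}(2)$ or $x_\main^{t-1}(3)$ does not rescue this: their $f^\tau$-classes are determined by the current \emph{minigame} ranking of $\partdec$-values $2$ and $3$, not by their main-game rank, and nothing prevents both from also lying in class $2$ in that round, in which case the triangle inequality yields nothing. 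So the per-round lower bound you need simply does not hold.

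The paper does \emph{not} attempt a round-by-round argument here. Instead it first invokes Lemma~\ref{lem:minigame-chunk-lb} to get corruption $\ge\pos_\mini(\partdec(s,y);R)$, and then uses the pair-game bound (Lemma~\ref{lem:pair12-2/3}) to show this minigame position is $\ge(\tfrac13-2\epsilon)RM$ whenever $\partdec(s,y)\neq\hat{i}$ (the minimizing minigame coin). The remaining case $\partdec(s,y)=\hat{i}$ requires a further delicate argument: let $R'$ be the last round after which $\hat{i}$ stayed smallest, apply the pair bound at time $R'$, and then use that $f^\tau(\hat{i})=1$ and $y\notin\Lambda_\tau$ for all $\tau\ge R'+2$ to collect $(\tfrac13-2\epsilon)M$ per remaining round. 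In short, part~\ref{item:minigame-chunk-lb-corruption} is not independent of the minigame analysis---it genuinely requires Lemma~\ref{lem:pair12-2/3}, which your plan treats as relevant only to parts~\ref{item:minigame-chunk-12-2/3}--\ref{item:minigame-chunk-123-1}.
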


We defer the proof of Lemma~\ref{lem:coin-minigame-chunk} to Section~\ref{sec:minigame-analysis}.

\subsubsection{Analysis of Error Resilience} \label{sec:analysis-error}

Combining Lemmas~\ref{lem:single-bit},~\ref{lem:ecc-chunk}, and~\ref{lem:coin-minigame-chunk}, we obtain the following fact: for any $t\in [C]$ and $y\in \{0,1\}^k$, it holds that $\pos_\main(y; t)$ is a lower bound on the number of bits Eve has corrupted so far. Thus, it now suffices to show that $\posx_\main(2; C) > \left(\frac13-6\epsilon\right)$.

\begin{definition}[$C'$]
    Let $C'\in [C]$ be the chunk right before the first chunk in which Alice and Bob enter Case~\ref{case:1} in the protocol. If this never occurs, let $C':=C$.
\end{definition}

Notice that since $\posx(3;t)$ is non-decreasing over time, if Alice and Bob enter Case~\ref{case:1} in a chunk, they also enter that case in all future chunks. Therefore, for all the chunks $C'+1,\ldots, C$, Alice and Bob enter Case~\ref{case:1}.

\begin{lemma} \label{lem:1-sum}
    At the end of any chunk $t\in [C']$, it holds that 
    \[
        \posx_\main(1;t)+\posx_\main(2;t)+\posx_\main(tRL_\epsilon+1;t)\geq tRM-8\epsilon tRM.
    \]
\end{lemma}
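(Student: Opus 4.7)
The plan is to prove the inequality by induction on $t$, with the base case $t=0$ immediate since every $\main$ position starts at $0$. For the inductive step, since $t \leq C'$ excludes Case~\ref{case:1}, chunk $t$ falls in either Case~\ref{case:2} or Case~\ref{case:3}. The top-two part is already handled by Lemma~\ref{lem:ecc-chunk}(ii) (increment $\geq (1/2-2\epsilon)RM$ in Case~\ref{case:2}) and Lemma~\ref{lem:coin-minigame-chunk}(ii) (increment $\geq (2/3-4\epsilon)RM$ in Case~\ref{case:3}), so the real work is to establish a one-chunk lower bound
\[
\posx_\main(tRL_\epsilon+1;t) - \posx_\main((t-1)RL_\epsilon+1;t-1) \;\ge\; d,
\]
where $d=(1/2-2\epsilon)RM$ in Case~\ref{case:2} and $d=(1/3-4\epsilon)RM$ in Case~\ref{case:3}. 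Adding the two ingredients produces a per-chunk gain of $(1-4\epsilon)RM$ and $(1-8\epsilon)RM$ respectively, both at least the required $(1-8\epsilon)RM$.

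To prove the order-statistic increment I would set $v^* := \posx_\main((t-1)RL_\epsilon+1;t-1)+d$ and bound the number of coins $y$ with $\pos_\main(y;t)<v^*$. Each such $y$ either appeared in the explicit list $\cL$ fed to $\update_\main$ (call it \emph{individualized}), or it moved by exactly the default $d$, in which case $\pos_\main(y;t-1)<\posx_\main((t-1)RL_\epsilon+1;t-1)$, forcing $y$ to be one of the top $(t-1)RL_\epsilon$ coins at time $t-1$. In Case~\ref{case:2} the individualized set has size $\leq L_\epsilon$ by the list-decoding property in Theorem~\ref{thm:ecc}, so the two categories together have size at most $L_\epsilon+(t-1)RL_\epsilon \leq tRL_\epsilon$, and by the definition of order statistics $\posx_\main(tRL_\epsilon+1;t) \geq v^*$.

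The hard part will be Case~\ref{case:3}, since the individualized set $\{x_\main^{t-1}(1),\ldots,x_\main^{t-1}((t-1)RL_\epsilon)\}\cup\bigcup_\tau \Lambda_\tau$ can itself already have size up to $tRL_\epsilon$, and a naive double-counting would overshoot. The critical observation is that in Case~\ref{case:3} the entire top $(t-1)RL_\epsilon$ at time $t-1$ is \emph{already} inside the individualized set, because Bob's prompt partition $\partition(x_\main^{t-1}(1),\ldots,x_\main^{t-1}((t-1)RL_\epsilon))$ singles out exactly those coins. Thus every non-individualized $y$ has $\pos_\main(y;t-1) \geq \posx_\main((t-1)RL_\epsilon+1;t-1)$ and hence $\pos_\main(y;t) \geq v^*$. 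The ``non-individualized with small position'' category is therefore empty, and all coins with $\pos_\main(y;t)<v^*$ must come from the $\leq tRL_\epsilon$ individualized ones, yielding $\posx_\main(tRL_\epsilon+1;t) \geq v^*$ again.

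Combining these per-chunk inequalities with the inductive hypothesis $\posx_\main(1;t-1)+\posx_\main(2;t-1)+\posx_\main((t-1)RL_\epsilon+1;t-1)\geq (t-1)RM-8\epsilon(t-1)RM$ immediately yields the claimed bound at time $t$, closing the induction.
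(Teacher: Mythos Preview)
Your proposal is correct and follows essentially the same approach as the paper's proof: induction on $t$, splitting the inductive step into Case~\ref{case:2} and Case~\ref{case:3}, handling the top-two increment via Lemmas~\ref{lem:ecc-chunk} and~\ref{lem:coin-minigame-chunk}, and bounding the order-statistic shift $\posx_\main(tRL_\epsilon+1;t)-\posx_\main((t-1)RL_\epsilon+1;t-1)$ by counting how many coins can have small post-chunk position. One small imprecision: in Case~\ref{case:3} the reason the top $(t-1)RL_\epsilon$ coins are individualized is not the partition description \emph{per se}, but the explicit construction of $\cL$ in Bob's update to $\main$ (he adds $(x_\main^{t-1}(i),\pos_\mini(i;R))$ for each $i\in[(t-1)RL_\epsilon]$); the conclusion is unaffected.
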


\begin{proof}
    We prove this by induction. Clearly, it holds for $t=0$ because the initial positions are all $0$ and the right-hand side is $0$.
    
    Assuming the statement holds for $t-1$ we prove it for $t$. In particular, we want to show that
    \begin{align*}
        &~\posx_\main(1;t)+\posx_\main(2;t)+\posx_\main(tRL_\epsilon+1;t) \\
        -&~ \big( \posx_\main(1;t-1)+\posx_\main(2;t-1)+\posx_\main((t-1)RL_\epsilon+1;t-1) \big) \\
        \geq &~ RM-8\epsilon RM. \numberthis \label{eqn:1-diff}
    \end{align*} 
    We split the proof into cases based on whether Alice and Bob enter Case~\ref{case:2} or Case~\ref{case:3}.
    
    \begin{caseof}
    \case{Alice and Bob enter Case~\ref{case:2}:}{
    Alice sends $\ECC(x)$ in chunk $t$. Bob computes a set $\cL$ of size $\leq L_\epsilon \le RL_\epsilon$. For every $y\not\in \cL$, it holds that
    \[
        \posx_\main(y;t)-\posx_\main(y;t-1)=\left(\frac12-2\epsilon\right)\cdot RM.
    \]
    Since only $|\cL|$ values of $\posx_\main(y;t) - \posx_\main(y; t-1)$ are an amount $< \left(\frac12-2\epsilon\right)\cdot RM$, for all indices $a\in [2^k-|\cL|]$, it holds that 
    \[
        \posx_\main(a+|\cL|; t) \ge \posx_\main(a;t-1) + \left(\frac12-2\epsilon\right)\cdot RM
    \]
    and so in particular, we have 
    \[
        \posx_\main(tRL_\epsilon+1;t) - \posx_\main((t-1)RL_\epsilon+1;t-1) \geq \left(\frac12-2\epsilon\right)\cdot RM
    \]
    Moreover, by Lemma~\ref{lem:ecc-chunk}, it holds that
        \begin{align*}
            \posx_\main(1; t) + \posx_\main(2; t) - (\posx_\main(1; t-1) + \posx_\main(2; t-1)) \ge \left( \frac12 - 2\epsilon \right)\cdot RM.
        \end{align*}
    Combining these two observations gives Equation~\eqref{eqn:1-diff}.
    }
    \case{Alice and Bob enter Case~\ref{case:3}:}{
    By Lemma~\ref{lem:coin-minigame-chunk}, it holds that
    \[
        \posx_\main(1;t)+\posx_\main(2;t)-\posx_\main(1;t-1)-\posx_\main(2;t-1)\geq \left(\frac23-4\epsilon\right)\cdot RM.
    \]
    
    Moreover, all but at most $RL_\epsilon$ coins that were previously not among $x_\main^{t-1}(1)\ldots x_\main^{t-1}((t-1)RL_\epsilon)$ increase by $\left(\frac13-4\epsilon\right)\cdot M$ in position. Therefore, by the same logic as the previous case it holds that 
    \[
        \posx_\main(tRL_\epsilon+1;t)- \posx_\main((t-1)RL_\epsilon+1;t-1) \geq  \left(\frac13-4\epsilon\right) \cdot RM.
    \]
    }
    Combining these two observations gives Equation~\eqref{eqn:1-diff}. \qedhere
    \end{caseof}
\end{proof}

\begin{lemma} \label{lem:123sum}
    At the end of any chunk $t\in [C']$, it holds that 
    \[
        \posx_\main(1;t)+\posx_\main(2;t) + \posx_\main(3;t) \geq tRM-8\epsilon tRM - 2\epsilon CRM.
    \]
\end{lemma}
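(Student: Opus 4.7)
\medskip
\noindent\textbf{Proof plan.} The plan is to analyze $\posx_\main(1;t)+\posx_\main(2;t)+\posx_\main(3;t)$ by conditioning on the last chunk $t^\ast \le t$ in which Alice and Bob were in Case~\ref{case:2}, with the convention $t^\ast = 0$ if no such chunk exists in $[t]$. Since $t\in [C']$ ensures no chunk up to $t$ was Case~\ref{case:1}, every chunk $\tau \in \{t^\ast+1,\dots,t\}$ must be Case~\ref{case:3} (a coin-minigame chunk). Iterating Lemma~\ref{lem:coin-minigame-chunk}(iii) over these chunks then yields
\[
    \posx_\main(1;t)+\posx_\main(2;t)+\posx_\main(3;t) \;\ge\; \posx_\main(1;t^\ast)+\posx_\main(2;t^\ast)+\posx_\main(3;t^\ast) + (t-t^\ast)(1-6\epsilon)RM,
\]
so the proof reduces to suitably bounding the three-smallest-coin sum at time $t^\ast$.

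The case $t^\ast = 0$ is immediate: all three positions are $0$ and the bound $t(1-6\epsilon)RM \ge tRM - 8\epsilon tRM - 2\epsilon CRM$ is trivial. For $t^\ast \ge 1$, I will combine Lemma~\ref{lem:1-sum} at chunk $t^\ast-1$, which gives $\posx_\main(1;t^\ast-1)+\posx_\main(2;t^\ast-1)+\posx_\main((t^\ast-1)RL_\epsilon+1;t^\ast-1) \ge (t^\ast-1)RM(1-8\epsilon)$, with the Case~\ref{case:2} entry condition at chunk $t^\ast$, namely $\posx_\main(3;t^\ast-1) > \posx_\main((t^\ast-1)RL_\epsilon+1;t^\ast-1) - \epsilon CRM$. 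Substituting one into the other yields
\[
    \posx_\main(1;t^\ast-1)+\posx_\main(2;t^\ast-1)+\posx_\main(3;t^\ast-1) \;\ge\; (t^\ast-1)RM(1-8\epsilon) - \epsilon CRM.
\]
I then apply Lemma~\ref{lem:ecc-chunk}(ii) to the Case~\ref{case:2} chunk $t^\ast$ to bump $\posx_\main(1)+\posx_\main(2)$ by at least $(\tfrac12-2\epsilon)RM$, while $\posx_\main(3;t^\ast) \ge \posx_\main(3;t^\ast-1)$ trivially.

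Assembling everything and simplifying, the target inequality reduces after cancellation to
\[
    \epsilon CRM + 6\epsilon RM + 2\epsilon(t-t^\ast)RM \;\ge\; \tfrac12 RM,
\]
which holds with room to spare because the protocol sets $C = 1/\epsilon$, making $\epsilon CRM = RM$. I expect the main obstacle to be careful bookkeeping rather than any conceptual step: one must recognize that the $-\tfrac12 RM$ deficit arises precisely because Case~\ref{case:2} increments $\posx_\main(1)+\posx_\main(2)$ by only $(\tfrac12-2\epsilon)RM$ instead of $RM$, and verify that the extra $-2\epsilon CRM$ slack in the statement (absent from Lemma~\ref{lem:1-sum}) is exactly what absorbs it. The choice of $t^\ast$ as the \emph{most recent} Case~\ref{case:2} chunk is what makes this loss appear only once rather than accumulating across chunks, since all earlier losses are hidden inside the inductive bound of Lemma~\ref{lem:1-sum} via the cheaper coin $\posx_\main((t^\ast-1)RL_\epsilon+1;t^\ast-1)$ in place of $\posx_\main(3;t^\ast-1)$.
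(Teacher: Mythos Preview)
Your proposal is correct and follows essentially the same approach as the paper: both arguments combine Lemma~\ref{lem:1-sum} at the chunk preceding a Case~\ref{case:2} step with the Case~\ref{case:2} entry condition $\posx_\main(3;\cdot)>\posx_\main((\cdot)RL_\epsilon+1;\cdot)-\epsilon CRM$, and then propagate forward through Case~\ref{case:3} chunks via Lemma~\ref{lem:coin-minigame-chunk}\ref{item:minigame-chunk-123-1}. The only structural difference is that the paper phrases this as an induction on $t$ (with the Case~\ref{case:2} branch serving as a direct ``reset'' and the Case~\ref{case:3} branch as the inductive step), whereas you unroll that induction by explicitly locating the last Case~\ref{case:2} chunk $t^\ast$; your extra invocation of Lemma~\ref{lem:ecc-chunk}(ii) is harmless but unnecessary, since the paper gets by with the cruder bound $\posx_\main(1;t)+\posx_\main(2;t)+\posx_\main(3;t)\ge \posx_\main(1;t-1)+\posx_\main(2;t-1)+\posx_\main(3;t-1)$ across the Case~\ref{case:2} chunk.
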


\begin{proof}
    We split the proof into cases based on whether Alice and Bob entered Case~\ref{case:2} or Case~\ref{case:3} during chunk $t$.
    
    \begin{caseof}
    \case{Alice and Bob entered Case~\ref{case:2}:}{
        In this case, we have
        \begin{align*}
            &~ \posx_\main(1;t)+\posx_\main(2;t)+\posx_\main(3;t) \\
            \geq &~ \posx_\main(1;t-1)+\posx_\main(2;t-1)+\pos_\main(3;t-1) \\
            \geq &~ \posx_\main(1;t-1)+\posx_\main(2;t-1)+\posx_\main((t-1)RL_\epsilon+1;t-1)-\epsilon CRM.
        \end{align*}
        Using Lemma~\ref{lem:1-sum} on chunk $t-1$, we further get
        \begin{align*}
            &~ \posx_\main(1;t-1)+\posx_\main(2;t-1)+\posx_\main((t-1)RL_\epsilon+1;t-1)-\epsilon CRM \\
            \geq &~ (t-1)RM -8\epsilon (t-1)RM -\epsilon CRM \\
            \geq &~ tRM - 8\epsilon tRM - \epsilon CRM - RM \\
            \geq &~ tRM -8\epsilon tRM -2\epsilon CRM,
        \end{align*}
        which gives the desired result.
    }
    
    \case{Alice and Bob entered Case~\ref{case:3}:}{
        In this case, Alice and Bob play the coin minigame. We prove the claim by induction. Clearly, the claim holds for $t=0$. Assuming the claim for $t-1$, we have
        \begin{align*}
            &~\posx_\main(1;t-1)+\posx_\main(2;t-1)+\posx_\main(3;t-1) \\
            \geq &~ (t-1)RM-8\epsilon (t-1)RM - 2\epsilon CRM.
        \end{align*}
        By Lemma~\ref{lem:1-sum}, it holds that
        \begin{align*}
            &~\posx_\main(1; t) + \posx_\main(2; t) + \posx_\main(3; t) \\
            - &~(\posx_\main(1; t-1) + \posx_\main(2; t-1) + \posx_\main(3; t-1)) \\
            \ge &~(1 - 6\epsilon )\cdot RM.
        \end{align*}
        Then, the claim follows by adding the two equations. \qedhere
    }
    \end{caseof}
\end{proof}

\begin{proof}[Proof of Theorem~\ref{thm:error-feedback-ecc}]
    We aspire to show that 
    \[
        \posx_\main(2;C) > \left(\frac13-6\epsilon\right)\cdot CRM.
    \]
    Assume for the sake of contradiction this is not true at the end of the protocol, so in particular at least two coins remain on the board. At the end of chunk $C'$, it holds by Lemma~\ref{lem:123sum} that
    \[
        \posx_\main(1;C')+\posx_\main(2;C') + \posx_\main(3;C') \geq C'RM-8\epsilon C'RM - 2\epsilon CRM.
    \]
    Moreover, we have that in the chunk $C'$ because Alice and Bob did not enter Case~\ref{case:1}, it held that 
    \begin{align*}
        &~\posx_\main(3;C'-1)\leq \frac13 \cdot CRM \\
        \implies &~ \posx_\main(3;C')\leq \frac13 \cdot CRM +RM = \left(\frac13+\epsilon\right)\cdot CRM
    \end{align*}
    at the start of chunk $C'$.
    Therefore, at the end of chunk $C'$, it holds that
    \begin{align*}
        &~ \posx_\main(1;C')+\posx_\main(2;C') + \left(\frac13+\epsilon\right)\cdot CRM \geq C'RM-8\epsilon C'RM - 2\epsilon CRM \\
        \implies &~ \posx_\main(1;C')+\posx_\main(2;C') \geq C'RM - \frac13 \cdot CRM-8\epsilon C'RM - 3\epsilon CRM
    \end{align*}
    In the chunks $t\in [C'+1, C]$, it holds that 
    \begin{align*}
        &~ \posx_\main(1;t)+\posx_\main(2;t)-(\posx_\main(1;t-1)+\posx_\main(2;t-1)) \geq RM \\
        \implies &~ \posx_\main(1;C)+\posx_\main(2;C)-(\posx_\main(1;C')+\posx_\main(2;C')) \geq CRM-C'RM.
    \end{align*}
    Note that this holds even if $C'=C$ and Alice and Bob never entered Case~\ref{case:1}. Adding these equations gives 
    \begin{align*}
        & \posx_\main(1;C)+\posx_\main(2;C) \geq \frac23 \cdot CRM-8\epsilon C'RM - 3\epsilon CRM \geq \left(\frac23-11\epsilon\right)\cdot CRM \\
        \implies & 2\posx_\main(2;C') \geq \left(\frac23-11\epsilon\right)\cdot CRM \\
        \implies & \posx_\main(2;C') > \left(\frac13-6\epsilon\right)\cdot CRM \ .
    \end{align*}
    Therefore, at the end of the protocol, there is only one possible input that Alice could have had, namely $x_\main^C(1)$, that required at most $\left(\frac13-6\epsilon\right)\cdot CRM$ corruptions from the adversary given the messages Bob received. Therefore, Bob's output $x_\main^C(1)$ must indeed have been Alice's true input.
\end{proof}
    
\subsubsection{Coin Minigame Analysis}\label{sec:minigame-analysis}

In this section, we prove Lemma~\ref{lem:coin-minigame-chunk}. 

We begin with the following observation relating the positions of the coins in the minigame to the amount of corruption caused by the adversary if Alice's input were a given coin in the main game.

\begin{lemma} \label{lem:minigame-chunk-lb}
    At the end of a chunk $t\in [C]$ in which the minigame is played, then for any $y \in \{ 0, 1 \}^k$, if Alice's input were actually $y$, the adversary must have caused at least $\pos_\mini(\partdec(s, y); R)$ corruption to Alice's communication. 
\end{lemma}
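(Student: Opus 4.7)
The approach is to establish the per-round inequality $v^\tau(\partdec(s,y)) \le \Delta(\ECC[s, f^\tau](y), m^\tau)$ for every round $\tau \in [R]$ of the chunk, where $m^\tau$ denotes the message Bob receives in round $\tau$. Summing over $\tau$ then yields the lemma, because $\pos_\mini(\partdec(s,y); R) = \sum_\tau v^\tau(\partdec(s,y))$ (every $i \in [(t-1)RL_\epsilon]$ is placed in the update list $\cL$ in every round, so the default update $V=M$ never fires for $i=\partdec(s,y)$), while $\sum_\tau \Delta(\ECC[s, f^\tau](y), m^\tau)$ is exactly the corruption Eve must have caused under the hypothesis that Alice's input is $y$, since Alice's round-$\tau$ message in that case is $\ECC[s, f^\tau](y)$.

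The per-round bound splits into cases on $f^\tau(\partdec(s,y))$. If $f^\tau(\partdec(s,y)) = 1$ and $\Lambda_\tau \ne \emptyset$, then either $y \in \Lambda_\tau$, in which case the minimum defining $v^\tau$ is bounded directly by the $y$-term, or $y \notin \Lambda_\tau$, in which case $\Delta(\ECC[s,f^\tau](y), m^\tau) > (\tfrac13 - 2\epsilon)M$ by the definition of $\Lambda_\tau$, while every term contributing to the min in $v^\tau$ is at most $(\tfrac13 - 2\epsilon)M$. If instead $f^\tau(\partdec(s,y)) = 1$ and $\Lambda_\tau = \emptyset$, then for any $z$ with $f^\tau(\partdec(s,z)) \in \{2,3\}$, property~\ref{item:ecc-2/3} of Definition~\ref{def:ecc3} together with the triangle inequality gives $\Delta(\ECC[s,f^\tau](y), m^\tau) + \Delta(\ECC[s,f^\tau](z), m^\tau) \ge (\tfrac23 - \epsilon)M$. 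Instantiating $z = x_\main^{t-1}(2)$ and $z = x_\main^{t-1}(3)$ and minimizing rearranges to $\Delta(\ECC[s,f^\tau](y), m^\tau) \ge (\tfrac23 - \epsilon)M - \min\{d_2^\tau, d_3^\tau\} = v^\tau(\partdec(s,y))$. Finally, if $f^\tau(\partdec(s,y)) = j \in \{2,3\}$, property~\ref{item:2,3-equal} of Definition~\ref{def:ecc3} implies that $\ECC[s,f^\tau](y)$ depends only on $f^\tau(\partdec(s,y))$, so $\ECC[s,f^\tau](y) = \ECC[s,f^\tau](x_\main^{t-1}(j))$, yielding $\Delta(\ECC[s,f^\tau](y), m^\tau) = d_j^\tau = v^\tau(\partdec(s,y))$.

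The main subtlety I anticipate is verifying the last two cases when the natural representative $x_\main^{t-1}(j)$ happens to itself have $f^\tau(j) = 1$ (which can occur once the minigame ordering changes and $x_\mini^{\tau-1}(1) \in \{2,3\}$), breaking the clean identification of $d_j^\tau$ with the distance to the $j$-type codeword. Since $f^\tau$ takes value $1$ on the \emph{unique} minigame minimizer $x_\mini^{\tau-1}(1)$, at least one of $\{x_\main^{t-1}(2), x_\main^{t-1}(3)\}$ has $f^\tau(\partdec(s,\cdot)) \in \{2,3\}$, so the triangle bound in the $\Lambda_\tau=\emptyset$ subcase always has at least one valid witness; for the other term, one falls back on the $\Lambda_\tau = \emptyset$ threshold $d^\tau > (\tfrac13 - 2\epsilon)M$ and property~\ref{item:ecc-1/3} to close the gap. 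Handling this corner case carefully is the only nontrivial bookkeeping step; the rest of the argument reduces cleanly to the distance and list-decoding properties of the code $\ECC[s, f^\tau]$ from Definition~\ref{def:ecc3}.
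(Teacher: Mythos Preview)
Your per-round case analysis is exactly what the paper does: the paper's proof is the same three-way split on $f^\tau(\partdec(s,y))$ with the same justifications, written more tersely, and then sums over $\tau$. So the overall approach matches.

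The subtlety you flag in your last paragraph---that $d_j^\tau$ need not be the distance to the ``type-$j$'' codeword once the minigame reorders so that $x_\mini^{\tau-1}(1)\in\{2,3\}$---is real, and the paper does not address it either: it simply asserts $\ECC[s,f^\tau](x_\main^{t-1}(i)) = \ECC[s,f^\tau](x_\main^{t-1}(f^\tau(i)))$, which tacitly uses $f^\tau(f^\tau(i))=f^\tau(i)$. So you have been more careful than the paper here. That said, your proposed patch for the $\Lambda_\tau=\emptyset$ subcase does not quite close: if $\min\{d_2^\tau,d_3^\tau\}=d_j^\tau$ with $f^\tau(j)=1$, the list-decoding threshold gives only $d_j^\tau > (\tfrac13-2\epsilon)M$, hence $v^\tau < (\tfrac13+\epsilon)M$, while the same threshold gives $\Delta(\ECC[s,f^\tau](y),m^\tau) > (\tfrac13-2\epsilon)M$, leaving a $3\epsilon M$ shortfall that property~\ref{item:ecc-1/3} does not bridge; and you offer no patch at all for the $f^\tau(\partdec(s,y))\in\{2,3\}$ case. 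The clean resolution is to read $d_2^\tau,d_3^\tau$ as distances to representatives whose $f^\tau$-value is genuinely $2$ and $3$ respectively (e.g.\ $x_\main^{t-1}(x_\mini^{\tau-1}(2))$ and $x_\main^{t-1}(x_\mini^{\tau-1}(3))$); under that reading all three cases, and the paper's asserted equality, go through verbatim.
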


\begin{proof}
    Note that after round $\tau$ in chunk $t$, the position of coin $i \in [(t-1)RL_\epsilon]$ has increased by 
    \[
        v^\tau(i)=
        \begin{cases}
            \min_{z\in\Lambda_\tau}\Delta(\ECC[s,f^\tau](z),m)  & f^\tau(i)=1 ~\text{and}~ \Lambda_\tau \not= \emptyset\\
            \left( \frac23 - \epsilon \right) \cdot M - \min \{ d_2^\tau, d_3^\tau \} & f^\tau(i)=1 ~\text{and}~ \Lambda_\tau = \emptyset\\
            d_{f^\tau(i)}^\tau & f^\tau(i)\neq 1,
        \end{cases}
    \]
    where $d_2^\tau := \Delta(\ECC[s, f^\tau](x_\main^{t-1}(2)), m)$ and $d_3^\tau := \Delta(\ECC[s, f^\tau](x_\main^{t-1}(3)), m)$. 
    
    In the case $f^\tau(\partdec(s, y)) = 1$ and $\Lambda_\tau \not= \emptyset$, we note that $\min_{z \in \Lambda_\tau} \Delta(\ECC[s,f^\tau](z), m) \le \Delta(\ECC[s,f^\tau](y), m)$ since $\Lambda_\tau$ consists of all values of $z$ for which $\Delta(\ECC[s,f^\tau](z), m) < \left( \frac13 - 2\epsilon \right) \cdot M$. Thus, the update value $v^\tau(i)$ is a lower bound on the actual amount of corruption the adversary caused. 
    
    In the case $f^\tau(\partdec(s, y)) = 1$ and $\Lambda_\tau = \emptyset$, we have that 
    \[
        \Delta(\ECC[s,f^\tau](y), m) + \Delta(\ECC[s,f^\tau](x_\main^{t-1}(a)), m) \ge \left( \frac23 - \epsilon \right) \cdot M
    \]
    for any $a \in [2 : (t-1)RL_\epsilon]$. In particular, 
    \[
        \Delta(\ECC[s,f^\tau](y), m) \ge \left( \frac23 - \epsilon \right) \cdot M - \min \{ d_2^\tau, d_3^\tau \}. 
    \]
    
    Finally, if $f^\tau(\partdec(s, y)) \not= 1$, letting $i = \partdec(s, y)$, note that $\ECC[s,f^\tau](y) = \ECC[s,f^\tau](x_\main^{t-1}(i)) = \ECC[s,f^\tau](x_\main^{t-1}(f^\tau(i)))$ by definition of $\ECC[s,f^\tau]$ (see Definition~\ref{def:ecc3}). Thus, $v^\tau(i)$ is precisely the amount of corruption the adversary must've caused.
\end{proof}

With this view in mind, we proceed to analyze the coin minigame. We will return afterward to the proof of Lemma~\ref{lem:coin-minigame-chunk}.

Our first goal is to prove that the sum of positions of any two coins at the end of the minigame in chunk $t$ is at least $\approx \frac23 RM$. To do this, we will consider the following new coin game, defined on pairs of coins in the minigame. We call this new coin game $\pair$. 
\begin{itemize}
    \item 
        The coins are all pairs $(i,j) \in \binom{S_\mini}{2} =: S_\pair$, where $S_\mini = [(t-1)RL_\epsilon]$. We consider pairs of coins without ordering, so $(i, j)$ and $(j, i)$ refer to the same pair of coins.
    \item 
        After any round $\tau$, the position of coin $(i,j)$ is the sum of positions of the coins $i$ and $j$ in the minigame, i.e.,
        \[
            \pos_\pair(i, j; \tau) = \pos_\mini(i; \tau) + \pos_\mini(j; \tau). 
        \]
    \item 
        It follows that in any round $\tau$, coin $(i, j)$ in the $\pair$ game is updated by $v^\tau(i, j) := v^\tau(i) + v^\tau(j)$, where $v^\tau(i)$ is the position update of coin $i$ in the minigame.
    \item 
        We will also refer to coins in $\pair$ by the relative ordering of the positions of the two coins in $\mini$. That is, we will denote by $x_\pair^\tau(a,b)$ the $\pair$ coin $(x_\mini^\tau(a), x_\mini^\tau(b))$, for $(a,b) \in \binom{|S_\mini|}{2}$. We also let $\posx_\pair(a,b; \tau)$ denote $\pos_\pair(x_\pair^\tau(a,b); \tau) = \posx_\mini(a; \tau) + \posx_\mini(b; \tau)$. For convenience, we will also let $v^\tau[a,b]$ denote $v^\tau(x_\pair^{\tau-1}(a,b))$. 
\end{itemize}

For $a,b \in \binom{|S_\mini|}{2}$, we impose the following ordering: $(a, b) \le (a', b')$ (where $a < b$ and $a' < b'$) if and only if $a \le a'$ and $b \le b'$, with strict inequality if either $a < a'$ or $b < b'$. We remark that we are considering pairs of indices $(a,b)$ without ordering, so $(a, b)$ is the same as $(b, a)$.

Our goal is then to show that at any round $\tau$, all coins in $\pair$ are at position $\gtrsim \frac23 \tau$.

We begin by showing that for any $i, j$ such that $f^\tau(i) \not= f^\tau(j)$, the total update of coins $i$ and $j$ in the $\mini$ game is at least $\approx \frac23 M$. 

\begin{lemma} \label{lem:minigame-ij-2/3}
    In any chunk $t \in [C]$ in which the $\mini$ game is played, and for any round $\tau \in [R]$, it holds that for any $i, j \in [(t-1)RL_\epsilon]$ such that $f^\tau(i) \not= f^\tau(j)$, 
    \[
        v^\tau(i) + v^\tau(j) \ge \left( \frac23 - \epsilon \right) \cdot M.
    \]
\end{lemma}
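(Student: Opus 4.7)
The plan is to proceed by case analysis on the unordered pair $\{f^\tau(i), f^\tau(j)\}$, which must be one of $\{1,2\}$, $\{1,3\}$, or $\{2,3\}$ since $f^\tau(i) \neq f^\tau(j)$. The main tool throughout is the triangle inequality for Hamming distance, combined with the distance guarantee~\ref{item:ecc-2/3} of Definition~\ref{def:ecc3}: any two elements in distinct $f^\tau$-buckets have $\ECC[s, f^\tau]$-codewords at Hamming distance at least $(2/3 - \epsilon) M$. I interpret $d_2^\tau$ and $d_3^\tau$ as distances from the received word $m$ to the canonical bucket-$2$ and bucket-$3$ codewords $c_2, c_3$ of $\ECC[s, f^\tau]$; this is well-defined because, by property~\ref{item:2,3-equal} of Definition~\ref{def:ecc3}, all bucket-$2$ (resp.\ bucket-$3$) elements share a common codeword.

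The case $\{f^\tau(i), f^\tau(j)\} = \{2, 3\}$ is immediate: $v^\tau(i) + v^\tau(j) = d_2^\tau + d_3^\tau \geq \Delta(c_2, c_3) \geq (2/3 - \epsilon) M$ by the triangle inequality and Definition~\ref{def:ecc3}~\ref{item:ecc-2/3}, since $c_2$ and $c_3$ are the codewords of elements in distinct $f^\tau$-buckets.

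The cases $\{1, 2\}$ and $\{1, 3\}$ are symmetric, so I handle $(f^\tau(i), f^\tau(j)) = (1, 2)$; here $v^\tau(j) = d_2^\tau$, and I split on whether $\Lambda_\tau$ is empty. If $\Lambda_\tau = \emptyset$, then applying the middle branch of the $v^\tau$ definition yields $v^\tau(i) + v^\tau(j) = (2/3 - \epsilon) M - \min\{d_2^\tau, d_3^\tau\} + d_2^\tau \geq (2/3 - \epsilon) M$, since $d_2^\tau \geq \min\{d_2^\tau, d_3^\tau\}$. If $\Lambda_\tau \neq \emptyset$, let $z^*$ be the minimizer in the first branch of the definition of $v^\tau(i)$; then $z^*$ lies in bucket $1$ of $f^\tau$ and $v^\tau(i) = \Delta(\ECC[s, f^\tau](z^*), m)$. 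The triangle inequality gives $v^\tau(i) + v^\tau(j) \geq \Delta(\ECC[s, f^\tau](z^*), c_2)$, which by Definition~\ref{def:ecc3}~\ref{item:ecc-2/3} is at least $(2/3 - \epsilon) M$, as $z^*$ and any bucket-$2$ representative lie in distinct $f^\tau$-buckets.

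I expect the main obstacle to be the small amount of bookkeeping required to identify $d_2^\tau$ and $d_3^\tau$, as written in the protocol via the specific representatives $x_\main^{t-1}(2)$ and $x_\main^{t-1}(3)$, with the canonical distances $\Delta(c_2, m)$ and $\Delta(c_3, m)$. This identification ultimately rests on property~\ref{item:2,3-equal} of Definition~\ref{def:ecc3}, which ensures that all elements mapped to the same $f^\tau$-bucket produce the same $\ECC[s, f^\tau]$-codeword. Once this identification is settled, each of the three cases is a one-line consequence of the triangle inequality and the distance bound of Definition~\ref{def:ecc3}~\ref{item:ecc-2/3}.
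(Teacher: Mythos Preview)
Your proposal is correct and mirrors the paper's proof essentially line for line: both proceed by case analysis on the unordered pair $\{f^\tau(i),f^\tau(j)\}$, handling $\{2,3\}$ via the triangle inequality and property~\ref{item:ecc-2/3}, and handling $\{1,2\}$ (resp.\ $\{1,3\}$) by splitting on whether $\Lambda_\tau$ is empty, using in the nonempty case the minimizer $z^*\in\Lambda_\tau$ together with the triangle inequality, and in the empty case the inequality $d_{f^\tau(j)}^\tau \ge \min\{d_2^\tau,d_3^\tau\}$. Your observation that the only bookkeeping is identifying $d_2^\tau,d_3^\tau$ with the bucket-$2$ and bucket-$3$ codeword distances via property~\ref{item:2,3-equal} is exactly right and matches how the paper uses these quantities.
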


\begin{proof}
    There are a few cases. The first case is that $f^\tau(i) = 2$ and $f^\tau(j) = 3$ (or vice versa). Then 
    \begin{align*}
        v^\tau(i) + v^\tau(j) 
        &= d^\tau_2 + d^\tau_3 \\
        &\ge \Delta(\ECC[s,f^\tau](x_\main^{t-1}(2)), m) + \Delta(\ECC[s,f^\tau](x_\main^{t-1}(3)), m) \\
        &\ge \left( \frac23 - \epsilon \right) \cdot M
    \end{align*}
    by property~\ref{item:ecc-2/3} of Lemma~\ref{def:ecc3}.
    
    The second case is that $f^\tau(i) = 1$ and $f^\tau(j) \in \{ 2, 3 \}$ (or vice versa). Then, if $\Lambda_\tau \not= \emptyset$, it holds that 
    \begin{align*}
        v^\tau(i) + v^\tau(j) 
        &= \min_{z \in \Lambda_\tau} \Delta(\ECC[s,f^\tau](z), m) + \Delta(\ECC[s,f^\tau](x_\main^{t-1}(f^\tau(j))), m) \\
        &\ge \left( \frac23 - \epsilon \right) \cdot M
    \end{align*}
    once again by property~\ref{item:ecc-2/3} of Lemma~\ref{def:ecc3}.
    On the other hand, if $\Lambda_\tau = \emptyset$, then 
    \begin{align*}
        v^\tau(i) + v^\tau(j) 
        &= \left( \frac23 - \epsilon \right) \cdot M - \min \{ d_2^\tau, d_3^\tau \} + d_{f^\tau(i)}^\tau \\
        &\ge \left( \frac23 - \epsilon \right) \cdot M,
    \end{align*}
    where here we note that $f^\tau(i) \in \{ 2, 3 \}$ so that $d^\tau_{f^\tau(i)} \in \{ d_2^\tau, d_3^\tau \}$.
\end{proof}

In the following lemma, we will show a property of any update pattern $v^\tau$ in the $\pair$ game, that the coins in the $\pair$ game can be paired up such that the total update in the two coins is at least $\approx \frac43 M$. 

\begin{lemma}
    Given any update function $v^\tau : S_\pair \rightarrow [0,2M]$, there exists a bijection\footnote{We will assume that $|S_\mini| = (t-1)RL_\epsilon$ is even. This can be done for instance by simply choosing $R$ to be even.} of indices $\binom{|S_\mini|}{2} \ni (a,b) \leftrightarrow (a',b') \in \binom{|S_\mini|}{2}$ such for any bijected values $(a,b) \leftrightarrow (a',b')$,
    \[
        v^\tau[a,b] + v^\tau[a',b'] \ge \left( \frac43 - 2\epsilon \right) \cdot M.
    \]
    Furthermore, for any $(a,b) \in \binom{|S_\mini|}{2}$ with $v^\tau[a,b] < \frac23-\epsilon$, the associated $(a',b')$ satisfies $(a',b') < (a,b)$. 
\end{lemma}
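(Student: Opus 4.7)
The plan is to exploit the highly restricted form of $v^\tau$ coming from the minigame: by the case definition of $v^\tau(i)$, its value depends only on the label $f^\tau(i) \in \{1,2,3\}$. Write $C$ for the value assigned to the rank-$1$ coin (label $1$), $A := d_2^\tau$ for the value shared by all even-ranked coins (label $2$), and $B := d_3^\tau$ for the value shared by all odd-ranked coins of rank $\ge 3$ (label $3$). Since $v^\tau[a,b] = v^\tau(x_\mini^{\tau-1}(a)) + v^\tau(x_\mini^{\tau-1}(b))$, the pair update depends only on the two labels, and Lemma~\ref{lem:minigame-ij-2/3} translates directly into the three inequalities $A+B$, $C+A$, $C+B \ge (\tfrac{2}{3} - \epsilon)M$.

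Consequently, $v^\tau[a,b] < (\tfrac{2}{3}-\epsilon)M$ can only occur when $a$ and $b$ share a label, so the bad pairs are either both-even (value $2A$) or both-odd-$\ge 3$ (value $2B$). Because $A + B \ge (\tfrac{2}{3}-\epsilon)M$, the two quantities $A$ and $B$ cannot both be strictly less than $(\tfrac{1}{3}-\tfrac{\epsilon}{2})M$, so at most one of the two families of bad pairs is actually bad. This gives three cases. If both $A, B \ge (\tfrac{1}{3}-\tfrac{\epsilon}{2})M$, then every pair already satisfies $v^\tau[a,b] \ge (\tfrac{2}{3}-\epsilon)M$ and the identity bijection trivially gives $2v^\tau[a,b] \ge (\tfrac{4}{3}-2\epsilon)M$.

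In the case $A < (\tfrac{1}{3}-\tfrac{\epsilon}{2})M$ (the case $B < (\tfrac{1}{3}-\tfrac{\epsilon}{2})M$ being analogous), the bad pairs are exactly $\{(2j, 2k) : 1 \le j < k \le N/2\}$ where $N := |S_\mini|$ is even by assumption. I would define the bijection by swapping each $(2j, 2k)$ with $(2j-1, 2k-1)$ and fixing every mixed-parity pair. This is a legitimate bijection of $\binom{[N]}{2}$: the even-even and odd-odd classes each have cardinality $\binom{N/2}{2}$, the swap is an involution between them, and the mixed-parity pairs are untouched. The requirement $(2j-1, 2k-1) < (2j, 2k)$ in the coordinate-wise order is immediate, so the ``smaller partner'' condition for every bad pair is automatic.

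Finally I would verify the bound $v^\tau[a,b] + v^\tau[a',b'] \ge (\tfrac{4}{3}-2\epsilon)M$ on every coupled pair. For a swapped pair with $j \ge 2$, the partner $(2j-1, 2k-1)$ has both coordinates odd and $\ge 3$, so its value is $2B$, and the sum is $2A + 2B = 2(A+B) \ge (\tfrac{4}{3}-2\epsilon)M$. For $j = 1$ the partner is $(1, 2k-1)$ with value $C+B$, and the sum $2A + (C+B) = (A+C)+(A+B) \ge (\tfrac{4}{3}-2\epsilon)M$ follows by adding two of the structural inequalities. Self-coupled mixed-parity pairs contribute $2v^\tau[a,b] \ge (\tfrac{4}{3}-2\epsilon)M$ by Lemma~\ref{lem:minigame-ij-2/3}. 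The main conceptual step is recognizing the three-parameter structure of $v^\tau$ and realizing that bad pairs always have a coordinate-wise-smaller odd-odd partner whose update compensates them; once this is unlocked, the explicit bijection and the case-by-case verification are routine.
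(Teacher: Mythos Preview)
Your proof is correct and follows essentially the same route as the paper: identify that a bad pair $(a,b)$ must have both coordinates of the same parity (and $\ge 2$), and match it with $(a-1,b-1)$, which is coordinate-wise smaller and whose update compensates via Lemma~\ref{lem:minigame-ij-2/3}. The paper presents this uniformly without splitting on which of $A,B$ is small, but the substance is identical; just note that in your ``analogous'' case $B < (\tfrac13-\tfrac{\epsilon}{2})M$ the explicit involution must be the mirror $(2j+1,2k+1)\leftrightarrow(2j,2k)$ (with the leftover not-bad pairs fixed), since reusing the map $(2j,2k)\leftrightarrow(2j-1,2k-1)$ would send the bad odd--odd pairs to larger, not smaller, partners.
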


\begin{proof}
    It suffices to find the bijection for indices $(a,b)$ where $v^\tau[a,b] < \frac23 - \epsilon$. Afterwards, for all un-bijected indices, one can pair them up arbitrarily: then $v^\tau[a,b] + v^[a',b'] \ge \left( \frac23 - \epsilon \right) \cdot M + \left( \frac23 - \epsilon \right) \cdot M$. We thus concentrate on $(a,b)$ for which $v^\tau[a,b] < \frac23 - \epsilon$. 
    
    In this case, we write $x_\pair^{\tau-1}(a,b) = (i, j)$. By Lemma~\ref{lem:minigame-ij-2/3}, it must hold that $f^\tau(i) = f^\tau(j)$, where $f^\tau : [(t-1)RL_\epsilon] \rightarrow \{ 1, 2, 3 \}$ is the function Bob sent in round $\tau$. Recall also that $f^\tau(x_\mini^{\tau-1}(1)) = 1$, $f^\tau(x_\mini^{\tau-1}(2,4,6,\dots)) = 2$, and $f^\tau(x_\mini^{\tau-1}(3,5,7,\dots)) = 3$. Thus, since $a \not= b$, we have that either $a$ and $b$ are both even, or $a$ and $b$ are both odd and greater than $1$.
    
    In either case, consider $(a',b')$ defined as $a' = a-1$ and $b'=b-1$. Note in particular that $f(x_\mini^{\tau-1}(a)) \not= f(x_\mini^{\tau-1}(a'))$, and $f(x_\mini^{\tau-1}(b)) \not= f(x_\mini^{\tau-1}(b'))$. Then it holds that 
    \begin{align*}
        v^\tau[a,b] + v^\tau[a',b'] 
        =&~ v^\tau(x_\mini^{\tau-1}(a)) + v^\tau(x_\mini^{\tau-1}(b)) + v^\tau(x_\mini^{\tau-1}(a')) + v^\tau(x_\mini^{\tau-1}(b')) \\
        \ge&~ \left( \frac23 - \epsilon \right) \cdot M + \left( \frac23 - \epsilon \right) \cdot M \\
        =&~ \left( \frac43 - 2\epsilon \right) \cdot M
    \end{align*}
    by appealing to Lemma~\ref{lem:minigame-ij-2/3}. 
    
    Finally, notice that the map $(a, b) \mapsto (a',b')$ as defined above for $v^\tau[a,b] < \frac23 - \epsilon$ is injective. (Also since $v^\tau[a,b] + v^\tau[a',b'] \ge \left( \frac43 - 2\epsilon \right) \cdot M$, it holds that $v^\tau[a',b'] > \left( \frac23 - \epsilon \right) \cdot M$, so the pair $(a',b')$ doesn't map to a different pair.) Thus, we have found our bijection.
\end{proof}




\begin{lemma} \label{lem:pair12-2/3}
    After any number of $\tau \le R$ rounds in chunk $t$, it holds that 
    \[
        \posx_\pair(1,2; \tau) 
        = \posx_\mini(1; \tau) + \posx_\mini(2; \tau)
        \ge \left( \frac23 - \epsilon \right) \cdot \tau M - 3\epsilon R M. 
    \]
\end{lemma}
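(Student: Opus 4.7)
I plan to prove by induction on $\tau$ that $\posx_\pair(1,2;\tau) \ge \left(\frac23 - \epsilon\right)\tau M - 3\epsilon R M$, using two ingredients already established: the pointwise bound from Lemma~\ref{lem:minigame-ij-2/3} and the bijection from the preceding lemma. The starting observation is that the specific pair $(x_\mini^{\tau-1}(1), x_\mini^{\tau-1}(2))$, which realizes $\posx_\pair(1,2;\tau-1)$, has $f^\tau$-values $1$ and $2$, so by Lemma~\ref{lem:minigame-ij-2/3} its round-$\tau$ update is at least $\left(\frac23 - \epsilon\right)M$. Thus after every round some pair reaches position at least $\posx_\pair(1,2;\tau-1) + \left(\frac23 - \epsilon\right)M$, and if that pair remains the minimum the inductive step is immediate.

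The obstacle is a ``shuffle round'' in which a different pair $Q$ becomes the new minimum, with round-$\tau$ update strictly less than $\left(\frac23 - \epsilon\right)M$. The contrapositive of Lemma~\ref{lem:minigame-ij-2/3} forces $Q$'s two coins to share an $f^\tau$-value, so both of their minigame ranks at time $\tau-1$ are at least $2$. Writing those ranks as $(i,j)$ with $2 \le i < j$, the preceding bijection lemma provides $Q'$ at minigame ranks $(i-1, j-1)$ with $Q' < Q$ in the partial order and the round-$\tau$ updates of $Q$ and $Q'$ summing to at least $\left(\frac43 - 2\epsilon\right)M$. Hence in every shuffle round, the lag $\ell_\tau$ at the new minimum is matched by an equal-or-greater round-$\tau$ surplus at a strictly partial-order-smaller pair.

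To convert this local compensation into a global bound on $\sum_\tau \ell_\tau$, I plan to introduce an auxiliary potential
\[
    \Phi(\tau) := \posx_\pair(1,2;\tau) + \sum_{k=3}^{K} \min\!\bigl(\posx_\mini(k;\tau) - \posx_\mini(2;\tau),\; \delta M\bigr)
\]
for suitably chosen constants $K = O_\epsilon(1)$ and $\delta = O(\epsilon)$, and argue by case analysis on the round type that $\Phi(\tau) - \Phi(\tau-1) \ge \left(\frac23 - \epsilon\right)M$. In non-shuffle rounds this follows directly from the starting observation. In shuffle rounds, the gap $\posx_\mini(i;\tau-1) + \posx_\mini(j;\tau-1) - 2\posx_\mini(2;\tau-1)$ lower-bounds $\ell_\tau$, and this is exactly what the capped-sum terms in $\Phi$ charge for; the bijection-provided surplus at $Q'$ then replenishes the capped sum going forward. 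Since the capped sum is bounded by $(K-2)\delta M = O(\epsilon R M)$ throughout the chunk, rearranging the telescoped inequality gives $\posx_\pair(1,2;\tau) \ge \left(\frac23 - \epsilon\right)\tau M - 3\epsilon RM$ as desired.

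The hard part will be verifying that $\Phi$ really does grow by $\left(\frac23-\epsilon\right)M$ in every shuffle round, including when shuffles cascade (the compensating pair $Q'$ contains a coin that itself recently participated in a shuffle) and when the minigame ordering at small ranks shifts discretely between consecutive rounds. This requires a careful case analysis tracking which minigame coin sits at each small rank before and after the round and verifying, in every case, that the change in the capped sum together with the change in $\posx_\pair(1,2;\tau)$ absorbs the lag.
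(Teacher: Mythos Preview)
Your reduction to a ``shuffle'' analysis and your identification of the bijection lemma as the key tool are both correct, but the specific potential $\Phi$ you propose does not do what you claim: it does not increase by $\left(\frac23-\epsilon\right)M$ every round, for \emph{any} choice of $K = O_\epsilon(1)$ and $\delta$.

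Consider the adversary who, starting from all $|S_\mini|$ coins at position $0$, sends the message $m_2$ in every round. Then in each round the coins at even minigame ranks get update $0$ while rank $1$ and all odd ranks $\ge 3$ get update $\ge \left(\frac23-\epsilon\right)M$. After $\tau$ rounds roughly $|S_\mini|/2^\tau$ coins remain at position $0$, so $\posx_\pair(1,2;\tau)=0$ for all $\tau \le \log_2|S_\mini|-1$. During these rounds all growth in $\Phi$ must come from the capped sum. But in round $\tau$ only about $|S_\mini|/2^\tau$ coins leave the zero group, so the capped sum grows by at most $(|S_\mini|/2^\tau)\,\delta M$; once $|S_\mini|/2^\tau$ drops below $\frac{2}{3\delta}$ this is strictly less than $\left(\frac23-\epsilon\right)M$. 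With $\delta = O(\epsilon)$ this happens almost immediately. If instead you take $\delta$ an absolute constant (say $\delta \ge 1/3$, which the last few shuffle rounds force), then to keep $(K-2)\delta M \le 3\epsilon R M$ you need $K \ll |S_\mini|$, and now in the \emph{early} rounds every coin at ranks $3,\ldots,K$ is still at $0$, so the capped sum is identically $0$ and $\Phi$ does not grow at all. There is no choice of $(K,\delta)$ that satisfies both regimes: the total lag to absorb is $\Theta(M\log|S_\mini|)$ but a linear capped sum with $K$ terms of height $\delta M$ can supply at most $(K-2)\delta M$, and the per-round supply collapses exactly when the number of coins leaving the zero set becomes small.

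The paper's proof avoids this by using an \emph{exponential} potential instead of a linear one: $\phi(\tau)=\sum_{(i,j)}(1+\epsilon)^{\frac{3}{4M}d(i,j;\tau)}$ where $d(i,j;\tau)=\max\{0,(\frac23-\epsilon)\tau M - \pos_\pair(i,j;\tau)\}$ is the deficit of each pair. The bijection lemma is then used pairwise: for $(a,b)\leftrightarrow(a',b')$ one shows $w(i_a,i_b;\tau)+w(i_{a'},i_{b'};\tau)\le e^{\epsilon^2/2}\bigl(w(i_a,i_b;\tau-1)+w(i_{a'},i_{b'};\tau-1)\bigr)+\epsilon$, crucially using that $(a',b')<(a,b)$ implies $w(i_{a'},i_{b'};\tau-1)\ge w(i_a,i_b;\tau-1)$ together with convexity of $z\mapsto (1+\epsilon)^z+(1+\epsilon)^{-z}$. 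Summing gives $\phi(\tau)\le e^{\epsilon^2/2}\phi(\tau-1)+O(\epsilon|S_\mini|^2)$, hence $\phi(\tau)\le e^{R\epsilon^2/2}\cdot O_\epsilon(|S_\mini|^2)$, and taking a logarithm bounds the maximum deficit by $O(\epsilon R M + M\log|S_\mini|/\epsilon)$. The exponential weighting is exactly what lets a single bounded potential simultaneously cover the many-coins-at-zero regime and the few-coins-at-zero regime that your linear sum cannot reconcile.
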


\begin{proof}
    Consider the potential function 
    \[
        \phi(\tau) := \sum_{(i,j) \in \binom{S_\mini}{2}} w(i,j; \tau),
    \]
    where
    \[
        w(i,j; \tau) := (1 + \epsilon)^{\frac3{4M} \cdot d(i,j;\tau)}
    \]
    and
    \[
        d(i,j;\tau) := \max \left\{ 0, \left( \frac23 - \epsilon \right) \cdot \tau M - \pos_\pair(i,j;\tau) \right\}.
    \]
    Since $\pos_\pair(i, j; 0) = 0$ for all $i, j \in \binom{S_\mini}{2}$, all values of $d(i,j; \tau)$ are $0$ at time $\tau = 0$. Then $\phi(0) = \binom{|S_\mini|}{2}$.
    
    Consider any bijected pair of values $(a, b)$ and $(a', b')$ for the update function $v^\tau$, and let $i_a = x_\mini^{\tau-1}(a)$, $i_b = x_\mini^{\tau-1}(b)$, $i_{a'} = x_\mini^{\tau-1}(a')$, and $i_{b'} = x_\mini^{\tau-1}(b')$. 
    
    \begin{claim} \label{claim:w-potential}
        It holds that
        \[
            w(i_a, i_b; \tau) + w(i_{a'}, i_{b'}; \tau) \le e^{\epsilon^2/2} \cdot \big( w(i_a, i_b; \tau-1) + w(i_{a'}, i_{b'}; \tau-1) \big) + \epsilon.
        \]
    \end{claim}
    
    \begin{proof}
        To see why this claim holds, we will consider two cases based on the values of $v^\tau[a,b]$ and $v^\tau[a',b']$.
        
        \begin{caseof}
        \case{Both $v^\tau[a,b]$ and $v^\tau[a',b']$ are $\ge \left( \frac23 - \epsilon \right) \cdot M$.}{
            It follows that  $d(i_a, i_b; \tau) \le d(i_a, i_b; \tau-1)$ and $d(i_{a'}, i_{b'}; \tau) \le d(i_{a'}, i_{b'}; \tau - 1)$, so 
            \begin{align*}
                w(i_a, i_b; \tau) + w(i_{a'}, i_{b'}; \tau) 
                &\le w(i_a, i_b; \tau-1) + w(i_{a'}, i_{b'}; \tau-1) \\
                &\le e^{\epsilon^2/2} \cdot \big( w(i_a, i_b; \tau-1) + w(i_{a'}, i_{b'}; \tau-1) \big) + \epsilon.
            \end{align*}
        }
        \case{$v^\tau[a,b] < \left( \frac23 - \epsilon \right) \cdot M$.}{
            Write $v^\tau[a,b] = \left( \frac23 - \epsilon \right) \cdot M - \delta$, where $\delta \le \left( \frac23 - \epsilon \right) \cdot M$. It holds that 
            \[
                d(i_a, i_b; \tau) \le d(i_a, i_b; \tau-1) + \delta.
            \]
            Furthermore, $v^\tau[a',b'] \ge \left( \frac23 - \epsilon \right) \cdot M + \delta$. 
            We consider two subcases.
            
            \begin{subcaseof}
                \subcase{$d(i_{a'},i_{b'}; \tau-1) \ge \delta$}{
                    In this case, we have $d(i_{a'}, i_{b'}; \tau) \le d(i_{a'},i_{b'}; \tau-1) - \delta$. Then 
                    \begin{align*}
                        w(i_a,i_b; \tau) + w(i_{a'},i_{b'}; \tau) 
                        &\le (1+\epsilon)^{\frac3{4M} \cdot \delta} \cdot w(i_a, i_b; \tau-1) + (1+\epsilon)^{-\frac3{4M} \cdot \delta} \cdot w(i_{a'}, i_{b'}; \tau-1) \\
                        &\le \frac{(1+\epsilon)^{\frac3{4M} \cdot \delta} + (1+\epsilon)^{-\frac3{4M} \cdot \delta}}{2} \cdot \big( w(i_a,i_b; \tau-1) + w(i_{a'},i_{b'}; \tau-1) \big) \\
                        &\le \frac{(1+\epsilon) + (1+\epsilon)^{-1}}{2} \cdot \big( w(i_a,i_b; \tau-1) + w(i_{a'},i_{b'}; \tau-1) \big) \\
                        &\le e^{\epsilon^2/2} \cdot \big( w(i_a,i_b; \tau-1) + w(i_{a'},i_{b'}; \tau-1) \big) \\
                        &\le e^{\epsilon^2/2} \cdot \big( w(i_a,i_b; \tau-1) + w(i_{a'},i_{b'}; \tau-1) \big) + \epsilon,
                    \end{align*}
                    where we use in the second inequality convexity along with the fact that $(a',b') < (a,b)$ and therefore $d(i_a,i_b;\tau-1) \le d(i_{a'},i_{b'}'; \tau-1) \implies w(i_a,i_b;\tau-1) \le w(i_{a'},i_{b'};\tau-1)$; in the third inequality we use that $\delta < \frac23 M \implies \frac3{4M} \cdot \delta < \frac12 < 1$ and convexity of the function $\frac{(1+\epsilon)^z + (1+\epsilon)^{-z}}{2}$; and in the fourth inequality we use that $\frac{(1+\epsilon) + (1+\epsilon)^{-1}}{2} \le 1 + \frac{\epsilon^2}{2} \le e^{\epsilon^2/2}$. 
                }
                \subcase{$d(i_{a'},i_{b'};\tau-1) < \delta$.}{
                    In this case, we have that $d(i_{a'},i_{b'};\tau) = 0$. Also since $d(i_{a'},i_{b'};\tau-1) \ge d(i_a,i_b; \tau-1)$, we know that $d(i_a, i_b;\tau) < 2\delta$. This implies that 
                    \begin{align*}
                        w(i_a,i_b; \tau) + w(i_{a'},i_{b'}; \tau) 
                        &\le (1+\epsilon)^0 + (1+\epsilon)^{\frac3{4M} \cdot 2\delta} \\
                        &\le 2+\epsilon \\
                        &\le \big( w(i_a,i_b; \tau-1) + w(i_{a'},i_{b'}; \tau-1) \big) + \epsilon \\
                        &\le e^{\epsilon^2/2} \cdot \big( w(i_a,i_b; \tau-1) + w(i_{a'},i_{b'}; \tau-1) \big) + \epsilon. \qedhere
                    \end{align*}
                }
            \end{subcaseof}
        }
        \end{caseof}
    \end{proof}
    
    We return now to the proof of Lemma~\ref{lem:pair12-2/3}. We can add up $w(i_a, i_b; \tau) + w(i_{a'}, i_{b'}; \tau)$ as given in Claim~\ref{claim:w-potential} for all bijected pairs $(a, b) \leftrightarrow (a', b')$ to obtain $\phi(\tau)$, giving us that 
    \[
        \phi(\tau) \le e^{\epsilon^2/2} \cdot \phi(\tau-1) + \epsilon \cdot \frac{\binom{|S_\mini|}{2}}{2}. 
    \]
    Solving the recurrence with $\phi(0) = \binom{|S_\mini|}{2}$, we get
    \begin{align*}
        \phi(\tau) 
        &\le e^{\tau \epsilon^2/2} \cdot \binom{|S_\mini|}{2} + \frac{e^{\tau \epsilon^2/2}-1}{e^{\epsilon^2/2}-1} \cdot \epsilon \cdot \frac{\binom{|S_\mini|}{2}}{2} \\
        &\le e^{\tau\epsilon^2/2} \cdot \binom{|S_\mini|}{2} \cdot \left( 1 + \frac{\epsilon}{2(e^{\epsilon^2/2}-1)} \right) \\
        &\le e^{R\epsilon^2/2} \cdot \binom{|S_\mini|}{2} \cdot \left( 1 + \frac{1}{\epsilon} \right).
    \end{align*}
    Finally, consider the pair $(i, j)$ where $i = x_\mini^\tau(1)$ and $j = x_\mini^\tau(2)$. We have that
    \begin{align*}
        w(i,j; \tau) 
        = (1+\epsilon)^{\frac3{4M} \cdot d(i,j;\tau)} 
        &\le \phi(\tau) \\
        &\le e^{R\epsilon^2/2} \cdot \binom{|S_\mini|}{2} \cdot \left( 1 + \frac{1}{\epsilon} \right) \\
        &\le e^{R\epsilon^2/2} \cdot (CRL_\epsilon)^2 \cdot \left( 1 + \frac{1}{\epsilon} \right),
    \end{align*}
    which gives 
    \begin{align*}
        \left( \frac23 - \epsilon \right) \cdot \tau M - \posx_\pair(1,2; \tau) 
        &\le d(i,j;\tau) \\
        &\le \frac{4M}{3 \ln(1+\epsilon)} \cdot \left( R\epsilon^2/2 + 2 \ln (CRL_\epsilon) + \frac{1}{\epsilon} \right) \\
        &\le \frac{4M}{\epsilon} \cdot \left( R\epsilon^2/2 + 2 \ln (CRL_\epsilon) + \frac{1}{\epsilon} \right) \\
        &= 2\epsilon RM + M \cdot \frac{4}{\epsilon} \cdot \left( 2 \ln (CRL_\epsilon) + \frac{1}{\epsilon} \right) \\
        &\le 3\epsilon RM,
    \end{align*}
    where we use that $\frac\epsilon3 \le \ln \left( 1 + \frac{1}{\epsilon} \right) \le \frac{1}{\epsilon}$, and where the last inequality follows since 
    \begin{align*}
        R &= \frac{100 \ln L_\epsilon}{\epsilon^3} \\
        &\ge \frac{8\ln(1/\epsilon)}{\epsilon^2} + \frac{8 \cdot \left( \ln 100 + \ln L_\epsilon \right)}{\epsilon^2} + \frac{8 \ln L_\epsilon}{\epsilon^2} + \frac{4}{\epsilon^3} \\
        &\ge \frac{8\ln C}{\epsilon^2} + \frac{8 \ln R}{\epsilon^2} + \frac{8 \ln L_\epsilon}{\epsilon^2} + \frac{4}{\epsilon^3} \\
        &\ge \frac{4}{\epsilon^2} \cdot \left( 2 \ln (CRL_\epsilon) + \frac{1}{\epsilon} \right). \qedhere
    \end{align*}
\end{proof}

\begin{corollary} \label{cor:triple123-1}
    After the $R$ rounds of chunk $t$, it holds that 
    \[
        \posx_\mini(1; R) + \posx_\mini(2; R) + \posx_\mini(3; R) \ge \left( 1 - 6\epsilon \right) \cdot RM.
    \]
    In particular, for any pairwise distinct $i,j,k \in [(t-1)RL_\epsilon]$, it holds that
    \[
        \pos_\mini(i; R) + \pos_\mini(j; R) + \pos_\mini(k; R) \ge \left( 1 - 6\epsilon \right) \cdot RM.
    \]
\end{corollary}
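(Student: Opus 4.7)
The plan is to piggyback on Lemma~\ref{lem:pair12-2/3}, observing that its proof establishes something a bit stronger than its stated conclusion. Inspecting that proof, the potential function $\phi(\tau) = \sum_{(i,j)} w(i,j;\tau)$ bounds $w(i',j';\tau)$ for \emph{every} pair $(i',j') \in \binom{S_\mini}{2}$, since $w$ takes nonnegative values. The final step of the proof only uses $w(i,j;\tau) \le \phi(\tau)$, and the specific choice $(i,j) = (x_\mini^\tau(1), x_\mini^\tau(2))$ is merely to expose the bound on the minimum pair. Replaying the same chain of inequalities with any other pair therefore yields, for every distinct $i,j \in [(t-1)RL_\epsilon]$,
\[
\pos_\mini(i;R) + \pos_\mini(j;R) \ge \left(\tfrac23 - 4\epsilon\right) \cdot RM.
\]
My first step will be to record this strengthening explicitly as a one-line remark.

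Next, I apply this uniform pair bound to the three pairs formed from the three smallest coins of $\mini$ at time $R$, i.e.\ to $(a,b) \in \{(1,2),(1,3),(2,3)\}$, obtaining
\[
\posx_\mini(1;R) + \posx_\mini(2;R),\ \posx_\mini(1;R) + \posx_\mini(3;R),\ \posx_\mini(2;R) + \posx_\mini(3;R) \ \ge\ \left(\tfrac23 - 4\epsilon\right) \cdot RM.
\]
Summing these three inequalities and dividing by $2$ gives
\[
\posx_\mini(1;R) + \posx_\mini(2;R) + \posx_\mini(3;R) \ \ge\ \tfrac{3}{2}\left(\tfrac23 - 4\epsilon\right) \cdot RM \ =\ (1 - 6\epsilon)\, RM,
\]
which is the first statement of the corollary.

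For the ``in particular'' clause, I will just invoke the ordering: since $x_\mini^R(1), x_\mini^R(2), x_\mini^R(3)$ are by definition the three coins of smallest position at time $R$, any three pairwise distinct $i,j,k \in [(t-1)RL_\epsilon]$ satisfy $\pos_\mini(i;R) + \pos_\mini(j;R) + \pos_\mini(k;R) \ge \posx_\mini(1;R) + \posx_\mini(2;R) + \posx_\mini(3;R)$, so the second claim reduces to the first. I do not anticipate a real obstacle here: all of the combinatorial and potential-function work lives inside Lemma~\ref{lem:pair12-2/3}, and the only subtlety is justifying that its proof is actually a per-pair statement rather than a statement exclusively about $(x_\mini^\tau(1), x_\mini^\tau(2))$.
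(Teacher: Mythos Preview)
Your proposal is correct, but it takes a slightly more roundabout route than the paper. You first strengthen Lemma~\ref{lem:pair12-2/3} to a per-pair bound by inspecting its proof, then apply it to the three pairs $(1,2),(1,3),(2,3)$ and average. The paper avoids this strengthening entirely: it simply applies Lemma~\ref{lem:pair12-2/3} as stated to obtain $\posx_\mini(1;R)+\posx_\mini(2;R)\ge(\tfrac23-4\epsilon)RM$, and then observes that since $\posx_\mini(3;R)\ge\posx_\mini(2;R)\ge\posx_\mini(1;R)$, the third coin alone satisfies $\posx_\mini(3;R)\ge(\tfrac13-2\epsilon)RM$; adding gives $(1-6\epsilon)RM$. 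Both arguments yield the same constant, and your ``in particular'' clause matches the paper's. Your approach has the minor cost of needing to reopen the proof of the lemma, whereas the paper's uses only its black-box statement plus the trivial ordering $\posx_\mini(3;R)\ge\tfrac12(\posx_\mini(1;R)+\posx_\mini(2;R))$.
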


\begin{proof}
    By Lemma~\ref{lem:pair12-2/3}, 
    \[
        \posx_\mini(1; R) + \posx_\mini(2; R) \ge \left( \frac23 - 4\epsilon \right) \cdot RM.
    \]
    Since $\posx_\mini(3; R) \ge \posx_\mini(2; R) \ge \posx_\mini(1; R)$, it follows that 
    \[
        \posx_\mini(3; R) \ge \left( \frac13 - 2\epsilon \right) \cdot RM,
    \]
    from which it follows that
    \[
        \posx_\mini(1; R) + \posx_\mini(2; R) + \posx_\mini(3; R) \ge \left( 1 - 6\epsilon \right) \cdot RM \ . \qedhere
    \]
\end{proof}

We finally return to the proof of Lemma~\ref{lem:coin-minigame-chunk}, restated below for convenience. 

\lemminigame*

\begin{proof}[Proof of~\ref{item:minigame-chunk-lb-corruption}]
    By Lemma~\ref{lem:minigame-chunk-lb}, the amount of corruption done by the adversary throughout the chunk if Alice's input were actually $y$ is at least $\pos_\mini(\partdec(s,y); R)$. So for all $y \in \cL$, $\pos_\mini(\partdec(s,y); R) = \pos_\main(y; t) - \pos_\main(y; t-1)$ is a lower bound on the amount of corruption the adversary has caused.
    
    Now, let us consider the other values of $y$. At the end of the $R$ rounds, let $\hat{i}$ be the value of $i \in [(t-1)RL_\epsilon]$ that minimizes $\pos_\mini(i; R)$. If $\partdec(s,y) \not= \hat{i}$, then by Lemma~\ref{lem:pair12-2/3} it holds that 
    \begin{align*}
        \pos_\mini(\partdec(s,y); R) 
        &\ge \frac{\left( \frac23 - \epsilon \right) \cdot R M - 3\epsilon RM}{2} \\
        &\ge \left( \frac13 - 2\epsilon \right) \cdot RM \\
        &\ge \left( \frac13 - 4\epsilon \right) \cdot RM,
    \end{align*}
    so $\pos_\main(y;t) - \pos_\main(y;t-1)$ lower bounds the corruption done by the adversary.
    
    Finally, if $\partdec(s,y) = \hat{i}$, let $R'$ be the last round after which $\pos_\mini(\hat{i}; \tau)$ was smallest out of all the $\mini$ game coin positions for the rest of the chunk (so immediately after round $R'$, $\pos_\mini(\hat{i}; R')$ was not yet the smallest). Then by Lemma~\ref{lem:pair12-2/3}, it holds that 
    \[
        \pos_\mini(\hat{i}; R') 
        \ge \frac{\left( \frac23 - \epsilon \right) \cdot R'M - 3\epsilon RM}{2}.
    \]
    In rounds $R'+2$ and after, Bob always sends a function $f^\tau : [(t-1)RL_\epsilon] \rightarrow \{ 1, 2, 3 \}$ satisfying $f^\tau(\hat{i}) = 1$. Since $y \not\in \bigcup_\tau \Lambda_\tau$, it holds that for any message $m$ received in round $\tau \ge R'+2$, $\Delta(\ECC[s,f^\tau](y), m) \ge \left( \frac13 - 2\epsilon \right) \cdot M$. Altogether, the total corruption on $y$ is lower bounded by
    \begin{align*}
        \frac{\left( \frac23 - \epsilon \right) \cdot R'M - 3\epsilon RM}{2} &+ \left( \frac13 - 2\epsilon \right) \cdot M \cdot (R - R' - 1) \\
        &= \left( \frac13 - \frac72 \epsilon \right) \cdot RM + \frac32 \epsilon R'M - \left( \frac13 - 2\epsilon \right) \cdot M \\
        &\ge \left( \frac13 - \frac72\epsilon \right) \cdot RM - \left( \frac13 - 2\epsilon \right) \cdot \epsilon RM \\
        &\ge \left( \frac13 - 4\epsilon \right) \cdot RM \ . \qedhere
    \end{align*}
\end{proof}

\begin{proof}[Proof of~\ref{item:minigame-chunk-12-2/3} and~\ref{item:minigame-chunk-123-1}]
    At the end of a chunk $t$ in which the minigame is played, the largest possible position of a coin in the minigame is $RM$, so the largest possible update to any coin $y \in \{ 0, 1 \}^k$ is $\pos_\main(y; t) = \pos_\main(y; t-1) + RM$. However, recall that if the minigame were played in chunk $t$, it must mean that 
    \begin{align*}
        \posx_\main(1; t-1), \posx_\main(2; t-1), \posx_\main(3; t-1) 
        &\le \posx_\main((t-1)RL_\epsilon+1; t-1) - \epsilon RCM \\
        &= \posx_\main((t-1)RL_\epsilon+1; t-1) - RM,
    \end{align*}
    using that $C = \frac1\epsilon$. In particular, this implies that 
    \[
        \pos_\main(x_\main^{t-1}(1); t), \pos_\main(x_\main^{t-1}(2); t), \pos_\main(x_\main^{t-1}(3); t)
        \le \pos_\main(x_\main^{t-1}(i); t)
    \]
    for all $i > (t-1)RL_\epsilon$. 
    Thus, if we want to find the positions of the first two or three coins after chunk $t$, it suffices to consider only the resulting positions of the first $(t-1)RL_\epsilon$ coins after chunk $t-1$ at time $t$.
    
    Now, for~\ref{item:minigame-chunk-12-2/3},
    \begin{align*}
        \posx_\main(1; t) + \posx_\main(2; t) 
        =&~ \left( \begin{aligned}
            \pos_\main(x_\main^t(1); t) + \pos_\main(x_\main^t(2); t) \\
            - \pos_\main(x_\main^t(1); t-1) - \pos_\main(x_\main^t(2); t-1)
        \end{aligned} \right) \\
        &+ \pos_\main(x_\main^t(1); t-1) + \pos_\main(x_\main^t(2); t-1) \\
        \ge&~ \left( \frac23 - 4\epsilon \right) \cdot RM + \pos_\main(x_\main^t(1); t-1) + \pos_\main(x_\main^t(2); t-1) \\
        \ge&~ \left( \frac23 - 4\epsilon \right) \cdot RM + \posx_\main(1; t-1) + \posx_\main(2; t-1),
    \end{align*}
    where the first inequality holds due to 
    \begin{align*}
        \left( \begin{aligned}
            \pos_\main(x_\main^t(1); t) + \pos_\main(x_\main^t(2); t) \\
            - \pos_\main(x_\main^t(1); t-1) - \pos_\main(x_\main^t(2); t-1)
        \end{aligned} \right)
        &= \pos_\mini(\partdec(s, x_\main^t(1))) + \pos_\mini(\partdec(s, x_\main^t(2))) \\
        &\ge \left( \frac23 - 4\epsilon \right) \cdot RM.
    \end{align*} by Lemma~\ref{lem:pair12-2/3}.
    
    As for~\ref{item:minigame-chunk-123-1}, we similarly write
    \begin{align*}
        \posx_\main(1; t) &+ \posx_\main(2; t) + \posx_\main(3; t) \\ 
        =&~ \left( \begin{aligned}
            \pos_\main(x_\main^t(1); t) + \pos_\main(x_\main^t(2); t) + \pos_\main(x_\main^t(3); t) \\
            - \pos_\main(x_\main^t(1); t-1) - \pos_\main(x_\main^t(2); t-1) - \pos_\main(x_\main^t(3); t-1)
        \end{aligned}\right) \\
        &+ \pos_\main(x_\main^t(1); t-1) +  \pos_\main(x_\main^t(2); t-1) + \pos_\main(x_\main^t(3); t-1) \\
        \ge&~ \left( 1 - 6\epsilon \right) \cdot RM + \pos_\main(x_\main^t(1); t-1) +  \pos_\main(x_\main^t(2); t-1) + \pos_\main(x_\main^t(3); t-1) \\
        \ge&~ \left( 1 - 6\epsilon \right) \cdot RM + \posx_\main(1; t-1) + \posx_\main(2; t-1) + \posx_\main(3; t-1), 
    \end{align*}
    using Corollary~\ref{cor:triple123-1}. 
\end{proof}

\section{Lower Bounds}\label{sec:lower-bounds}
\subsection{Lower Bound on the Number of Bits of Feedback} \label{sec:lb-bits}

Suppose that Alice has input $x \in \{ 0, 1 \}^k$, which she is trying to convey to Bob with feedback from Bob. For error-correcting codes without feedback, it is well known the maximum error (resp. erasure) resilience the parties can achieve for arbitrarily large $k$ is $\frac14$ (resp. $\frac12$) by the Plotkin bound. Here we show that in order surpass these thresholds, Bob must send at least $\Omega(\log{k})$ bits of feedback. With only $o(\log{k})$ bits of feedback, the parties cannot achieve higher error resilience than they could without feedback.

\begin{theorem}
    For any $\delta>0$, any feedback ECC family $\{\code\}_{k \in \bbN}$ that achieves error resilience $\frac14 +\delta$ requires $\Omega_\delta(\log{k})$ bits of feedback. Similarly, any $\code$ that achieves erasure resilience $\frac12 +\delta$ requires $\Omega_\delta(\log{k})$ bits of feedback.
\end{theorem}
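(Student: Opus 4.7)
My plan is a proof by contradiction combining a simulation-based adversarial strategy with multicolor Ramsey and the Plotkin bound. I describe the error case in detail; the erasure case is a strict simplification of the same template. Suppose for contradiction that a feedback ECC $\code$ has error resilience $\tfrac14+\delta$ yet uses only $\zeta := \zeta(\code_k) \le c\log k$ bits of feedback in total, for a small constant $c = c(\delta) < 1$ fixed later and for $k$ sufficiently large in terms of $\delta$. Because the protocol is deterministic and the per-round lengths are fixed in advance, once a complete feedback transcript $s \in \{0,1\}^{\zeta}$ is pinned down, the length-$|\code_k|$ string $a_x(s)$ that Alice would transmit with input $x$ under feedback $s$ is well-defined.

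For each unordered pair $\{x,y\} \subseteq \{0,1\}^k$ I associate a canonical feedback string $s(x,y) \in \{0,1\}^{\zeta}$ by the following ``balanced mixing'' simulation. Eve tracks both hypothetical worlds (Alice-has-$x$ and Alice-has-$y$) in parallel; at the start of any round, the feedback so far is fixed, hence the bits $a_x^r, a_y^r$ Alice would emit in round $r$ in each world are determined. Eve then forwards bits to Bob one at a time: where $a_x^r = a_y^r$ she sends that common bit (no corruption to either world), and where $a_x^r \neq a_y^r$ she sends either $a_x^r$ or $a_y^r$, chosen greedily so that the cumulative corruption counts in the two worlds stay within $1$ of each other. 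Each decision depends only on bits Bob has already received, so the process is unambiguous; it yields a unique bit sequence seen by Bob, a unique resulting feedback string $s(x,y)$, and corresponding targets $a_x(s(x,y))$ and $a_y(s(x,y))$. The per-world corruption counts sum to exactly $D := \Delta(a_x(s(x,y)), a_y(s(x,y)))$ and each is at most $\lceil D/2 \rceil$ by construction; thus if $D \le (\tfrac12 + 2\delta)|\code_k|$, both counts lie within the $(\tfrac14 + \delta)|\code_k|$ adversary budget while Bob sees identical bits under either hypothetical input, contradicting error resilience. Hence
\[
\Delta\!\bigl(a_x(s(x,y)),\; a_y(s(x,y))\bigr) \;>\; \left(\tfrac12 + 2\delta\right)|\code_k| \qquad \text{for every pair } \{x,y\}.
\]

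Next I invoke Theorem~\ref{thm:ramsey}. Color each edge $\{x,y\}$ of the complete graph $K_{2^k}$ by $s(x,y)$, using at most $\ell := 2^{\zeta} \le k^c$ colors, and set $t := \lceil 2/\delta \rceil$. The Plotkin bound forbids a binary code of length $|\code_k|$ with $t$ codewords all at pairwise distance exceeding $(\tfrac12 + 2\delta)|\code_k|$, since any such code has size at most $\tfrac{1+4\delta}{4\delta} < \tfrac{2}{\delta} \le t$. But Theorem~\ref{thm:ramsey} gives $r(t;\ell) \le \ell^{t\ell} = 2^{\zeta t 2^{\zeta}} = 2^{O_\delta(k^c \log k)}$, and $c < 1$ together with $k$ large in terms of $\delta$ ensures $2^k > r(t;\ell)$. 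Consequently some $t$-element subset $S \subseteq \{0,1\}^k$ is monochromatic with common color $s^*$, and then $\{a_x(s^*) : x \in S\}$ is precisely the Plotkin-violating code---contradiction.

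For the erasure case (resilience $\tfrac12 + \delta$), the adversary associated to each pair $\{x,y\}$ simply erases every position where the two hypothetical transmissions disagree, defined via the analogous round-by-round fixed-point simulation. This costs exactly $D$ erasures, so erasure resilience forces $D > (\tfrac12 + \delta)|\code_k|$, and the identical Ramsey-then-Plotkin pigeonholing concludes. The main subtlety I expect is rigorously setting up the simulation---showing that $s(x,y)$ is unambiguously defined despite the apparent circularity between the feedback $s$ and Alice's transmissions $a_x(s), a_y(s)$; this is resolved by an explicit round-by-round induction, using that each round's feedback depends only on bits Bob has already received. Once that is in place, the rest is a mechanical combination of Theorem~\ref{thm:ramsey} and the Plotkin bound.
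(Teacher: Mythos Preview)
Your proposal is correct and follows essentially the same approach as the paper: define a coloring of the edges of $K_{2^k}$ by feedback strings, show via an adversarial attack that every edge's color $s$ satisfies a large-distance condition $\Delta(a_x(s),a_y(s)) > (\tfrac12+\Theta(\delta))|\code_k|$, then combine multicolor Ramsey with the Plotkin bound to force $2^{\zeta} = \Omega_\delta(\log K / \log\log K)$. The only cosmetic differences are that the paper colors each edge by the lexicographically first $f$ with $\Delta(\code(v_i;f),\code(v_j;f)) \ge (\tfrac12+\delta)|\code|$ and uses a ``corrupt-to-$v_i$-then-switch-to-$v_j$'' attack to show such an $f$ exists, whereas you build the color directly from your balanced-mixing simulation; both routes yield the same distance inequality and the remainder is identical.
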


\begin{proof}
    Throughout the proof, fix $k$ and let $\code$ denote $\code_k$. Let $G$ be the complete graph whose vertices are all of Alice's possible messages $\in \{ 0, 1 \}^k$. We denote the vertices $v_1\ldots v_K$ where $K=2^k$. On each input $v_1\ldots v_K$, let $\code(v_i;f)$ denote the sequence of bits Alice sends throughout the protocol, where $f\in \{0,1\}^{\zeta(\code)}$ denotes the feedback bits she gets from Bob.
    
    In the graph $G$, we label the edge $(v_i,v_j)$ with $f\in \{0,1\}^{\zeta(\code)}$ if $f$ is the lexicographically first string for which
    \[
        \Delta \left( \code(v_i;f), \code(v_j;f) \right) \geq \left( \frac12+\delta \right)|\code|.
    \]
    
    We show that if there is an unlabeled edge $(v_i, v_j)$, then Eve has an attack, where Bob guesses incorrectly on at least one of $v_i$ and $v_j$. We describe an attack both in the case of errors and erasures.
    
    \begin{itemize}
    \item {\bf Errors:} 
        Regardless of whether Alice has $v_i$ or $v_j$ (which Eve may not know), Eve corrupts the string bit by bit to look like what Alice would send if she had $v_i$, and tracks how much corruption she would have needed to use if Alice had $v_j$ (if Alice actually has $v_j$ she uses this much corruption, and otherwise uses $0$ corruption). Once this quantity reaches $\left( \frac14+\frac12\delta \right)|\code|$, she switches to corrupting bit by bit so that Bob receives what Alice would send if she had $v_j$. Then, regardless of whether Alice had $v_i$ or $v_j$, Bob received the same string while sending the same feedback $f$.
    
        We need to show that the total corruption in either case is at most $\left( \frac14+\frac12 \delta \right)|\code|$. If Alice has $v_j$, then by the definition of the attack, Eve corrupted at most $\left( \frac14+\frac12\delta \right)|\code|$ bits. If Alice had $v_i$, then the number of bits corrupted was at most 
        \[
            \Delta \left( \code(v_i;f), \code(v_j;f) \right) - \left( \frac14+\frac12\delta \right)|\code| \leq \left( \frac14 + \frac12\delta \right)|\code|.
        \]
        Then, whether Alice actually had $v_i$ or $v_j$, the adversary needed only corrupt $\le \left( \frac14 - \frac12 \delta \right) |\code|$ of Alice's communication, so that Bob cannot distinguish whether Alice had $v_i$ or $v_j$.
    
    \item {\bf Erasures:}
        In this case, regardless of whether Alice has $v_i$ or $v_j$, Eve erases a bit if what Alice would send if she has $v_i$ differs what she would send with $v_j$. Then, regardless of whether Alice had $v_i$ or $v_j$, Bob received the same string, sending the same feedback $f$.
    
        The number of bits erased in either case is the same: specifically, it is 
        \[
            \Delta \left( \code(v_i;f), \code(v_j;f) \right) < \left( \frac12+\delta \right)|\code|
        \]
        since $(v_i, v_j)$ has no graph label. Then, knowing that the adversary is limited to erasing $\left( \frac12 + \delta \right) |\code|$ of Alice's communication, Bob cannot tell if Alice had input $v_i$ or $v_j$.
    \end{itemize}
    
    Thus, in order for a feedback ECC to have resilience to $\ge \frac14 + \delta$ errors or $\ge \frac12 + \delta$ erasures, every edge of the graph $G$ must have a label. 
    
    \begin{lemma}
        There is no $K_{\lceil 10/\delta\rceil}$ in $G$ with the same label $f$ on all edges.
    \end{lemma}
    
    \begin{proof}
        If there was one, then for all pairs $(v_i, v_j)$ in that subgraph, we would have 
        \[
            \Delta \left( \code(v_i;f), \code(v_j;f) \right) \geq \left( \frac12+\delta \right)|\code|
        \]
        However, this is impossible by the Plotkin bound \cite{Plotkin60}. 
    \end{proof}
    
    Therefore, we can view the labeling of $G$ as a coloring of the edges with $\ell = 2^{\zeta(\code)}$ colors avoiding any monochromatic $K_{\lceil 10/\delta\rceil}$. By a well-known bound on multicolor Ramsey numbers (Theorem~\ref{thm:ramsey}), the number of labels $\ell$ necessary for a graph on $K = 2^k$ vertices must satisfy 
    \begin{align*}
        &~\ell^{{\lceil 10/\delta\rceil}\ell}>K \\
        \implies &~ \ell = \Omega_\delta\left(\frac{\log{K}}{\log\log{K}}\right)
    \end{align*}
    Therefore, the number of possibilities for $f$ must be $\Omega_\delta\left(\frac{\log{K}}{\log\log{K}}\right)$, so the number of feedback bits transmitted by Bob is at least $\log{\Omega_\delta\left(\frac{\log{K}}{\log\log{K}}\right)}=\Omega_\delta(\log{k})$.
\end{proof}

\subsection{Lower Bound on Number of Rounds of Feedback} \label{sec:lb-rounds}

Next we show that for any fixed number of rounds of feedback $r$, the error (resp. erasure) resilience is capped strictly below $\frac13$ (resp. $1$). In other words, to get arbitrarily close to $\frac13$ error resilience or $1$ erasure resilience, the number of rounds of feedback needed must grow arbitrarily large.

First, we show this for errors.

\begin{theorem}
    For any $r\in \bbN$, for any feedback ECC family $\{\code\}_{k\geq 0}$ where Bob speaks in $r=\rho(\code)$ rounds, there exists $\delta(r)<\frac13$ such that the error resilience of $\code$ cannot exceed $\delta$.
\end{theorem}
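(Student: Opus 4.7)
The plan is to prove the theorem by induction on the number of feedback rounds $r$, constructing $\delta(r) < 1/3$ that approaches $1/3$ as $r$ grows. The base case $r=0$ is immediate: a $0$-round feedback ECC is a standard non-interactive code, whose error resilience is at most $1/4$ by the Plotkin bound, so set $\delta(0) := 1/4$.

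For the inductive step, assume $\delta(r-1) < 1/3$ bounds the resilience of every $(r-1)$-round feedback ECC family. Given an $r$-round feedback ECC $\code_k$ of length $n = |\code_k|$, let $\ell_1$ denote the number of bits Alice sends before Bob's first feedback round. The attack splits on a threshold $\gamma > 0$ to be chosen. If $\ell_1 \leq \gamma n$, Eve corrupts all $\ell_1$ bits of Alice's first transmission to an arbitrary fixed string. Since Bob's first feedback depends only on what he received, it is now independent of Alice's input $x$; the rest of the protocol is effectively an $(r-1)$-round feedback ECC of length $n - \ell_1$ on message space $\{0,1\}^k$. By the inductive hypothesis, Eve can fool Bob using at most $\delta(r-1)(n-\ell_1)$ additional corruptions, for a total corruption fraction at most $\gamma + \delta(r-1)(1-\gamma)$, which lies strictly below $1/3$ as long as $\gamma < (1/3 - \delta(r-1))/(1 - \delta(r-1))$.

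If instead $\ell_1 > \gamma n$, Alice's first $\ell_1$ bits form a fixed non-interactive code $C : \{0,1\}^k \to \{0,1\}^{\ell_1}$. The key tool is a \emph{three-confusion lemma}: for $k$ sufficiently large, there exist three inputs $x_1, x_2, x_3$ and a word $m' \in \{0,1\}^{\ell_1}$ with $\Delta(m', C(x_i)) \leq (1/4 + \eta)\ell_1$ for every $i$, where $\eta > 0$ is a small parameter. Plotkin's bound shows any sub-code with pairwise distances exceeding $(1/2 + 2\eta)\ell_1$ has size $O(1/\eta)$, so the ``far'' graph on codewords (edges between pairs at distance $> (1/2 + 2\eta)\ell_1$) has clique number $O(1/\eta)$. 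By the Ramsey estimate $R(3, O(1/\eta)) = O(1/\eta^2)$, once $2^k$ exceeds this threshold, the complementary ``close'' graph contains a triangle, producing three codewords with all pairwise distances $\leq (1/2 + 2\eta)\ell_1$. A coordinate-wise analysis bounding the Chebyshev radius by half the diameter then places these three points in a Hamming ball of radius $\leq (1/4 + \eta)\ell_1$, providing the center $m'$. Eve corrupts the first $\ell_1$ bits to $m'$ at cost $\leq (1/4+\eta)\ell_1$, leaving Bob unable to distinguish among $x_1, x_2, x_3$. For the remaining $n - \ell_1$ bits, Berlekamp's classical $1/3$-resilience upper bound for feedback codes distinguishing at least three messages (Appendix~\ref{sec:appendix-1/3}) supplies an attack using $\leq (n-\ell_1)/3$ further corruptions. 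The total corruption is at most $n/3 - (1/12 - \eta)\ell_1$, a fraction at most $1/3 - (1/12 - \eta)\gamma$, strictly below $1/3$ for $\eta < 1/12$.

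Choosing, for instance, $\gamma := (1/3 - \delta(r-1))/(2(1 - \delta(r-1)))$ and $\eta := 1/24$ balances the two cases and yields $\delta(r) := \max\bigl(\gamma + \delta(r-1)(1-\gamma),\ 1/3 - \gamma/24\bigr) < 1/3$, closing the induction. The principal technical obstacle is the three-confusion lemma in Case B: Plotkin's bound by itself supplies only pairs of close codewords, so promoting to triples requires the Ramsey argument sketched above together with the coordinate-level analysis that lets three codewords of pairwise distance $\leq (1/2 + 2\eta)\ell_1$ be enclosed in a Hamming ball of radius $\leq (1/4 + \eta)\ell_1$.
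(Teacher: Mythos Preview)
Your proof is correct and follows the same inductive skeleton as the paper: split on whether Alice's first segment is short (corrupt it entirely and recurse on an $(r-1)$-round protocol) or long (confuse three inputs in the first segment, then invoke Berlekamp's $\frac13$ upper bound on the remainder). The choice of threshold and the arithmetic differ cosmetically from the paper's, but the logic is identical.

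The one substantive difference is how you obtain the three-confusion claim in the long-first-segment case. The paper simply cites that the list-of-$2$ decoding radius for binary codes is $\frac14$ (Blinovski\u{\i}~1986, Alon--Bukh--Polyanskiy~2019), which immediately yields a center $m$ with three codewords inside radius $\frac14+\eta$. You instead derive this from first principles: Plotkin caps the clique number of the ``far'' graph at $O(1/\eta)$; the Ramsey bound $R(3,t)=O(t^2)$ then forces a triangle in the complementary ``close'' graph once $2^k$ is large; finally a coordinate-wise argument shows three binary strings of pairwise distance $\le d$ fit in a ball of radius $\le \lceil d/2\rceil$. This route is more self-contained and avoids the list-decoding citation, at the cost of extra work. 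One small caveat: the Chebyshev-center step genuinely yields radius $\le \lceil d/2\rceil$, not $d/2$ (e.g.\ three strings with $p=8$, $q=r=1$ have diameter $9$ but Chebyshev radius $5$), so your bound $(1/4+\eta)\ell_1$ should carry an additive $O(1)$; this is harmless asymptotically and is exactly the ``+1'' the paper also absorbs in its Case~2.
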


\begin{proof}
    We prove the statement by induction on $r$. For $r=0$, we can let $\delta(0)=\frac14$ because the parties cannot exceed $\frac14$ error resilience by the Plotkin bound. Now, assuming that $\delta(r)<\frac13$, we show $\delta(r+1)<\frac13$. Let the number of bits Alice sends be $N$, and say that Bob sends his first round of feedback after Alice sends $b$ bits.
    
    \begin{caseof}
    \case{$b<\left(\frac13-\delta(r) \right)N$.}{
        Eve corrupts all of the first $b$ bits Alice sends to $0$. Then, Bob's first round of feedback cannot be useful to the parties since Alice already knows that Bob received $0^b$ so far. Thus, by the induction hypothesis, the adversary has an attack where they corrupt the remaining $\delta(r)(N-b)$ bits of the protocol and the first $b$ bits, for a total of
        \begin{align*}
            &~ b+\delta(r)(N-b) \\
            \leq &~\left(\frac13-\delta(r)\right)\cdot N + \delta(r)\left(\frac23+\delta(r) \right)\cdot N \\
            = &~\frac13N-\frac13\delta(r) N + \delta(r)^2N \\
            = &~\left(\frac13-\frac13\delta(r) + \delta(r)^2\right)\cdot N
        \end{align*}
        bits of corruption, and so we can take $\delta(r+1)=\left(\frac13-\frac13\delta(r) + \delta(r)^2\right)<\frac13$.
    }
    \case{$b\geq \left(\frac13-\delta(r) \right)N$.}{
        There exists sufficiently large $k$ such that there are $3$ possible inputs $x_1,x_2,x_3$ with the following property. If Alice's first $b$ bits for each input is denoted $\ECC_b(x_1), \ECC_b(x_2), \ECC_b(x_3)$, then since the error resilience threshold for list-of-$2$ equals $1/4$~\cite{Blinovski86,AlonBP19}, there exists $m\in \{0,1\}^b$ such that
        \[
            \Delta(\ECC_b(x_1),m), \Delta(\ECC_b(x_2),m), \Delta(\ECC_b(x_3),m)
            \leq \left(\frac14+0.05\right)\cdot b = \frac3{10}b.
        \]
        Then, when Alice has any one of these $3$ inputs, Eve corrupts the first $b$ bits to look like $m$, using $\frac3{10}b$ bits of corruption. Then, there exists a strategy for her to cause Alice to output the wrong answer in at least one of the three cases corrupting only $\frac13$ of the remaining bits plus $1$ by Theorem~\ref{thm:1/3-unlimited} in Appendix~\ref{sec:appendix}. In total, this uses
        \begin{align*}
            &~\frac3{10} b + \frac13(N-b) +1 \\
            \leq &~ \frac1{10}N-\frac3{10}\delta(r) N + \frac29N+\frac13\delta(r)N +1\\
            =&~ \frac{29}{90}N+\frac1{30}\delta(r) N +1\\
            =&~ \left(\frac{29}{90}+\frac1{30}\delta(r)\right) N +1
        \end{align*}
        bits of corruption, and so we can take $\delta(r+1)=\left(\frac{29}{90}+\frac1{30}\delta(r)\right) + \frac1N<\frac13$ as long as the total length of the protocol $N$ is large enough, and since $N\geq k$ this holds if we take large enough $k$. \qedhere
    }
    \end{caseof}
\end{proof}

The proof for erasures is very similar, if a bit simpler.

\begin{theorem}
    For any $r\in \bbN$, for any feedback ECC family $\{\code\}_{k\geq 0}$ where Bob speaks in $r=\rho(\code)$ rounds, there exists $\delta(r)<1$ such that the erasure resilience of $\code$ cannot exceed $\delta$.
\end{theorem}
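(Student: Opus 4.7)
The plan is to mirror the induction on $r$ from the error case, with the base case handled by the classical $\tfrac12$ erasure bound: any $0$-round protocol is a standard ECC, so $\delta(0) = \tfrac12 < 1$ suffices. For the inductive step, assume $\delta(r) < 1$. Given a protocol $\code$ with $r+1$ feedback rounds, total forward length $N$, and first feedback occurring after Alice has sent $b$ bits, I would design an erasure attack whose fraction is bounded by some $\delta(r+1) < 1$. The analysis splits on the position $b$ relative to a threshold $\alpha N$ that will be tuned at the end.

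If $b < \alpha N$, Eve erases all of Alice's first $b$ bits. The suffix of the protocol (the last $N - b$ forward bits together with the remaining $r$ feedback rounds) is itself an $r$-round feedback ECC for each possible value of the first feedback string, so the inductive hypothesis furnishes an attack using at most $\delta(r)(N-b)$ additional erasures. The total fraction is thus at most $\tfrac{b + \delta(r)(N-b)}{N} \le \delta(r) + (1 - \delta(r))\alpha$. If instead $b \ge \alpha N$, I would invoke the Plotkin bound: for $k$ sufficiently large (depending on $\epsilon$), among the $2^k$ length-$b$ strings $\ECC_b(x)$ there must exist distinct $x_1, x_2$ with $\Delta(\ECC_b(x_1), \ECC_b(x_2)) \le (\tfrac12 + \epsilon) b$, since otherwise the Plotkin bound would cap their number at $O(1/\epsilon)$. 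Eve erases exactly the positions where these two encodings differ in the first $b$ bits, so Bob's view in the first $b$ bits is identical whether Alice holds $x_1$ or $x_2$, and his first feedback is the same in both worlds. Continuing, Eve erases every subsequent bit where Alice's transmissions for $x_1$ versus $x_2$ (under the shared feedback trace) would diverge, costing at most $N - b$ further erasures. Bob's full view is identical in the two worlds, so he must err on at least one of $x_1, x_2$. The total erasure fraction is at most $\tfrac{(\tfrac12 + \epsilon) b + (N - b)}{N} \le 1 - (\tfrac12 - \epsilon) \alpha$.

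Balancing the two case bounds, e.g.\ by setting $\alpha = (1 - \delta(r))/(\tfrac32 - \delta(r) - \epsilon)$, yields $\delta(r+1) = 1 - (\tfrac12 - \epsilon)\alpha$, which is strictly below $1$ whenever $\delta(r) < 1$; unrolling gives a sequence $\delta(r)$ that approaches $1$ only in the limit (for instance, at rate $(2r+1)/(2r+2)$ when $\epsilon \to 0$). The main subtlety, just as in the error case, is that Case 2 requires $k$ large enough (as a function of $r$ and $\epsilon$) for the Plotkin-based pair $x_1, x_2$ to exist; this is permissible because the theorem concerns the asymptotic behavior of the family $\{\code_k\}_{k \ge 0}$. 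Aside from this, the erasure analysis is actually cleaner than the error case, since in Case 2 one need only make Bob's two views \emph{identical} rather than force a list-decoding ambiguity.
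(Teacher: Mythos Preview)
Your proposal is correct and follows essentially the same approach as the paper: induct on $r$ with base case $\delta(0)=\tfrac12$ via Plotkin, and in the inductive step split on whether the first feedback comes early (erase the prefix entirely and apply the inductive hypothesis to the $r$-round suffix) or late (use Plotkin to find two inputs close in the first $b$ bits, erase their differing positions to equalize Bob's view, then erase the remainder). The only cosmetic differences are that the paper fixes the threshold at $\alpha = 1-\delta(r)$ and the Plotkin slack at $\epsilon = 0.1$ rather than optimizing, and in Case~2 the paper simply erases \emph{all} of the last $N-b$ bits rather than only the diverging ones (both bounds coincide at $N-b$); your balanced choice of $\alpha$ just yields a slightly tighter recursion.
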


\begin{proof}
    We prove the statement by induction on $r$. For $r=0$, we can let $\delta(0)=\frac12$ because the parties cannot exceed $\frac12$ erasure resilience by the Plotkin bound. Now, assuming that $\delta(r)<1$, we show $\delta(r+1)<1$. Let the number of bits Alice sends be $N$, and say that Bob sends his first round of feedback after Alice sends $b$ bits.
    
    \begin{caseof}
    \case{$b<\left(1-\delta(r) \right)N$.}{
        Eve erases all of the first $b$ bits. Then, Bob's first round of feedback cannot be useful to the parties. Thus, by the induction hypothesis, Eve has an attack where she erases the remaining $\delta(r)(N-b)$ bits of the protocol and the first $b$ bits, for a total of
        \begin{align*}
            &~ b+\delta(r)(N-b) \\
            \leq &~\left(1-\delta(r)\right)\cdot N + \delta(r)^2\cdot N \\
            = &~\left(1-\delta(r) + \delta(r)^2\right)\cdot N
        \end{align*}
        bits of corruption, and so we can take $\delta(r+1)=\left(1-\delta(r) + \delta(r)^2\right)<1$.
    }
    \case{$b\geq \left(1-\delta(r) \right)N$.}{
        We define $\ECC_b(x)$ to be the first $b$ bits that Alice sends on input $x$. There exists sufficiently large $k$ such that by the Plotkin bound~\cite{Plotkin60}, there are two possible inputs $x_1,x_2$ such that
        \[
            \Delta(\ECC_b(x_1),\ECC_b(x_2))
            \leq \left(\frac12+0.1\right)\cdot b = \frac3{5}b.
        \]
        Then, when Alice has any one of these two inputs, Eve erases the overlapping bits, using $\frac3{10}b$ bits of corruption. Then, she erases the remainder of the protocol. In total, this uses
        \begin{align*}
            &~\frac35 b + (N-b) \\
            \leq &~ \left(\frac35+\frac25\delta(r)\right) N
        \end{align*}
        bits of corruption, and so we can take $\delta(r+1)=\left(\frac35+\frac25\delta(r)\right)<1$. \qedhere
    }
    \end{caseof}
    
\end{proof}


\bibliographystyle{alpha}
\bibliography{refs}

\appendix
\section{Prior Work on Feedback ECC}\label{sec:appendix}

In this appendix, we describe two aspects of prior work on feedback ECC's that are not directly addressed in our paper, but useful to understand for a complete picture on feedback ECC's with optimal error resilience.

\subsection{Rewind-If-Error} \label{sec:appendix-rewind-if} 

Most existing schemes for achieving $\frac13$ error resilience for (unlimited) noiseless feedback are based on one of two paradigms. The first is the coin game approach \cite{Berlekamp64, SpencerW92} which we've described in detail in our paper. The second is \emph{rewind-if-error}. Though we do not use it in our paper, it is very important to the story of feedback ECC's. For example, many interactive coding protocols \cite{Schulman92, BrakerskiK12, Haeupler14, EfremenkoGH16, GuptaZ22a, GuptaZ22c} rely on the setup of rewind-if-error. Moreover, the limited feedback ECC of \cite{HaeuplerKV15} uses rewind-if-error. To illustrate the idea, we describe a rewind-if-error based feedback ECC for sending a message $\{0,1\}^k$ that achieves $\frac13$ error resilience with full noiseless feedback in $O(k)$ rounds.

\protocol{Rewind-If-Error Feedback ECC}{rewind-if}{
    Fix $\epsilon>0$. Alice has a message $x\in \{0,1\}^k$, which she pads with $1$'s to a message $X\in\{0,1\}^{k/\epsilon}$. The protocol lasts $k/\epsilon$ rounds. Bob tracks a guess $\hat{x}$ with $|\hat{x}|\leq k/\epsilon$ for Alice's input. At the beginning of the protocol, $\hat{x} = \emptyset$ is the empty string. The idea is that Alice will tell him the value of $X$ bit by bit, correcting errors as they appear. In every round, Bob sends a length $\leq k/\epsilon$ message equaling his current guess $\hat{x}$, and Alice sends Bob a length $3$ message. Alice and Bob act as follows:
    
    \begin{itemize}
        \item {\bf Alice's message:} Alice receives Bob's current guess $\hat{x}$. If $\hat{x}$ is a prefix of $X$, then Alice sends Bob the next bit of $X$ (specifically $X[|\hat{x}|+1]$) and otherwise she tells him to rewind. She embeds this instruction into a $3$-bit binary message as follows:
        
        She sends
        \[\begin{cases}
            100 & \text{if next bit is $0$}\\
            010 & \text{if next bit is $1$}\\
            001 & \text{if $\hat{x}$ is not a prefix of $X$}.\\
        \end{cases}\]
        \item {\bf Bob's update:} If Bob receives $100$, he appends $0$ to $\hat{x}$. If he receives $010$, he appends $1$ to $\hat{x}$. If he receives $001$, he removes the last bit from $\hat{x}$. If he receives anything else, he makes no updates to $T$.
    \end{itemize}
    At the end of the protocol, Bob outputs the first $k$ bits of his transcript guess $\hat{x}[1:k]$.
}

\begin{claim}
    Protocol~\ref{prot:rewind-if} is resilient to $\frac13-\epsilon$ corruption of Alice's bits.
\end{claim}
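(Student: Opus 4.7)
The plan is to introduce a potential function that measures Bob's progress and argue it grows sufficiently despite corruption. After any round, let $c$ denote the length of the longest common prefix of $\hat{x}$ and $X$, and let $w := |\hat{x}| - c$ denote the number of ``wrong'' trailing bits past this prefix; set $\Phi := c - w$. Initially $\Phi_0 = 0$, and since $c \ge \Phi$ at every point in time, it suffices to show $\Phi_N \ge k$ at the end of the protocol (where $N = k/\epsilon$ is the number of rounds), as this forces $\hat{x}[1:k] = X[1:k] = x$.

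The core step is a per-round analysis based on the number $j$ of bits of corruption in that round. I would split on whether $\hat{x}$ is a prefix of $X$ (i.e.\ $w = 0$), in which case Alice transmits $100$ or $010$, versus $w > 0$, in which case she transmits $001$. The three codewords $\{100, 010, 001\}$ are pairwise at Hamming distance $2$, and the bitwise complements $\{011, 101, 110\}$ are all non-codewords. Hence if $j = 0$, Bob decodes correctly and $\Phi$ increases by $1$ (either $c$ rises by $1$ when $w = 0$, or $w$ falls by $1$ when $w > 0$); if $j = 1$ or $j = 3$, Bob receives a non-codeword and does nothing, so $\Phi$ is unchanged; and if $j = 2$, a short check of all subcases shows that $\Phi$ decreases by at most $1$. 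The edge case $|\hat{x}| = 0$ only helps, since Bob cannot remove a nonexistent bit.

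Once the per-round bound is established, the theorem follows by simple counting. Writing $n_j$ for the number of rounds with exactly $j$ corruptions, we have $n_0 + n_1 + n_2 + n_3 = N$ and total corruption $E = n_1 + 2n_2 + 3n_3 \le (\tfrac13 - \epsilon) \cdot 3N = (1 - 3\epsilon) N$. The net change of $\Phi$ is at least $n_0 - n_2$, and a one-line manipulation using $n_3 \ge 0$ gives $n_0 - n_2 \ge N - (n_1 + 2n_2 + 3n_3) \ge N - E \ge 3\epsilon N = 3k$. Hence $\Phi_N \ge 3k \ge k$ and Bob recovers $x$ correctly. I anticipate the only delicate moment is the $j = 2$ case in which Alice's codeword $100$ (or $010$) is flipped into $001$, causing Bob to rewind past a genuinely correct bit: defining $\Phi$ as $c - w$ rather than just $c$ is precisely what charges this event symmetrically with the other $j=2$ event (a rewind codeword flipped into an append codeword), yielding the uniform $\Delta \Phi \ge -1$ bound.
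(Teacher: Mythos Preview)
Your proof is correct and follows essentially the same approach as the paper's: both reduce to the per-round trichotomy (correct instruction if $j=0$, no-progress if $j\in\{1,3\}$, incorrect if $j=2$) and then count, obtaining $n_0 - n_2 \ge 3k$. Your explicit potential $\Phi = c - w$ is a cleaner formalization of the paper's informal claim that ``one correct instruction undoes one incorrect instruction,'' and your case analysis of the $j=2$ subcases (especially the rewind-of-a-correct-bit subcase) is more careful than the paper's sketch, but the underlying argument is the same.
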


\begin{proof}
    Note that for any incorrect instruction Bob follows, one correct instruction undoes it. Thus, at the end of the protocol, Bob makes the correct guess if he received a correct instruction at least $k$ times more than he received an incorrect instruction. In order to give Bob an incorrect instruction, Eve needs to corrupt $2/3$ bits in Alice's message, and if she only corrupts $\frac13$ bits (resulting in a \emph{no-progress instruction}), then Bob makes no update to $\hat{x}$. Let the number of incorrect instructions if $I$ and the number of no-progress instructions is $N$. Then the number of correct instructions is $C=k/\epsilon-I-N$. Then,
    \begin{align*}
        &~ \frac23 I + \frac13 N \leq \left(\frac13 - \epsilon\right)k/\epsilon \\
        \implies &~ 2I+N\leq k/\epsilon-3k \\
        \implies &~ I+3k\leq C.
    \end{align*}
    As such, the number of correct instructions is at least $k$ (in fact $3k$) more than the number of incorrect instructions as desired.
\end{proof}

\subsection{Optimality of $\frac13$ Error Resilience}\label{sec:appendix-1/3}

Secondly, we show an upper bound (impossibility result) that no feedback ECC (with any amount of noiseless feedback) can achieve an error resilience greater than $\frac13$. In fact, we show the stronger statement that this holds as long as the there are at least three possible inputs Alice may hold. This was shown as early as \cite{Berlekamp64} when the concept was introduced, but we restate a proof here.

\begin{theorem}\label{thm:1/3-unlimited}
    No binary protocol $\pi$ of any length where Alice is trying to communicate one of three messages $a,b,c$ to Bob is resilient to more than $|\pi|/3+1$ bits of adversarial error. In every round of $\pi$, Alice sends exactly one bit, and is told whether the adversary corrupted her message, and the adversary may flip up to $\frac13|\pi|+1$ bits.
\end{theorem}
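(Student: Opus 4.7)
My plan is to reformulate the protocol as a coin game and design an adaptive adversary strategy that keeps two of three ``coins'' at small positions.

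For each possible input $x \in \{a,b,c\}$ of Alice, I would maintain a counter $n_x$ initialized to $0$. In each round, Alice's bit for input $x$ (a deterministic function of $x$ and the past feedback she has received) is some $\alpha_x \in \{0,1\}$; after observing $(\alpha_a,\alpha_b,\alpha_c)$, the adversary chooses $r \in \{0,1\}$ to be the bit Bob receives, and $n_x$ increases by $1$ whenever $\alpha_x \ne r$. The value of $n_x$ at any point equals the number of adversarial bit flips needed so far had Alice's actual input been $x$. In particular, if at the end of the protocol at least two of $n_a, n_b, n_c$ are at most $E$, then the same transcript is realizable against either of the two corresponding inputs using at most $E$ flips, so Bob's deterministic output must be wrong for at least one of them; hence the protocol is not $E$-resilient.

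The adversary strategy I propose is greedy: in each round, pick $r$ to minimize the median of the updated $(n_a,n_b,n_c)$, breaking ties toward minimizing their sum. If all three bits agree, set $r$ equal to that bit (no coin increments). Otherwise there is a unique ``odd one out'' coin $z$, and the two options are to pick the majority bit (only $z$ increments) or the minority bit (the two majority coins each increment). The main structural claim is that under this strategy, the median $n_{(2)}$ of the sorted positions $(n_{(1)}, n_{(2)}, n_{(3)})$ after $t$ rounds is at most $\lfloor t/3 \rfloor + 1$; applied at $t = |\pi|$, this shows two of $n_a, n_b, n_c$ are at most $|\pi|/3 + 1$, yielding the theorem.

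To establish the median bound, I will perform a case analysis on the sorted positions and Alice's bits, checking that the adversary's greedy choice leaves the median unchanged in every case except when $n_{(1)} = n_{(2)} < n_{(3)}$ and the odd bit falls on one of the two tied coins. Immediately after such a forced increase the state has $n_{(1)} < n_{(2)}$, and Alice needs at least two further rounds to re-tie the two smallest positions at the new median value before she can force another increase; the extremal trajectory cycles through $(k,k,k) \to (k,k,k+1) \to (k,k+1,k+1) \to (k+1,k+1,k+1)$, yielding one median increase per three rounds. The main obstacle will be ruling out long-term suboptimality of the one-step greedy choice, since an adversary who minimizes the median now might face a bad state later. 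A clean way to handle this is via a potential function such as $\Phi = 3 n_{(2)} + n_{(1)}$ (which is $0$ initially) and show that $\Phi$ grows by at most $1$ per round on average under the adversary's strategy; combined with $n_{(1)} \ge 0$ this gives $n_{(2)} \le (t + O(1))/3$ after $t$ rounds, which is the required bound.
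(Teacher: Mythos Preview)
Your coin-game reformulation is correct and matches the paper's setup, but the proposed greedy adversary (minimize the median, tie-break on sum) does \emph{not} achieve $n_{(2)} \le t/3 + O(1)$; in fact Alice can force $n_{(2)} \approx t/2$ against it. The flaw is precisely in your claim that ``Alice needs at least two further rounds to re-tie the two smallest positions.'' She needs only one: from a state $(k,\,k{+}1,\,m)$ with $m \ge k{+}2$, if Alice makes the odd-one-out the median coin (at $k{+}1$), then majority gives median $k{+}2$ while minority gives median $k{+}1$, so your greedy adversary picks minority and lands at $(k{+}1,\,k{+}1,\,m{+}1)$---a forced state. One more round raises the median to $k{+}2$. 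Iterating, the state cycles $(k,k{+}1,m)\to(k{+}1,k{+}1,m{+}1)\to(k{+}1,k{+}2,m{+}1)\to\cdots$, so the median grows by $1$ every two rounds. Concretely, starting from $(0,0,0)$ one reaches median $6$ at $t=13$, exceeding $\lfloor 13/3\rfloor+1=5$. Your potential $\Phi=3n_{(2)}+n_{(1)}$ likewise fails: along this trajectory $\Delta\Phi$ averages about $2$ per round, not $1$.

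The paper's argument avoids this by using a \emph{two-phase} adversary rather than a one-shot greedy one. Phase~1 always plays the majority bit, so $n_a+n_b+n_c$ grows by at most $1$ per round; run until the second-largest counter hits $\lfloor|\pi|/3\rfloor$ at some round $R$, say with $n_a(R)\le n_b(R)=\lfloor|\pi|/3\rfloor<n_c(R)$. Then $n_a(R)\le R-n_b(R)-n_c(R)\le R-2\lfloor|\pi|/3\rfloor-1$. In Phase~2 the adversary abandons $c$ and simply makes the transcript match input $b$; thus $n_b$ is frozen at $\lfloor|\pi|/3\rfloor$ while $n_a$ gains at most $|\pi|-R$, giving a final $n_a\le|\pi|-2\lfloor|\pi|/3\rfloor-1\le|\pi|/3+1$. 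The switch to a two-coin game at the right moment is the missing idea; no single myopic rule on the median suffices.
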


\begin{proof}[Proof (adapted from \cite{Gelles-survey})]
    Alice holds one of three possible inputs $a, b, c$. Consider the following attack, defined on the rounds $t\in [|\pi|]$. Denote the bit Alice sends in round $t$ by $b_a(t), b_b(t), b_c(1)$, respectively, based on Bob's past feedback (so in particular depending on the adversary's previous corruptions).  In every round $t \le R$ (we set $R$ shortly), Eve's attack changes the bit Alice sends to $\maj(b_a(t), b_b(t), b_c(t))$. Then, for at least two inputs, the channel does nothing, and for the third input, the channel may need to flip the bit. 
     
    Denote by $N_a(t)$ the noise the preceding attack causes through round $t$ if Alice holds the input $a$ (similarly define $N_b(t), N_c(t)$). Set $R$ to be the minimal round after which the second-largest value out of $N_a(R), N_b(R), N_c(R))$ is equal to $\lfloor|\pi|/3\rfloor$ and the largest value is strictly greater. Without loss of generality, assume $c$ maximizes this value at round $R$ and that $b$ is the second-largest value, that is, $N_a(R) \le \lfloor|\pi|/3\rfloor = N_b(R) < N_c(R)$. Also note that since the attack makes a single corruption in at most one input every round, we have $R \geq N_a(R) + N_b(R) + N_c(R)$. Finally, note that up to round $R$, Bob sees exactly the same view whether Alice holds $a, b, c$. From this point on, we don’t care about the input $c$, as we will show the attack succeeds on inputs $a$ and $b$.

    From round $R+1$ until the end of the protocol (round $|\pi|$), if Alice holds $a$, Eve changes all Alice’s transmissions to be what Alice would have sent given that she had the input $b$. If Alice holds $b$, the channel does nothing. In Bob’s view, exactly the same bits are received between round $R$ and the end of the protocol (and therefore, throughout the entire protocol) whether Alice holds $a$ or $b$.
    
    We are left to show that the total noise rate is at most $\frac13$. If Alice holds $b$, then the attack stops at round $R$, when $N_b(R) = \lfloor|\pi|/3\rfloor$, as needed. When Alice holds $a$, the channel corrupts $N_a(R)$ bits until round $R$ and at most $(|\pi| - R)$ bits afterwords. Recall that R $\geq N_a(R) + N_b(R) + N_c(R)$ and that $N_b(R)+N_c(R) \geq 2\lfloor|\pi|/3\rfloor+1$; thus the attack makes at most 
    \[
        N_a(R) + (|\pi| - R) \leq N_a(R) + |\pi| - N_a(R) - 2\lfloor|\pi|/3\rfloor-1 \leq |\pi|/3 + 1
    \]
corruptions, as needed.
\end{proof}

\end{document}